\newtheorem{theorem}{Theorem}
\newtheorem{lemma}{Lemma}
\newtheorem{proposition}{Proposition}
\newtheorem{definition}{Definition}
\newtheorem{assumption}{Assumption}
\newtheorem{Rem}{Remark}
\newcounter{problem}
\newcounter{subproblem}[problem]
\newenvironment{problem}{\refstepcounter{problem}{\bfseries Problem~\theproblem}}{}
\newenvironment{subproblem}{%
    \refstepcounter{subproblem}%
    \begin{enumerate}[label=\protect\hyperlink{newpage}{\bfseries \theproblem\alph{subproblem}},ref=\theproblem\alph{subproblem},leftmargin=*]
     \item}{\end{enumerate}}
\definecolor{purple}{RGB}{139, 0, 139}
\newif\iftodo   
\newif\iftodoshort  
\DeclareMathOperator*\argmax{arg \, max}		
\newcommand{\Rmnum}[1]{\uppercase\expandafter{\romannumeral #1}}
\newcommand{\rmnum}[1]{\lowercase\expandafter{\romannumeral #1}}
\newcommand{\diag}{\mathop{\mathrm{diag}}}
\newcommand{\field}[1]{\mathbb{#1}}
\newcommand{\menge}[1]{\mathrm{#1}}
\newcommand{\emenge}[1]{\mathscr{#1}}
\newcommand{\set}[1]{\mathscr{#1}}
\newcommand{\operator}[1]{\mathrm{#1}}
\newcommand{\R}{{\field{R}}}
\newcommand{\NN}{{\field{N}}} 
\newcommand{\RN}{{\field{R}}_{+}}
\newcommand{\RP}{{\field{R}}_{++}}
\newcommand{\Ns}{{\emenge{N}}}
\newcommand{\Ks}{{\emenge{K}}}
\newcommand{\Fs}{{\emenge{F}}}
\newcommand{\Is}{{\emenge{I}}}
\newcommand{\Js}{\overline{\emenge{I}}}
\newcommand{\snr}{\operator{SNR}}
\newcommand{\sinr}{\operator{SINR}}
\newcommand{\rsrp}{{\operator{RSRP}}}
\newcommand{\Rec}{\operator{R}}
\newcommand{\ma}[1]{\boldsymbol{\mathbf{#1}}}
\newcommand{\ve}[1]{\boldsymbol{\mathbf{#1}}}
\newcommand{\vx}{\ve{x}}
\newcommand{\vb}{\ve{b}}
\newcommand{\va}{\ve{a}}
\newcommand{\vc}{\ve{c}}
\newcommand{\vh}{\ve{h}}
\newcommand{\vw}{\ve{w}}
\newcommand{\vv}{\ve{v}}
\newcommand{\vp}{\ve{p}}
\newcommand{\vq}{\ve{q}}
\newcommand{\vqt}{\overline{\ve{p}}}
\newcommand{\pt}{\overline{p}}
\newcommand{\vfo}{\overline{\ve{f}}}
\newcommand{\sfo}{\overline{f}}
\newcommand{\vps}{\ve{\psi}}
\newcommand{\vy}{\ve{y}}
\newcommand{\vf}{\ve{f}}
\newcommand{\vd}{\ve{d}}
\newcommand{\mA}{\ma{A}}
\newcommand{\mB}{\ma{B}}
\newcommand{\mLa}{\ma{\Lambda}}
\newcommand{\mI}{\ma{I}}
\newcommand{\mV}{\ma{V}}
\newcommand{\mVt}{\tilde{\ma{V}}}
\newcommand{\mH}{\ma{H}}
\newcommand{\mD}{\ma{D}}
\newcommand{\ul}{\text{UL}}
\newcommand{\dl}{\text{DL}}
\newcommand{\ext}{\text{ext}}
\newcommand{\fix}{\text{Fix}}
\newcommand{\ulul}{\text{UL}\leftarrow\text{UL}}
\newcommand{\dlul}{\text{UL}\leftarrow\text{DL}}
\newcommand{\uldl}{\text{DL}\leftarrow\text{UL}}
\newcommand{\dldl}{\text{DL}\leftarrow\text{DL}}
\newcommand{\cosl}[1]{}
\newcommand{\resl}[1]{}
\begin{document}
%
\title{Dynamic Joint Uplink and Downlink Optimization for Uplink and Downlink Decoupling-Enabled 5G Heterogeneous Networks}
\author{
\IEEEauthorblockN{Qi Liao\IEEEauthorrefmark{1}, Danish Aziz\IEEEauthorrefmark{1}, and S\l awomir Sta\'{n}czak\IEEEauthorrefmark{2}}\\
\IEEEauthorblockA{\IEEEauthorrefmark{1}Bell Labs, Nokia, Germany,  
\url{{qi.liao, danish.aziz}@nokia.com}}\\
\IEEEauthorblockA{\IEEEauthorrefmark{2}Technische Universit\"{a}t Berlin,  Germany,  
\url{slawomir.stanczak@tu-berlin.de}}
}

\maketitle
\begin{abstract}
The concept of user-centric and personalized service in the fifth generation (5G) mobile networks encourages technical solutions such as dynamic asymmetric uplink/downlink resource allocation and elastic association of cells to users with decoupled uplink and downlink (DeUD) access.
In this paper we develop a joint uplink and downlink optimization algorithm for DeUD-enabled wireless networks for adaptive joint uplink and downlink bandwidth allocation and power control, under different link association policies. 
%
%
Based on a general model of inter-cell interference, we propose a three-step optimization algorithm to jointly optimize the uplink and downlink bandwidth allocation and power control, using the fixed point approach for nonlinear operators with or without monotonicity, to maximize the minimum level of quality of service satisfaction per link, subjected to a general class of resource (power and bandwidth) constraints.
We present numerical results illustrating the theoretical findings for network simulator in a real-world setting, and show the advantage of our solution compared to the conventional proportional fairness resource allocation schemes in both the coupled uplink and downlink (CoUD) access and the novel link association schemes in DeUD. 
\end{abstract}
\begin{IEEEkeywords}
5G,  flexible duplex, decoupled uplink and downlink access, resource allocation 
\end{IEEEkeywords}

\section{Introduction}
The high rate of growth in global mobile data traffic drives the operators to set foot on the path of delivering the fifth generation (5G) of mobile networks, for user-centric and personalized service supporting diverse and often conflicting key performance indicators (KPIs), such as high-speed, low-latency, high reliability, high mobility, and low cost/energy consumption. 

In the 5G era, the evolution of heterogeneous networks (HetNets) results in cell densification with cells of different sizes. Due to the time- and spatial-dependent service requirements and traffic patterns, it is expected to have time-varying asymmetric traffic load in both uplink (UL) and downlink (DL) in different cells (as shown in Fig. \ref{fig:Cell_UL_DL}). 
Many optimization strategies are designed to provide seamless coverage and quality of service (QoS) in DL, while little interest has been shown in UL. However, the importance of UL grows along with the evolution of social networking and information/resource sharing system. 
Therefore, it is of great interest to develop a general framework for joint UL/DL optimization of resource allocation and power control, to adapt to the traffic asymmetry between UL and DL. 

Apart from dynamic UL/DL resource splitting, flexible UL/DL traffic distribution among the cells with different transmission ranges is also crucial for improvement of joint UL/DL performance. As proposed in \cite{andrews2013seven,boccardi2014five}, one way to enable the flexible UL/DL traffic distribution is to allow the user terminal to be associated to two different radio access nodes in UL and DL, respectively. Such a decoupled uplink and downlink (DeUD) access has the potential benefits including improvement of performance in UL (without  degradation of performance in DL), reduction of energy consumption in mobile terminal, and network load balancing.

The joint UL/DL optimization framework can benefit from the user-centric context-aware communication environment in 5G networks. More specifically, this includes dynamic splitting resources and distributing network traffic between UL and DL, based on the awareness of the heterogeneity of UL and DL channel conditions and traffic demands. 

The focus of this paper is to develop a general model of joint UL/DL interference, and to design a joint UL/DL optimization algorithm for
adaptive UL/DL bandwidth allocation and power control under different association policies for DeUD-enabled wireless networks. The objective is to optimize the minimum level of QoS satisfaction across all service links, using the fixed point approach for nonlinear operators with or without monotonicity.

\begin{figure}[t]
\centering
\includegraphics[width=1\columnwidth]{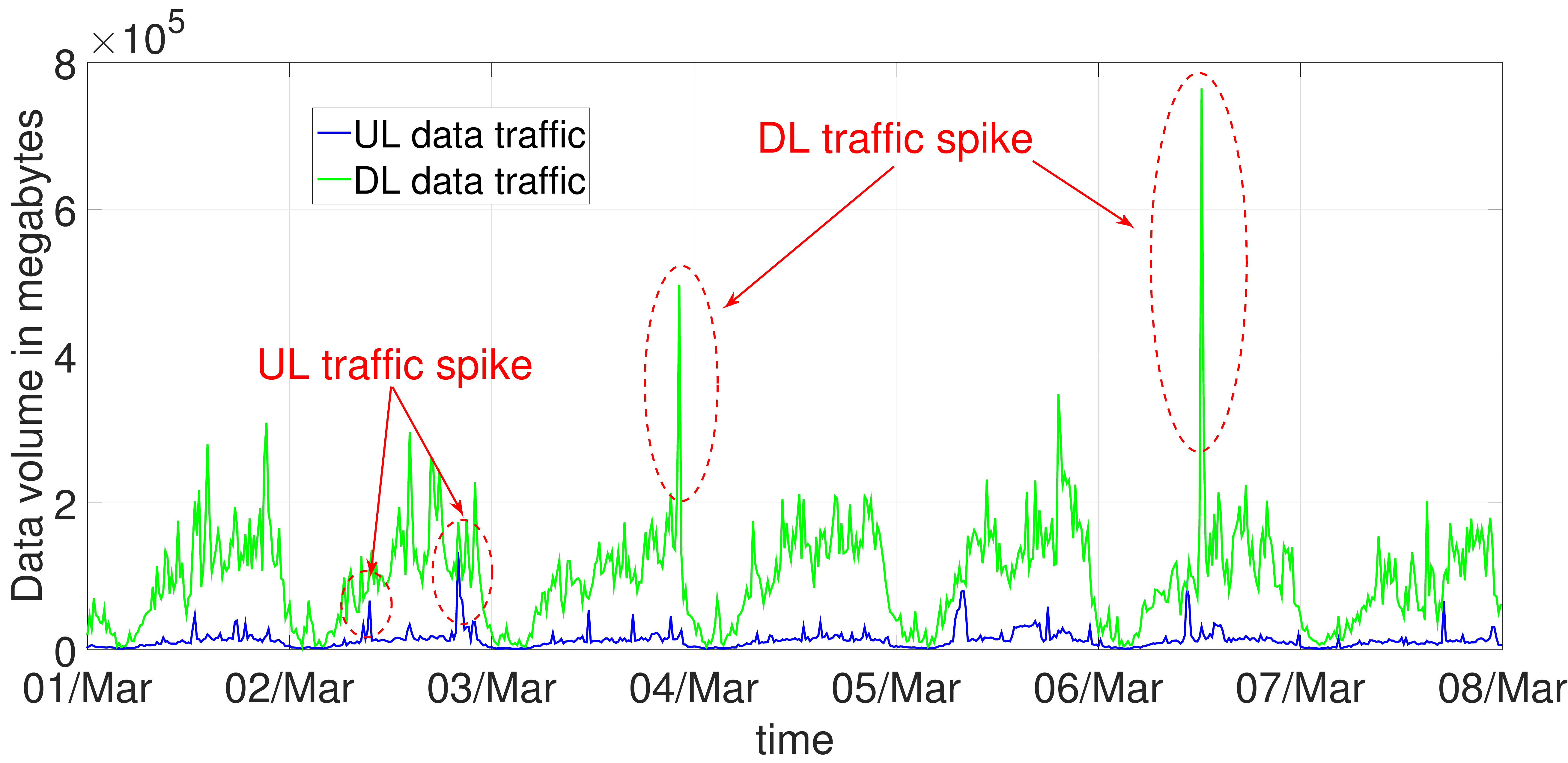}
\caption{Time-varying UL and DL data traffic volume (aggregated every 15 minutes) for a week from Mar. 01 to Mar. 08, 2015 in a spatial grid in Rome, Italy. Data source from Telecom Italia's Big Data Challenge \cite{TelecomItalia}.}
\label{fig:Cell_UL_DL}
\end{figure}
\subsection{Related Work}
%
\subsubsection{Joint Uplink and Downlink Optimization}\label{subsubsec:JointULDL}
Although much work has been done on the joint UL/DL resource allocation in conventional network with coupled uplink and downlink (CoUD) association \cite{su2007joint,schubert2005iterative,el2012stable,abdel2011optimization,chiang2009adaptive,kaur2010method}, to the best of the author's knowledge, none of the authors has worked on the problem for the next-generation networks with disruptive architectural design such as DeUD. For example, both of authors in \cite{chen2012joint} and \cite{liu2015backhaul} propose user association schemes in CoUD. The goal of the former is to jointly maximize the system
capacity in DL and to minimize transmitting power consumption in UL, while the aim of the latter is to minimize the sum of UL and DL average traffic delay and to reduce the overall UL and DL power consumption. 

Another restriction of the existing works is that they concern with the intra-cell communication either in the standard OFDMA-based networks or in the static or dynamic TDD-based networks. For example, the authors in \cite{el2012stable}  proposed a subcarrier allocation algorithm to maximize a utility function that captures the joint
UL/DL QoS requirements, by formulating the problem as a two-sided stable matching game. In \cite{kim2009joint}, a network utility maximization framework is proposed to solve the joint UL/DL resource allocation problem considering systems with FDD or static
TDD through the user-level satisfaction.

\subsubsection{Decoupled Uplink and Downlink Access}\label{subsubsec:DeUD}
The concept of downlink/uplink decoupling (DUDe\footnote{In this paper, we use a different term DeUD for \lq\lq decoupled uplink/downlink\rq\rq,  in consistency with the term CoUD for \lq\lq coupled uplink/downlink\rq\rq.}) is introduced in \cite{andrews2013seven,astely2013lte,boccardi2014five,boccardi2015decouple}. The recent contributions can be classified in three groups. 
  
The first group of articles focuses on the architectural design and realization. The pioneering contributions \cite{boccardi2014five,boccardi2015decouple} identify and explain some key arguments in favor of DUDe based on a blend of theoretical, experimental, and logical arguments.

The second group proposes varies link association policies and show the performance gain with simulations based on LTE field trial network. In \cite{elshaer2014downlink}, the notion of DUDe is studied, where the downlink cell association is based on the downlink received power while the uplink is based on the pathloss. The follow-up work \cite{elshaer2014load} considers the cell-load as well as the available backhaul capacity during the association process. One other idea for range extension of small cells in UL is to add a cell selection offset to the reference signals, to increase the priority of the small cells to be selected \cite{europe2008range}.

Last but not least, the third group of articles studies on the analytical evaluation of a predefined association policy. 
The work in \cite{smiljkovikj2014capacity,smiljkovikj2015analysis} focuses on the analytical characterization of the decoupled access by using the framework of stochastic geometry,  applying the same association criteria as in \cite{elshaer2014downlink}.
In \cite{DBLP:journals/corr/SinghZA14}, the authors propose  a  model to characterize the uplink SINR and rate distribution  as a function of the association rules (assuming weighted pathloss for both uplink and downlink association) and power control parameters (assuming fractional  pathloss-inversion  based  power control).

\subsubsection{Fixed-Point Based Framework for Max-Min Utility Maximization}
Yates \cite{Yates95a,Yates95b} proposed a framework of power control that is based on the notions of positivity, monotonicity, and scalability of standard interference functions (for details see Appendix \ref{subset:SIF}), to solve the SIR balancing problem. Since then, the framework of interference calculus is widely studied for the utility maximization involving only power and rate control. 
In \cite{UlY98,luo2003probability,luo2005standard}, the authors extend Yates' framework to stochastic power control algorithms.

The authors in \cite{chiang2004balancing,BoSchStWi05a,schubert05lST,boche2008structure,stanczak2009fundamentals} studied the max-min utility fairness problem with deterministic interference function involving power or rate control, and characterized the feasibility using the Perron-Frobenius theorem \cite{frobenius1912matrizen}.
Recent work \cite{zheng2014optimal,hong2014unified} leverages the nonlinear Perron-Frobenius theory \cite{lemmens2012nonlinear} and overcome the non-convexity or non-monotonicity in special cases of wireless utility maximization. In \cite{zheng2014optimal}, examples of SINR- or reliability-related non-convex utility optimization were introduced involving power control only. In \cite{hong2014unified}, the author proposes a general framework that enables rigorous treatment of nonlinear monotonic constraints in the utility fairness resource allocation problems.

In \cite{nuzman2007contraction}, the properties of standard interference function are re-examined from a contraction mapping point of view, where the convergence to a unique fixed point follows by a version of the Banach fixed point theorem \cite{smart1980fixed}. The theory provided in \cite{nuzman2007contraction} can be extended to certain non-monotonic functions.

\subsubsection{Interference Model Based on Power and Load Coupling}
The above-mentioned work typically addresses the inter-cell interference model with power coupling. In \cite{siomina2012load,Cavalcante14,ho2014data}, the authors consider the inter-cell interference characterized by the load coupling model, where cell load measures the average level of resource usage in the cell and implies the probability of generating interference from a transmitter to a receiver in OFDM sytsems. 
The interaction between power and load coupling are analyzed in \cite{cavalcante2014power,ho2014power}. The authors in \cite{cavalcante2014power} derive an interference mapping having as its fixed point the power allocation including a given load profile. The authors in \cite{ho2014power} address an energy minimization problem, and prove that operating at fill load is optimal in minimizing the sum energy.

\subsection{Contribution}
The main contributions of this paper are listed as follows.

We consider the next-generation wireless HetNets with disruptive architectural design with respect to dynamic splitting of UL/DL resource and link association. A common set of resource blocks are considered joint resource for both UL and DL services, and adaptive resource partitioning between UL and DL is enabled
to adapt to the link-specific traffic demand. The decoupled UL and DL access is further introduced to adapt to the link-specific channel condition (as shown in Fig. \ref{fig:ULDLSplit}).

We introduce a general model of inter-cell interference for joint UL/DL system. It includes the inter-link interference between UL and DL and is power and load coupling-aware. 
%
%
We then develop a framework involving a fixed-point class with nonlinear contraction operators (mainly motivated by the work in \cite{nuzman2007contraction}), and an optimizer for the utility of QoS satisfaction level, subjected to a general class of resource constraints. A three-step optimization algorithm is proposed, to find the local optimum of the joint variables bandwidth allocation and power spectral density on a per-link basis, corresponding to the different link association policies. 

To adapt the framework to the practical interest, we extend the work to cover the following aspects: 1) per-transmitter power control instead of per-link power control, and 2) energy efficient power control.

%
The rest of the paper is organized as follows. In Section \ref{sec:SysModel} we introduce some basic notations and system model. In Section \ref{ProbFormulation}, we present the utility fairness problem and its decomposition into two subproblems. The solution to the subproblem of adaptive joint UL/DL bandwidth allocation is provided in Section \ref{sec:ResourceAllo}, while of joint UL/DL power control (including the extension to the per-transmitter power control and energy efficient power control) in Section \ref{sec:PowerControl}. The joint algorithm to solve the main optimization problem is summarized in Section \ref{sec:algorithm}. 
The performance of the proposed algorithms are evaluated numerically in Section \ref{sec:simu}. We conclude the paper in Section \ref{sec:concl}.

\section{System Model}\label{sec:SysModel}
In this paper, we use the following standard definitions. The nonnegative and positive orthant in $k$ dimensions are denoted by $\RN^k$ and $\RP^k$, respectively. Let $\vx\leq\vy$ denote the component-wise inequality between two vectors $\vx$ and $\vy$. And let $\diag(\vx)$ denote a diagonal matrix with the elements of $\vx$ on the main diagonal. For a function $\vf:\R^k\to\R^k$, $\vf^n$ denotes the $n$-fold composition so that $\vf^n=\vf\circ\vf^{n-1}$. The $k\times k$ identity matrix is denoted by $\mI_k$ and the $n\times k$ zero matrix is denoted by $\ma{0}_{n\times k}$. The $k$-dimensional all-ones (all-zeros) vector is denoted by $\ve{1}_k$ ($\ve{0}_k$). The horizontal concatenation of two matrices $\mA \in \R^{n\times k}$, $\mB\in\R^{n\times l}$ is written as $[\mA \ | \ \mB]$, while the vertical concatenation of two matrices $\mA \in \R^{n\times k}$, $\mB\in\R^{m\times k}$ is written as $[\mA ; \mB]$. The cardinality of set $\set{A}$ is denoted by $|\set{A}|$.
The notation that will be used in this paper is summarized in Table \ref{tab:notation}.

We consider an OFDM-based wireless system consisting of a set of base stations (BSs) $\Ns$ with $| \set{N}| = N$ and a set of user equipments (UEs) $\Ks$ with $|\set{K}| = K$. We drop the usual assumption in wireless system design that UL and DL transmissions are associated with the same BS, and assume that they can be split. Let the UL(DL) cell-UE association matrix be denoted by $\mA^{\ul}\in\{0,1\}^{N\times K}$($\mA^{\dl}\in\{0,1\}^{N\times K}$).

We assume the reciprocal UL and DL channels. The set of all links (including ULs and DLs) is denoted by $\overline{\Ks} \coloneqq \Ks^{\ul}\cup \Ks^{\dl}$, where $\Ks^{\ul}$ and $\Ks^{\dl}$ are the sets of ULs and DLs, respectively. Because ULs and DLs have different transmitters and receivers, we have that $\Ks^{\ul}\cap\Ks^{\dl} = \emptyset$.  Without loss of generality, we assume that $|\Ks^{\ul}|=|\Ks^{\dl}| = K$ and $|\Ks| = 2K$.
We define the power spectral density (PSD) to be the transmit power assigned per resource block (RB),
and we use $\vp^{\ul}\in\RN^{K}$ and $\vp^{\dl}\in\RN^{K}$ to denote the vectors of uplink and downlink PSDs, respectively. Accordingly, $\vw^{\ul}\in[0,1]^{K}$ is used to denote fraction of the allocated RBs (normalized by dividing the number of allocated RBs by the total number of the available RBs), while $\vw^{\dl}\in[0,1]^{K}$ is the vector for such fractions in the downlink.
We collect $\vp^{\ul}$ and $\vp^{\dl}$ in one power vector  $\vp \coloneqq [\vp^{\ul}; \vp^{\dl}]\in\RN^{2K}$, and collect $\vw^{\ul}$ and $\vw^{\dl}$ in $\vw \coloneqq [\vw^{\ul};\vw^{\dl}]\in[0,1]^{2K}$. Let the total number of the RBs be denoted by $W_0$.

We consider the flexible duplex mode that allows UL and DL transmissions to share a joint set of RBs and to dynamically split between the RBs allocated to UL and DL. The split ratio is time-variant and cell-specific. 
Flexible duplex mode is proposed as the next step of FDD/TDD convergence in 5G networks \cite{alliance20155g,dahlman20145g}. The rapid evolution of subband-based splitting and filtering \cite{Zhang2015FilteredOFDM} and full duplex technology \cite{bharadia2014robust} makes dynamic splitting of spectrum allocated to UL and DL realizable in the near future.
%
%
%
The main drawback results from the need for coping with more intricate inter-cell interference structures: the interference is not only restricted to UL-to-UL and DL-to-DL interference, but also includes the inter-link interference between UL and DL, as shown in Fig. \ref{fig:DynamicFDDTDD_2}.

\begin{Rem}[Adaptation to Dynamic TDD]
Although in this paper the system model and optimization algorithm are developed based on forward-looking assumption of flexible duplex, they can be well adapted to more practical system with dynamic TDD configuration, by interpreting $\vw^{\ul}$ and $\vw^{\dl}$ as fraction of time frames dedicated to UL and DL, respectively. In this incident,  we can see the resource on the horizontal axis in Fig.\ref{fig:DynamicFDDTDD_2} as time frames instead of frequency subbands, and the inter-cell inter-link interference appears in the central frames that are used for UL transmission in BS $j$, while for DL transmission in another BS $i$.
\end{Rem}

%
%
\begin{figure}[t]
\centering
\includegraphics[width=.8\columnwidth]{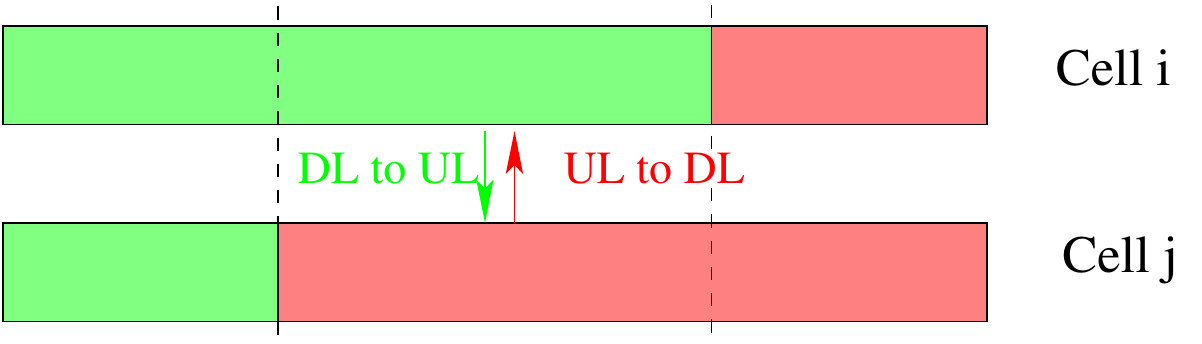}
\caption{Inter-cell inter-link interference between UL (red) and DL (green).  The guard band is not displayed.}
\label{fig:DynamicFDDTDD_2}
\end{figure}

\begin{table}[t]
\centering
\caption{NOTATION SUMMARY}
\begin{tabular}{|c|c|}
\hline
$\Ns$ & set of (macro and pico) BSs  \\
$\Ks$ & set of UEs \\
$\Ks^{\ul}$ ($\Ks^{\dl}$) & set of ULs (DLs)\\
$\overline{\Ks}$ & set of all service links\\
$\ma{A}^{\ul}$ ($\ma{A}^{\dl}$) & BS assignment matrix for ULs (DLs)\\
$\ma{A}$ & BS assignment matrix for all service links\\
$b_k^{\ul}$ ($b_k^{\dl}$) &  BS associated to the $k$th UL (DL)\\
$\vp^{\ul}$ ($\vp^{\dl}$) & PSD for ULs (DLs)\\
$\vp$ & PSD for all service links\\
$\vq^{\dl}$ & cell-specific PSD in DL\\
$\overline{\vp}$ & per-transmitter PSD\\
$\vw^{\ul}$ ($\vw^{\dl}$) &  fraction of allocated RBs for ULs (DLs)\\
$\vw$ & fraction of allocated RBs for all service links\\
$d_l$ & traffic demand (bit rate) of the $l$th link, $l\in\overline{\set{K}}$\\ 
$r_l$ & spectral efficiency of the $l$th link, $l\in\overline{\set{K}}$\\
$W_0$ & total number of RBs\\
$\ma{V}$ & link gain coupling matrix \\
$\tilde{\ma{V}}$ & link gain coupling matrix without intra-cell interference \\
$g_1(\vw)$ & constraint function implying the constraint on load\\
$g_2(\vw,\vp)$ & contraint function implying the contraint on transmit power\\
$\lambda$ & objective utility \\
$\Pi$ & set of link association policies\\
\hline
\end{tabular}
\label{tab:notation}
\end{table}

\subsection{Constrained Per-Cell Load and Per-Transmitter Power}
Since the UL and DL transmissions share a common set of resource blocks, we define the {\it cell load} to be the fraction of the total occupied frequency resource (in UL and DL) per cell. We collect the per-cell loads in a vector $\ve{\nu} \coloneqq \mA\vw\in[0,1]^N$, where $\mA \coloneqq \left[\mA^{\ul} \  | \ \mA^{\dl}\right]\in\{0,1\}^{N\times 2K}$ is the binary association matrix. Since the per-cell load is bounded above by $1$, we have
\begin{equation}
\RN^{2K}\to [0,1]: \ g_1(\vw) \coloneqq \|\mA\vw\|_{\infty}\leq 1.
\label{eqn:loadConstraint}
\end{equation}
This implies that for each cell, the sum of the fractions of allocated RBs for both UL and DL is constrained, i.e., $\forall n\in\Ns$ we have $\sum_ {k\in\Ks} \left(a^{\ul}_{n,k}w^{\ul}_k+a^{\dl}_{n,k}w^{\dl}_k\right)\leq 1$.

Let $\vp^{\ul}_{\max}\in\RP^{K}$ and $\vq_{\max}^{\dl}\in\RP^{N}$ denote the maximum UL transmit power per UE and the maximum DL transmit power per BS for the whole frequency band, respectively. 
Note that the maximum transmit power of a macro BS and a pico BS can vastly differ from each other in HetNets.
We define the extended maximum power vector by $\vp^{\ext}_{\max} \coloneqq [\vp_{\max}^{\ul};\vq_{\max}^{\dl}]\in\RP^{K+N}$ and the extended assignment matrix for transmitter-to-link association by $\mA^{\ext} \coloneqq [\mI_K \  | \ \ma{0}_{K\times K}; \ma{0}_{N\times K} \  | \ \mA^{\dl}]\in \{0,1\}^{(K+N) \times 2K}$. 
The per-transmitter (including both UEs and BSs) power constraints imply that 
\begin{align}
&\RN^{2K} \times \RN^{2K}\to\RN: \nonumber \\
&g_2(\vw, \vp): = W_0\|\diag(\vp^{\ext}_{\max})^{-1} \mA^{\ext}\diag(\vw)\vp\|_{\infty} \leq 1,
\label{eqn:powConstraint}
\end{align}
which is equivalent to $\sum_{k\in\Ks} a^{\dl}_{n,k} (W_0 w_k^{\dl}) p^{\dl}_k\leq q^{\dl}_{\max, n}$, $\forall n\in\Ns$, and $(W_0 w_k^{\ul})p_k^{\ul}\leq p_{\max,k}^{\ul}$, $\forall k\in\Ks$. This means that the total transmit power per transmitter, computed as PSD multiplied by the total number of occupied RBs, is constrained by the predefined maximum power budget. Note that $\diag(\vw)\vp$ and $\diag(\vp)\vw$ are interchangeable. Moreover, for any fixed $\hat{\vp}$ or $\hat{\vw}$, the function $g_2$ over the joint variable $(\vw, \vp)$ can be written as $g_{2,\hat{\vw}}(\vp): \RN^{2K}\to\RN$ or $g_{2,\hat{\vp}}(\vw): \RN^{2K}\to\RN$.

%
%

\subsection{Link Gain Coupling Matrix}\label{subsec:linkgaincoupling}
%
The interference coupling between users (as shown in Fig. \ref{fig:ULDLSplit}) is characterized by a link gain coupling matrix. To define this matrix, we define three channel gain matrices $\mH_0\in\RP^{N\times K}$, $\mH_1\in\RP^{N\times N}$ and $\mH_2\in\RP^{K\times K}$ to indicate BS-to-UE, BS-to-BS, and UE-to-UE channel gain, respectively. 
The link gain coupling matrix between the $2K$ transmission links (UL and DL) is then defined to be
\begin{align}
\mV &  \coloneqq  \left[
\begin{array}{cc}
\mV^{\ulul} & \mV^{\dlul}\\
\mV^{\uldl} & \mV^{\dldl}
\end{array}
\right]
\label{eqn:Vmat_1}
\\
& =
\left[
\begin{array}{cc}
{\mA^{\ul}}^T\mH_0 & {\mA^{\ul}}^T\mH_1{\mA^{\dl}}\\
\mH_2 & \mH_0^T\mA^{\dl}
\end{array}
\right].
\label{eqn:Vmat_2}
\end{align}
The matrices $\mV^{\text{X}\leftarrow\text{Y}}:=\left(v^{\text{X}\leftarrow\text{Y}}_{k,j}\right)\in\RP^{K\times K}$, $\text{X},\text{Y}\in\{\ul, \dl\}$, determine the cross-link couplings. For example, $v^{\dlul}_{k,j}$  denotes the channel gain coupling between the transmitter of the downlink to UE $j$ and the receiver of the uplink from UE $k$ as shown in Fig. \ref{fig:ULDLSplit}. Note that $\mV^{\ulul}, \mV^{\dlul}$ and $\mV^{\dldl}$ are in general not symmetric, while $\mV^{\uldl}$ is symmetric.

We assume that each base station employs an OFDM-based scheme for resource allocation to schedule its users on orthogonal resources. As a result, there is no intra-cell interference and the interference coupling is completely described by the modified link gain matrix $\mVt$, which is defined by \eqref{eqn:Vmat_1} with $\mV^{\text{X}\leftarrow\text{Y}}$ replaced by $\mVt^{\text{X}\leftarrow\text{Y}}:=\left(\tilde{v}^{\text{X}\leftarrow\text{Y}}_{k,j}\right)$ where
\begin{equation}
\tilde{v}^{\text{X}\leftarrow\text{Y}}_{k,l}   \coloneqq 
\begin{cases}
v^{\text{X}\leftarrow\text{Y}}_{k,l} & \mbox{ if } b_l^{\text{Y}} \neq b_k^{\text{X}}\\
0 & \mbox{ o.w.}
\end{cases}.
\label{eqn:modifyV}
\end{equation} 
Here and hereafter, $b_k^{\text{X}}$, $\text{X} \in \{\ul, \dl\}$ denotes the serving BS of UE $k$ in UL or DL.

\subsection{Models of SINR and Rate}

\begin{figure}[t]
\centering
\includegraphics[width=.8\columnwidth]{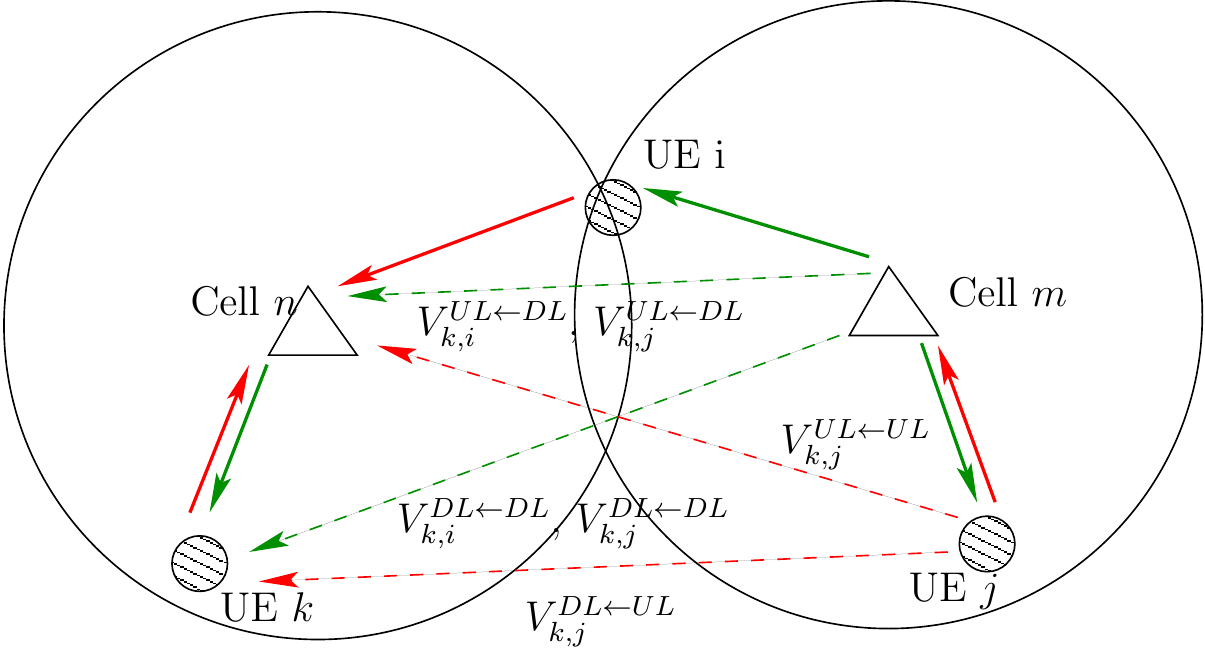}
\caption{Inter-cell interference coupling on the per-user basis. UE $i$ is associated to $n$ in UL and to cell $m$ in DL.}
\label{fig:ULDLSplit}
\end{figure}

To capture the dynamic inter-cell interference  in OFDM systems, it is reasonable to assume that the inter-cell interference increases as the fraction of the allocated RBs at the interfering BSs increases as well.
We interpret $\vw$ as the probability of generating interference from the transmitter of a link to the receiver of the other link (on any RB)\cite{mogensen2007lte}. More precisely, we assume that the DL and UL SINR 
 per RB of UE $k$ is given by (respectively)
\begin{align*}
\sinr_k^{\dl} &  \coloneqq  \frac{p_k^{\dl} v_{k,k}^{\dldl}}{
\sum\limits_{i\in\Ks}\tilde{v}^{\dldl}_{k,i}w_i^{\dl}p_i^{\dl} +
\sum\limits_{j\in\Ks}\tilde{v}_{k,j}^{\uldl}w_j^{\ul}p_j^{\ul}+\sigma^2
}
\\
\sinr_k^{\ul} &  \coloneqq \frac{p_k^{\ul} v_{k,k}^{\ulul}}{
\sum\limits_{i\in\Ks}\tilde{v}_{k,i}^{\dlul}w_i^{\dl}p_i^{\dl} +
\sum\limits_{j\in\Ks}\tilde{v}^{\ulul}_{k,j}w_j^{\ul}p_j^{\ul}+\sigma^2
}
\end{align*}
where $\sigma^2>0$ denotes the background-noise power spectral density, which is assumed to be the same for all receivers. 
Let us define $\ve{\sigma}  \coloneqq  \sigma^2\ve{1}_{2K}$, and collect the uplink and downlink SINR in a vector $\ve{\sinr} \coloneqq  [\sinr_1^{\ul}; \ldots; \sinr_K^{\ul}; \sinr_1^{\dl};\ldots; \sinr_K^{\dl}]\in\RP^{2K}$. Using \eqref{eqn:Vmat_1}, \eqref{eqn:Vmat_2}, and \eqref{eqn:modifyV}, the above expressions of SINR can be written in a general form
\begin{equation}
\sinr_l(\vp, \vw)  \coloneqq \frac{p_l}{\left[\mD^{-1}\left(\mVt\diag\{\vp\}\vw +\ve{\sigma}\right)\right]_l}, \ l\in\overline{\Ks},
\label{eqn:virtualSINR}
\end{equation}
where $\mD  \coloneqq \diag\{v_{1,1}^{\ulul}, \ldots, v_{K,K}^{\ulul}, v_{1,1}^{\dldl}, v_{K,K}^{\dldl}\}\in\RN^{2K}$ is a diagonal matrix. For $l=1, \ldots, K$, \eqref{eqn:virtualSINR} is equal to the UL SINR, while the DL SINR is given by \eqref{eqn:virtualSINR} for $l = K+1, \ldots, 2K$.

We further assume that the spectral efficiency (bit rate per RB) of the virtual UEs (includes both UL and DL transmission)  is a strictly increasing function of the SINR given by 
\begin{equation}
r_l (\vp, \vw)  \coloneqq  B\log_2(1 + \sinr_l(\vp, \vw)), \ l\in\overline{\Ks},
\label{eqn:rate}
\end{equation}
where $B$ denotes the effective bandwidth per RB.

Given the per-UE uplink and downlink traffic demands (bit rate) $\vd \coloneqq [d_1^{\ul},\ldots, d_K^{\ul}, d_1^{\dl}, \ldots, d_K^{\dl}]^T\in\RP^{2K}$, it follows from \eqref{eqn:rate} that the traffic demands are satisfied if and only if (note that $w_l\cdot W_0$ is equal to the number of RBs used by link $l$)
\begin{equation}
w_l\geq \frac{d_l}{W_0 r_l(\vp, \vw)}, \ l\in\overline{\Ks}.
\label{eqn:rateConstraint}
\end{equation}

\begin{Rem}[Full Overlap or Partial Overlap]\label{Rem:pverlapfactor}
The SINR modeled in \eqref{eqn:virtualSINR} is based on the strategy that each UL or DL transmission is allocated a number of RBs in a joint frequency band for both UL and DL, regardless of the location of the band. However, this may result in a full overlap of frequency bands used by UL and DL transmissions leading to high probability of inter-link interference. A more reasonable strategy is to allow only partial overlap, as shown in Fig. \ref{fig:DynamicFDDTDD_2}, where the DL is preferred to allocated at the head of the band while the UL at the tail of the band, or vice versa. 
In this case, the inter-link interference only exists on the overlapping band, and the above-presented model overestimates the probability of receiving inter-link interference. A more accurate readjustment is to multiply the term of inter-link interference with an additional overlap factor. Some possible methods to define the overlap factor are given in Appendix \ref{subsec:Overlap}. In the remainder of this paper, the analysis and algorithms are still presented with the interference model in \eqref{eqn:virtualSINR} for the simplicity of the form. However, without loss of generality, we can easily adjust the model by introducing the overlap factor into the coupling matrix $\mVt$. 
\end{Rem}

\subsection{Link Association Policies}\label{subsec:LinkAssociation}
Assume that there are a finite set of link association policies $\menge{\Pi} \coloneqq \{\pi_m: m=1,\ldots, M\}$ implemented in the network, which can be dynamically selected by an operator. Each policy defines the BS-UE assignment matrices $\mA^{\ul}(\pi_m)$ and $\mA^{\dl}(\pi_m)$, and further defines the link gain coupling matrix $\mVt(\pi_m)$ and link gain matrix $\mD(\pi_m)$ in \eqref{eqn:virtualSINR}.

As examples, in the following we list one conventional UL/DL coupled user association policy and two types of decoupled UL/DL link association policies, respectively.
\\
(1) {\bf CoUD}:  Conventional coupled UL/DL user association based on reference signal received power (RSRP) in DL is given by 
\begin{equation}
b_k^{\ul}=b_k^{\dl} = \argmax_{n\in\Ns} \rsrp_{n,k}, \ \forall k\in\Ks.
\label{eqn:crit_CoUD}
\end{equation}
(2) {\bf DeUD\_O}: Decoupled UL/DL link association assisted with cell selection offset \cite{europe2008range}. A cell selection offset is added to the reference signals of the small cells to increase their coverage in UL in order to offload some traffic from the macro cell. This can be formalized as follows
\begin{equation}
b_k^{\text{X}} =
\argmax_{n\in\Ns} \rsrp_{n,k} + \text{offset}^{\text{X}}_n, \ \forall k\in\Ks, \text{X} \in \{\ul, \dl\}
\label{eqn:crit_DeUD_O}
\end{equation}   
where $\text{offset}^{\text{X}}_n>0$ (in dB) if $\text{X}=\ul$ and $n$ is a small cell BS with low transmit power; otherwise the offset is set to zero if $\text{X}=\dl$ or $n$ is a macro cell BS.
\\
(3) {\bf DeUD\_P}:  Decoupled UL/DL link association based on DL received power and UL pathloss respectively \cite{elshaer2014downlink}, where the association criteria in DL and UL are given by (respectively) 
\begin{align}
b_k^{\dl} & = \argmax_{n\in\Ns} \rsrp_{n,k}, \\
b_k^{\ul} & = \argmax_{n\in\Ns} \text{PL}_{n,k}, \ \forall k\in\Ks,
\label{eqn:crit_DeUD_P}
\end{align} 
where $\text{PL}_{n,k}$ denotes the pathloss estimate between BS $n$ and UE $k$.

Note that in \eqref{eqn:crit_DeUD_O}, by setting $\text{offset}^{\text{X}}_n = 0$ for all $n\in\Ns$ and $\text{X}=\ul$, the association policy is equivalent to CoUD. And, by setting the offset (in dB) of the small cell BS in UL as the difference between the transmit power (in dBm) of the macro cell BS and the small cell BS,  DeUD\_O is equivalent to DeUD\_P.

  

\section{Problem Formulation}\label{ProbFormulation}
To achieve the service-centric network fairness,  we define the objective utility $\lambda$ to be the {\it minimum level of QoS satisfaction} among all links, where the level of QoS satisfaction is equal to the ratio of the per-link feasible transmission rate to the required traffic demand. So we have
\begin{equation}
\lambda := \min_{l\in\overline{\Ks}}  \frac{W_0 w_l r_l(\vp, \vw)}{d_l},
\label{eqn:utility_lambda}
\end{equation}
where $r_l(\vp, \vw)$ is given by \eqref{eqn:rate}.

Given a certain link association policy $\pi'$ and its corresponding UL(DL) assignment matrice $\mA^{\ul}(\pi')$( $\mA^{\dl}(\pi')$), coupling matrix $\tilde{\mV}(\pi')$, and link gain matrix $\mD(\pi')$, the objective is to maximize the utility $\lambda$ over the joint space of loads and powers subject to the constraints on the maximum per-cell load \eqref{eqn:loadConstraint} and the maximum per-transmitter power \eqref{eqn:powConstraint}. 
Moreover, if the optimized utility satisfies $\lambda\geq 1$, then the vector of link-specific traffic demands $\vd$ is feasible; otherwise, the traffic demand cannot be satisfied for every service link.
Formally, the problem of interest in this paper can be stated as follows.

\begin{problem}
\begin{subequations}
\label{eqn:prob_1}
\begin{align}
\max_{\vw\in\RN^{2K}, \vp\in\RN^{2K}} \ & \lambda  \label{eqn:prob_1_objective}\\
\mbox{subject to } &  \vw \geq \lambda \vf(\vp, \vw) \label{eqn:prob_1_utility}\\
 &  f_l(\vp, \vw)  \coloneqq \frac{d_l}{W_0 r_l(\vp, \vw)}, \forall l\in\overline{\Ks} 
\label{eqn:prob_1_def_load}\\
& \eqref{eqn:loadConstraint}, \eqref{eqn:powConstraint}, \label{eqn:prob_1_constraint}
\end{align}
\end{subequations}
where the vector function $\vf:\RN^{2K}\to\RP^{2K}$ in \eqref{eqn:prob_1_utility} is a collection of $f_l$ defined in \eqref{eqn:prob_1_def_load}, i.e., $\vf \coloneqq [f_1,\ldots, f_{2K}]^T$. 
\label{prob:jointOpt}
\end{problem}
The utility $\lambda$ depends on the joint variable $(\vw, \vp)\in\RN^{2K}\times\RN^{2K}$ in an inextricably intertwined manner, which is due to the nonlinear power and resource coupling relationship between links. 
%
We decompose Problem \ref{prob:jointOpt} into two subproblems in Problem \ref{prob:prob_2} by alternately optimizing over $\vw$ or $\vp$, and provide computationally efficient locally optimal solution to Problem \ref{prob:jointOpt}, based on the optimal solution to each of the subproblems. 

\begin{problem}\mbox{}
\begin{subproblem}\label{Prob:2a}
 Given fixed $\vp'\in\RN^{2K}$, find $\vw' := \vw'(\vp')$ such that
\begin{subequations}
\label{eqn:prob_2a}
\begin{align}
\vw' = \argmax_{\substack{\vw\in\RN^{2K}}} \ & \lambda \label{eqn:prob_2a_a}\\
\mbox{subject to } & \vw \geq \lambda \vf_{\vp'}(\vw)  \label{eqn:prob_2a_b}\\
& g_1(\vw)\leq 1, g_{2,  \vp'}(\vw)\leq 1, \label{eqn:prob_2a_c}
\end{align}
\end{subequations}
where $\vf_{\vp'}$, $g_1 $, and $g_{2, \vp'}$ are obtained by replacing $\vp$ with $\vp'$ in \eqref{eqn:prob_1_def_load}, \eqref{eqn:loadConstraint} and \eqref{eqn:powConstraint}, respectively. 
\end{subproblem}
\begin{subproblem}\label{Prob:2b}
 Given fixed $\vw'\in\RN^{2K}$ satisfying $g_1(\vw')\leq 1$, find $\vp' := \vp'(\vw')$ such that
\begin{subequations}
\label{eqn:prob_2b}
\begin{align}
\vp' = \argmax_{\substack{\vp\in\RN^{2K}}}  \ & \lambda \label{eqn:prob_2b_a}\\
 \mbox{subject to } & \vw'  \geq \lambda \vf_{\vw'}(\vp)\label{eqn:prob_2b_b}\\
 &  g_{2,  \vw'}(\vp)\leq 1, \label{eqn:prob_2b_c}
\end{align}
\end{subequations}
where $\vf_{\vw'}$ and $g_{2, \vw'}$ are obtained by replacing $ \vw$ with $\vw'$ in \eqref{eqn:prob_1_def_load} and \eqref{eqn:powConstraint}, respectively. 
\label{prob:prob_2}
\end{subproblem}
%
%
\label{prob:subporblems} 
\end{problem}

Prob.\ref{Prob:2a} and Prob.\ref{Prob:2b} are formulated in such a way that a common desired utility $\lambda$ is maximized subject to the common load and power constraints. Thus, for a given link association policy $\pi'$, by sequentially solving Prob.\ref{Prob:2a} and Prob.\ref{Prob:2b}, we improve $\lambda$ in each step and achieve a local optimum of $\lambda$ with respect to $\pi'$. 


In Section \ref{sec:ResourceAllo} and \ref{sec:PowerControl} we provide the optimal solution to Prob.\ref{Prob:2a} and Prob.\ref{Prob:2b}, respectively. The joint algorithm is summarized in Section \ref{sec:algorithm}.

\section{Joint Uplink and Downlink Resource Allocation}\label{sec:ResourceAllo}
In this section we develop the algorithms for joint UL/DL bandwidth allocation. In Section \ref{subsec:BW_Allocation} we develop an algorithm for Prob.\ref{Prob:2a} in Prop. \ref{prop:solu_FP_load}. Since a solution $\vw$ to Prob.\ref{Prob:2a} must fulfill $\max\{g_1(\vw), g_{2, \vp'}(\vw)\}\leq 1$, 
some free resources may still be available, i.e., $g_1(\vw)<1$ and $g_{2, \vp'}(\vw) = 1$, even under optimal power allocation (in the sense of Prob. \ref{Prob:2a}). Therefore, an additional step involving power scaling and bandwidth updating is introduced in Prop. \ref{prop:fullload} in Section \ref{subsec:BW_allo_fullload}, to further improve the desired utility $\lambda$. Another case of $g_1(\vw)=1$ and $g_{2, \vp'}(\vw)\leq 1$ is discussed in Prop. \ref{eqn:prop_power_allo} in Section \ref{sec:PowerControl}.

\subsection{Algorithm for Bandwidth Allocation}\label{subsec:BW_Allocation}
The following lemma proves a key property of the vector function $\vf_{\vp'}$, which is necessary to solve  Prob. \ref{Prob:2a}.

%
\begin{lemma}
Given a fixed power vector $\vp'$, the function $\vf_{\vp'}:\RN^{2K}\to\RP^{2K}$ defined in Prob. \ref{prob:jointOpt} is a standard interference function. 
\label{lem:StandardInterference}
\end{lemma}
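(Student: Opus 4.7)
The plan is to verify the three Yates properties (positivity, monotonicity, scalability) in turn for $\vf_{\vp'}$, component-wise on each coordinate $l\in\overline{\Ks}$. Writing
\[
f_l(\vw) \;=\; \frac{d_l}{W_0\,B\,\log_2\!\bigl(1+\sinr_l(\vp',\vw)\bigr)},
\qquad
\sinr_l(\vp',\vw) \;=\; \frac{p'_l\,v_{l,l}}{\bigl(\mVt\diag(\vp')\vw\bigr)_l+\sigma^2},
\]
all three properties will come from elementary manipulations of this scalar expression, using only that $\mVt$ has nonnegative entries, $\vp'>\vzero$, $d_l>0$, and $\sigma^2>0$.

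\textbf{Positivity.} For any $\vw\geq\vzero$, the noise term $\sigma^2>0$ keeps the SINR denominator strictly positive, so $\sinr_l(\vp',\vw)\in(0,\infty)$. Consequently $\log_2(1+\sinr_l)\in(0,\infty)$ and $f_l(\vw)\in(0,\infty)$, i.e.\ $\vf_{\vp'}(\vw)>\vzero$.

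\textbf{Monotonicity.} For $\vw_1\geq\vw_2\geq\vzero$, the fact that $\mVt$ and $\diag(\vp')$ are entrywise nonnegative gives $\mVt\diag(\vp')\vw_1\geq\mVt\diag(\vp')\vw_2$ componentwise, so the SINR denominator at $\vw_1$ is at least that at $\vw_2$. Hence $\sinr_l(\vp',\vw_1)\leq\sinr_l(\vp',\vw_2)$, the logarithm is smaller, and therefore $f_l(\vw_1)\geq f_l(\vw_2)$.

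\textbf{Scalability.} Fix $\alpha>1$ and abbreviate $x:=\sinr_l(\vp',\vw)$. Because only the interference term is linear in $\vw$ while noise is unchanged,
\[
\sinr_l(\vp',\alpha\vw)
= \frac{p'_l v_{l,l}}{\alpha\bigl(\mVt\diag(\vp')\vw\bigr)_l+\sigma^2}
\;\geq\;\frac{x}{\alpha},
\]
with strict inequality whenever $\sigma^2>0$. Scalability then reduces to showing that $\alpha\log_2\!\bigl(1+x/\alpha\bigr)>\log_2(1+x)$ for $\alpha>1$, $x>0$. I would prove this by setting $g(\alpha):=\alpha\ln(1+x/\alpha)$ and differentiating: with $t:=x/\alpha$,
\[
g'(\alpha)=\ln(1+t)-\frac{t}{1+t},
\]
which is positive for $t>0$ because $\ln(1+t)-t/(1+t)$ vanishes at $t=0$ and has derivative $t/(1+t)^2>0$. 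Hence $g(\alpha)>g(1)=\ln(1+x)$, giving $\alpha\,f_l(\vw)>f_l(\alpha\vw)$ for every $l$.

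The only subtle step is scalability; monotonicity and positivity are immediate from the structural signs of $\mVt$ and the presence of noise. I expect scalability to be the main obstacle only in the sense of requiring the two-line calculus argument above, and it is crucial that $\sigma^2>0$, since otherwise the SINR would be homogeneous of degree $-1$ in $\vw$ and scalability would fail in the strict sense Yates requires.
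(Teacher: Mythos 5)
Your proof is correct, but it takes a genuinely different route from the paper's. The paper never verifies the Yates axioms directly: it shows that each component $f_{\vp',l}(\vw)=d_l/\bigl(W_0B\log_2(1+\sinr_l(\vp',\vw))\bigr)$ is positive and \emph{concave} in $\vw$ (concavity of $x\mapsto 1/\log_2(1+1/x)$ on $\RP$, composition with the affine interference term, closure under the relevant operations) and then invokes the result of Cavalcante et al.\ that every positive concave function is a standard interference function. You instead check positivity, monotonicity and scalability componentwise, reducing scalability to the elementary inequality $\alpha\ln(1+x/\alpha)>\ln(1+x)$ for $\alpha>1$, $x>0$, proved by a one-line derivative computation. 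Your argument is more self-contained and elementary (no appeal to the concavity-implies-SIF proposition), while the paper's route is shorter given the cited machinery and additionally records concavity of $f_{\vp',l}$, a strictly stronger structural fact. Both arguments tacitly need $\vp'>\ve{0}$ so that the SINR is positive and $f_{\vp',l}$ finite (the lemma's statement with $\vp'\in\RN^{2K}$ leaves this implicit); you state this assumption explicitly, which is good.

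One inaccuracy in your closing remark: $\sigma^2>0$ is not what makes scalability strict. If $\sigma^2=0$ the SINR is exactly homogeneous of degree $-1$ in $\vw$, but your own inequality $\alpha\ln(1+x/\alpha)>\ln(1+x)$ is already strict, so $\alpha f_l(\vw)>f_l(\alpha\vw)$ would still hold; the strictness comes from the strict concavity of the logarithm, not from the noise. What $\sigma^2>0$ actually buys is well-definedness and positivity on all of $\RN^{2K}$, in particular at $\vw=\ve{0}$, where without noise the SINR would diverge and $f_l$ would vanish, violating the codomain $\RP^{2K}$. This does not affect the validity of your proof, since $\sigma^2>0$ holds throughout the paper.
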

The definition and some selected properties of standard interference function (SIF) are provided in Appendix \ref{subset:SIF}. The proof of Lemma \ref{lem:StandardInterference} following the proof of \cite[Ex. 2]{Cavalcante14} is provided in Appendix \ref{subsec:proofSIF}. 

We further prove the following theorem.
\begin{theorem}
Suppose
\begin{itemize}
\item $g(\vx):\RP^k\to\RP$ is monotonic, and homogeneous of degree $1$ (i.e., $g(\alpha\vx) = \alpha g(\vx)$ for all $\alpha>0$),  
\item $\vf(\vx):\RN^k\to\RP^k$ is a SIF.
\end{itemize}
Then, for each $\theta>0$ there is exactly one eigenvector $\vx'\in\RP^k$ and associate eigenvalue $\rho'$ of $\vf$ such that $\rho'\vx' = \vf(\vx')$ and $g(\vx')=\theta$. The repeated iteration
\begin{equation}
\vx^{(t+1)} =\frac{\theta \vf(\vx^{(t)})}{g\circ\vf(\vx^{(t)})}, \ t\in\NN,
\label{eqn:FP_scaling_mapping}
\end{equation}
converges to the unique vector $\vx'$, which is called the fixed point of $\vf$. The associate eigenvalue is $\rho'=g\circ\vf(\vx')/\theta$.
\label{the:MSS_FP}
\end{theorem}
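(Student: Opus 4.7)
My plan is to recast the conditional eigenvalue problem as a fixed-point problem for the normalized operator $T(\vx) := \theta \vf(\vx)/g(\vf(\vx))$ and then establish, in turn, self-invariance on the level set $S_\theta := \{\vx\in\RP^k : g(\vx) = \theta\}$, uniqueness of the fixed point, and convergence of the iteration.

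First I would check that $T$ is well-defined on $\RP^k$ and maps it into $S_\theta$: positivity of $g(\vf(\vx))$ follows from $\vf(\vx)\in\RP^k$ (by SIF positivity) combined with monotonicity and homogeneity of $g$, and homogeneity then gives $g(T(\vx)) = \theta$. A direct rearrangement of $\vx' = T(\vx')$ shows that this equation is equivalent to $\vf(\vx') = \rho'\vx'$ with $g(\vx') = \theta$ and $\rho' = g(\vf(\vx'))/\theta$, so fixed points of $T$ in $\RP^k$ are exactly the eigenvector--eigenvalue pairs sought in the theorem.

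For uniqueness, suppose $\vx_1, \vx_2\in\RP^k$ are two such eigenvectors with eigenvalues $\rho_1,\rho_2$, and set $\alpha := \min_i x_{2,i}/x_{1,i}$ and $\beta := \min_i x_{1,i}/x_{2,i}$. Monotonicity and homogeneity of $g$ applied to $\alpha\vx_1\le\vx_2$ and $\beta\vx_2\le\vx_1$ give $\alpha,\beta\le 1$. If $\alpha<1$, the strict scalability inequality $\vf(\alpha\vx_1) > \alpha\vf(\vx_1)$ evaluated at the argmin index $i^*$ (where $\alpha x_{1,i^*} = x_{2,i^*}$), together with componentwise monotonicity of $\vf$, yields $\alpha\rho_1 x_{1,i^*} < \rho_2\alpha x_{1,i^*}$, i.e.\ $\rho_1<\rho_2$; symmetrically $\beta<1$ gives $\rho_2<\rho_1$. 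The remaining case $\alpha=1$ forces $\vx_1\le\vx_2$; if $\vx_1\neq\vx_2$ then $\beta<1$ and hence $\rho_2<\rho_1$, while $\vf(\vx_1)\le\vf(\vx_2)$ combined with monotonicity of $g$ gives $\rho_1\le\rho_2$, a contradiction. Hence $\alpha=\beta=1$ and $\vx_1=\vx_2$.

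For existence and convergence of the iteration I would adapt the contraction viewpoint of \cite{nuzman2007contraction}: work in the Thompson metric $d_T(\vx,\vy) := \max_i |\log(x_i/y_i)|$ on $\RP^k$, in which monotonicity of the SIF makes $\vf$ non-expansive, and strict scalability makes $T$ (after the homogeneous normalization by $g$) strictly distance-reducing whenever $\vx,\vy$ are not proportional. The $g$-normalization keeps iterates in the bounded slice $S_\theta$ while positivity and continuity of $\vf$ bound them away from $\partial\RP^k$, so iterates remain in a compact subset of $\RP^k$; a Banach-type argument then yields both the unique fixed point $\vx'$ and convergence $\vx^{(t)}\to\vx'$ from any starting point. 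The main obstacle I foresee is quantifying the contraction: strict scalability is only a pointwise inequality and does not immediately produce a uniform contraction constant, so one must either leverage compactness of the invariant subset (obtaining a uniform factor by continuity) or replace the Banach argument with Brouwer existence plus a monotone-limit convergence argument in the spirit of Yates' original SIF fixed-point theorem.
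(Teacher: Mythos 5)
Your overall route coincides with the paper's: recast the conditional eigenproblem as a fixed point of the normalized map $T=\theta\vf/(g\circ\vf)$, show via scalability that $T$ is strictly distance-decreasing in a log-ratio (Thompson/part-type) metric on $\RP^k$ (the paper uses the metric $\mu_s$ from Nuzman's work rather than $\max_i|\log(x_i/y_i)|$, but this difference is immaterial), and confine the iterates, after finitely many steps, to a compact rectangle bounded away from the boundary of $\RP^k$. Your uniqueness argument via $\alpha=\min_i x_{2,i}/x_{1,i}$, $\beta=\min_i x_{1,i}/x_{2,i}$ is correct and is a nice, more elementary ingredient than the paper's, which obtains uniqueness as a byproduct of the contraction framework.

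The genuine gap is in how you close the convergence of the iteration \eqref{eqn:FP_scaling_mapping}, which is the substance of the theorem. A strictly distance-decreasing map on a compact metric space does not, in general, admit a uniform contraction constant, and continuity does not repair this: the supremum of the distance ratio over pairs can equal one (consider $x\mapsto x/(1+x)$ on $[0,1]$), so "obtaining a uniform factor by continuity" followed by Banach is not a valid step. Your fallback does not close it either: Brouwer on the invariant rectangle does yield existence of a fixed point of $T$ (note the level slice $g(\vx)=\theta$ need not be convex, so one must use the rectangle), but a Yates-style monotone-limit argument cannot be transplanted to the iteration \eqref{eqn:FP_scaling_mapping}, because the normalized map $\theta\vf/(g\circ\vf)$ is \emph{not} monotone (the paper points this out explicitly); Yates' convergence theory governs iterates of the SIF itself, not of its normalization. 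The missing tool, and what the paper actually invokes from Smart's book, is Edelstein's fixed-point theorem for shrinking maps: a strictly distance-decreasing self-map of a compact metric space has a unique fixed point and its Picard iterates converge to it from any starting point, with no uniform contraction factor needed. Substituting that theorem for your Banach-type step turns your sketch into essentially the paper's proof; without it, global convergence of the iteration remains unproven.
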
  
The proof of Theorem \ref{the:MSS_FP} is referred to Appendix \ref{subsec:proofThe_1}. It is a direct extension of the proof of \cite[Th. 3.2]{nuzman2007contraction}, where $g$ was defined as any monotonic norm $\|\cdot\|$, while we define three properties monotonicity, homogeneity and positivity on $\RP^k$. Note that the function in \eqref{eqn:FP_scaling_mapping} $\ve{\psi} \coloneqq \theta \vf/g\circ\vf: \RN^k\to\RP^{k}$  is non-monotonic, while it preserves the property of scalability of the mapping $\vf$.

Using Lemma \ref{lem:StandardInterference} and Theorem \ref{the:MSS_FP}, we prove the following proposition, which gives rise to an algorithmic solution to Prob.\ref{Prob:2a}.
\begin{proposition}
Given a fixed $\vp'\in\RN^{2K}$, let the set of solutions to Prob.\ref{Prob:2a} be denoted by $\Fs_{\vw}(\vp')$. 
There exists one $\vw'\in\Fs_{\vw}(\vp')$ such that $\vw'\leq \vw$ for all $\vw\in\Fs_{\vw}(\vp')$. Moreover, $\vw'$ is an eigenvector of $\vf_{\vp'}$ satisfying $\max\{g_1(\vw'), g_{2, \vp'}(\vw')\} = 1$ and
can be obtained by performing the following fixed point iteration:
\begin{subequations}
\label{eqn:FP_converge}
\begin{align}
\vw^{(t+1)} &= \frac{\vf_{\vp'}(\vw^{(t)})}{g_{\vp'}\circ \vf_{\vp'}(\vw^{(t)})}, \ t\in \NN, \label{eqn:FP_converge_1}\\
\mbox{where } g_{\vp'}(\vw)& \coloneqq  \max\{g_1(\vw), g_{2, \vp'}(\vw)\}.
\label{eqn:FP_converge_2}
\end{align}
\end{subequations}
The iteration in \eqref{eqn:FP_converge} converges to $\vw'$, and $\lambda_{\vp'} = 1/g_{\vp'}\circ \vf_{\vp'}(\vw')$. 
\label{prop:solu_FP_load} 
\end{proposition}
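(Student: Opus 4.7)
The plan is to recognize Problem \ref{Prob:2a} as a normalized nonlinear eigenvalue problem for the SIF $\vf_{\vp'}$ (provided by Lemma \ref{lem:StandardInterference}) and the aggregate constraint $g_{\vp'} := \max\{g_1, g_{2,\vp'}\}$, so that Theorem \ref{the:MSS_FP} applies directly. The first substep is to verify that $g_{\vp'}$ fulfills the hypotheses on $g$ in Theorem \ref{the:MSS_FP}. Using the identity $\diag(\vw)\vp' = \diag(\vp')\vw$, we may rewrite $g_{2,\vp'}(\vw) = W_0\|\diag(\vp^{\ext}_{\max})^{-1}\mA^{\ext}\diag(\vp')\vw\|_{\infty}$ and $g_1(\vw) = \|\mA\vw\|_{\infty}$, which are both weighted $\ell_{\infty}$-norms of linear maps with nonnegative coefficients. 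Consequently, both are monotone in $\vw$ on $\RN^{2K}$, positively homogeneous of degree one, and strictly positive on $\RP^{2K}$, and the pointwise maximum inherits all three properties.

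Applying Theorem \ref{the:MSS_FP} with $\vf := \vf_{\vp'}$, $g := g_{\vp'}$, and $\theta := 1$ yields a unique $\vw' \in \RP^{2K}$ satisfying $g_{\vp'}(\vw') = 1$ and $\rho' \vw' = \vf_{\vp'}(\vw')$ with $\rho' = g_{\vp'}\circ\vf_{\vp'}(\vw')$, together with convergence of the iteration \eqref{eqn:FP_converge_1} to $\vw'$ from any starting point. Setting $\lambda_{\vp'} := 1/\rho'$ rewrites the eigenvector equation as $\vw' = \lambda_{\vp'}\vf_{\vp'}(\vw')$, so $(\vw', \lambda_{\vp'})$ attains \eqref{eqn:prob_2a_b} with equality and \eqref{eqn:prob_2a_c} with $\max\{g_1(\vw'), g_{2,\vp'}(\vw')\} = 1$; this demonstrates that $\vw' \in \Fs_{\vw}(\vp')$ with objective value $\lambda_{\vp'} = 1/(g_{\vp'}\circ\vf_{\vp'}(\vw'))$ and establishes the saturation claim.

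To establish optimality of $\lambda_{\vp'}$ and minimality $\vw' \leq \vw$ for every $\vw \in \Fs_{\vw}(\vp')$, I would run a monotone Picard iteration. For any feasible pair $(\vw, \lambda)$, set $\vw^{(0)} := \vw$ and $\vw^{(t+1)} := \lambda\vf_{\vp'}(\vw^{(t)})$. The constraint \eqref{eqn:prob_2a_b} yields $\vw^{(1)} \leq \vw^{(0)}$, and SIF monotonicity of $\vf_{\vp'}$ propagates this inequality inductively, so the sequence is componentwise non-increasing and bounded below by $\ve{0}$. It therefore converges to some $\vw^{\star} \leq \vw$ satisfying $\vw^{\star} = \lambda\vf_{\vp'}(\vw^{\star})$, i.e., an eigenvector of $\vf_{\vp'}$ with eigenvalue $1/\lambda$ and $g_{\vp'}(\vw^{\star}) \leq 1$.

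The main obstacle is then a cross-level-set eigenvalue comparison: Theorem \ref{the:MSS_FP} pins down a unique eigenvector per level set of $g_{\vp'}$ but leaves the dependence of the eigenvalue on $\theta$ implicit, so one cannot immediately read off $1/\lambda \geq \rho'$. The natural tool for closing the gap is the strict SIF scalability $\vf_{\vp'}(\alpha\vw) \leq \alpha\vf_{\vp'}(\vw)$ applied with $\alpha := 1/g_{\vp'}(\vw^{\star}) \geq 1$, which rescales $\vw^{\star}$ onto the $g_{\vp'} = 1$ sphere into a vector $\vu$ satisfying the sub-eigenvector inequality $\vf_{\vp'}(\vu) \leq (1/\lambda)\vu$; iterating the normalized map \eqref{eqn:FP_converge_1} from $\vu$ converges to $\vw'$ by Theorem \ref{the:MSS_FP} and, by a Collatz--Wielandt-type bound preserved along this iteration, forces $\rho' \leq 1/\lambda$, hence $\lambda \leq \lambda_{\vp'}$. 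The equality case pins $\vu = \vw'$ and $g_{\vp'}(\vw^{\star}) = 1$, so $\vw^{\star} = \vw'$ and $\vw \geq \vw^{\star} = \vw'$, yielding the claimed minimality.
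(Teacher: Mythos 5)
Your proposal is correct in substance, but it takes a genuinely different route from the paper in the optimality/minimality part, and the one step you assert without proof is precisely the crux of that route. The paper argues by contradiction: it scales the interference function by the optimal value, $\vf' \coloneqq \lambda'\vf_{\vp'}$, observes that any feasible pair with $\lambda''>\lambda'$ yields $\vf'(\vw'')<\vw''$, uses the monotone-decreasing convergence of the SIF iteration (Lemma \ref{lem:Prop_SIF}) to get $\vw'\leq\vf'(\vw'')<\vw''$, and contradicts $g_{\vp'}(\vw'')\leq 1$ via monotonicity of $g_{\vp'}$; minimality then comes for free, since every optimal $\vw$ is feasible for $\vf'$, whose unique fixed point is $\vw'$. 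You instead descend from an arbitrary feasible $(\vw,\lambda)$ to an eigenvector $\vw^{\star}=\lambda\vf_{\vp'}(\vw^{\star})$, rescale onto the level set $g_{\vp'}=1$, and compare eigenvalues across level sets through a Collatz--Wielandt bound "preserved along the normalized iteration". That preservation is true but must be proved: if $g_{\vp'}(\vx)=1$ and $\vf_{\vp'}(\vx)\leq r\vx$, then $c\coloneqq g_{\vp'}(\vf_{\vp'}(\vx))\leq r$ by monotonicity and homogeneity of $g_{\vp'}$, and monotonicity plus scalability of $\vf_{\vp'}$ (applied with factor $r/c\geq 1$) give $\vf_{\vp'}\bigl(\vf_{\vp'}(\vx)/c\bigr)\leq \vf_{\vp'}\bigl((r/c)\vx\bigr)\leq (r/c)\vf_{\vp'}(\vx)$, so the next normalized iterate satisfies the same sub-eigenvector inequality; letting $t\to\infty$ and using continuity and Theorem \ref{the:MSS_FP} yields $\vf_{\vp'}(\vw')\leq r\vw'$, hence $\rho'\leq r=1/\lambda$. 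With this lemma supplied your argument closes, and your equality case can be shortened considerably: for $\lambda=\lambda_{\vp'}$ both $\vw^{\star}$ and $\vw'$ are fixed points of the SIF $\lambda_{\vp'}\vf_{\vp'}$, so uniqueness of SIF fixed points gives $\vw^{\star}=\vw'$ and $\vw'\leq\vw$ directly, without the scaling step (this is exactly how the paper gets minimality). A minor point: $g_{2,\vp'}$ is not strictly positive on $\RP^{2K}$ if $\vp'$ has zero entries, but this is harmless since positivity of the maximum $g_{\vp'}$ is supplied by $g_1$. What your route buys is a quantitative cross-level-set comparison ($\rho'\leq 1/\lambda$ for every feasible $\lambda$) that does not lean on strict monotonicity of $g_{\vp'}$, which the paper's contradiction step implicitly needs to pass from $\vw'<\vw''$ to $g_{\vp'}(\vw'')>1$; the paper's route, in exchange, is shorter and avoids the preservation lemma altogether.
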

\begin{proof}
In the following part of this proof, for simplicity of notation, we omit the dependency on $\vp'$, and denote $\vf \coloneqq \vf_{\vp'}$, $g \coloneqq g_{\vp'}$ and $\lambda  \coloneqq  \lambda_{\vp'}$.

It is obvious that $g$ defined in \eqref{eqn:FP_converge_2} is positive and homogeneous of degree 1 on $\RP^{2K}$. By virtue of Theorem \ref{the:MSS_FP} and Lemma \ref{lem:StandardInterference}, we have that  for $\theta = 1$, there exist a unique fixed point $\vw' = \lambda'\vf(\vw')$ such that $g(\vw') = 1$, where $\lambda'$ can be computed with iteration \eqref{eqn:FP_converge_1}.

Then we show that there exists no $\lambda''> \lambda'$  to satisfy $\vw'' \geq \lambda''\vf(\vw'')$ and  $g(\vw'')\leq 1$. We proceed by contradiction. Suppose that there exists a $\lambda''> \lambda'$ to satisfy $\vw'' \geq \lambda''\vf(\vw'')$ such that  $g(\vw'')\leq 1$. 
Let us define a function $\vf' \coloneqq  \lambda'\vf$. Because $\vf$ is a SIF, $\vf'$ is also a SIF. We then have $\vf'(\vw'')= \lambda'\vf(\vw'')<\lambda''\vf(\vw'') \leq \vw''$, i.e., $\vw''$ is a feasible point with respect to the SIF $\vf'$. Thus, the sequence starting from $\vw''$ decreases monotonically to $\vw'$ (by using the third property of SIF stated in Lemma \ref{lem:Prop_SIF}).
Then we have $\vw'\leq \vf'(\vw'')<\vw''$. Since $g(\vw)$ is monotone increasing on $\RN^{2K}$, we have $g(\vw'')>g(\vw')=1$, which contradicts the earlier statement $g(\vw'')\leq 1$.

Knowing that $\lambda'$ is the maximum feasible utility, now we show that for all $\vw\in\Fs_{\vw}(\vp')$ satisfying $\vw\geq \lambda'\vf(\vw) =\vf'(\vw)$, we have $\vw'\leq \vw$. Because $\vf'$ is also a SIF,  $\vw\geq \vf'(\vw)$ implies that the sequence $\vw$ decreases monotonically to $\vw'$ satisfying $\vw'=\vf'(\vw')=\lambda'\vf(\vw')$. Thus., $\vw'\leq\vw$.
\end{proof}

\subsection{Optimization to Achieve Maximum Load}\label{subsec:BW_allo_fullload}
As aforementioned, Prop.\ref{prop:solu_FP_load} provides an algorithm that converges to the optimal solution to Prob.\ref{Prob:2a}. Let $\vw'$ be this solution. Since $\max\{g_1(\vw'), g_{2, \vp'}(\vw')\} = 1$, it is possible that $g_{2, \vp'}(\vw')=1$, while $g_1(\vw')<1$, i.e., the maximum power per transmitter is satisfied with equality, while free resources are still available. In this case, we propose an additional step to further optimize $\lambda$ by iteratively scaling down the fixed power vector $\vp'$, until $g_1(\vw')=1$ is achieved.
\begin{proposition}
Let $\vw'\in\RN^{2K}$ be the solution to Prob.\ref{Prob:2a} and suppose that $g_{2, \vp'}(\vw')=1$ and $g_{1}(\vw')<1$. Starting from $\vp^{(0)}=\vp'$ and $\vw^{(0)}=\vw'$, by iteratively performing the following two steps:
\begin{itemize}
\item[(1)] scaling down $\vp$ by
\begin{equation}
\vp^{(t+1)} = g_1(\vw^{(t)})\cdot\vp^{(t)},
\label{eqn:FP_scalePow}
\end{equation}
\item[(2)] updating $\vw^{(t+1)}$, as the unique fixed point of iteration \eqref{eqn:FP_converge}, with updated $\vp' = \vp^{(t+1)}$,
\end{itemize}
the sequence of utility $\lambda$ is monotone increasing, until the maximum load constraint $g_1(\vw) = 1$ is satisfied.
\label{prop:fullload}
\end{proposition}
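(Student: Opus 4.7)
The plan is to analyze a single iteration $t\to t+1$ and show $\lambda^{(t+1)}\geq\lambda^{(t)}$; monotonicity of the full sequence then follows by induction, while termination is immediate — once $g_1(\vw^{(t)})=1$ the scaling factor in \eqref{eqn:FP_scalePow} becomes one, so both $\vp$ and the fixed point output by \eqref{eqn:FP_converge} remain unchanged. Set $\alpha_t:=g_1(\vw^{(t)})\in(0,1)$ and introduce the trial bandwidth vector $\tilde{\vw}:=\vw^{(t)}/\alpha_t$ paired with the updated power $\vp^{(t+1)}=\alpha_t\vp^{(t)}$. Degree-one homogeneity of $g_1$ gives $g_1(\tilde{\vw})=1$, and bilinearity of $g_2$ in $(\vw,\vp)$ together with the standing hypothesis $g_{2,\vp^{(t)}}(\vw^{(t)})=1$ gives
\[
g_{2,\vp^{(t+1)}}(\tilde{\vw}) = g_{2,\alpha_t\vp^{(t)}}(\vw^{(t)}/\alpha_t) = g_{2,\vp^{(t)}}(\vw^{(t)}) = 1,
\]
so $g_{\vp^{(t+1)}}(\tilde{\vw})=1$ and both resource constraints of Prob.~\ref{Prob:2a} are met at $\vp^{(t+1)}$.

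The crucial step is to show that the utility attained by $(\tilde{\vw},\vp^{(t+1)})$ is at least $\lambda^{(t)}$. Observe that $\mVt\diag\{\vp\}\vw$ in \eqref{eqn:virtualSINR} depends on $(\vp,\vw)$ only through the products $p_iw_i$ and is therefore invariant under $(\vp,\vw)\mapsto(\alpha\vp,\vw/\alpha)$, while the signal term $p_l$ scales by $\alpha$ and the noise $\ve{\sigma}$ does not. Hence
\[
\sinr_l(\vp^{(t+1)},\tilde{\vw}) \;=\; \alpha_t\,\sinr_l(\vp^{(t)},\vw^{(t)}),\qquad l\in\overline{\Ks}.
\]
Writing $x_l:=\sinr_l(\vp^{(t)},\vw^{(t)})$ and applying Bernoulli's inequality $(1+x_l)^{\alpha_t}\leq 1+\alpha_t x_l$ for $\alpha_t\in[0,1]$ yields $\log_2(1+\alpha_t x_l)\geq\alpha_t\log_2(1+x_l)$. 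Combining with $\tilde{w}_l=w^{(t)}_l/\alpha_t$ and the fact that $\vw^{(t)}$ is the minimal fixed point from Prop.~\ref{prop:solu_FP_load}, so $W_0 w^{(t)}_l r_l(\vp^{(t)},\vw^{(t)})/d_l=\lambda^{(t)}$ for every $l$, one obtains
\[
\frac{W_0\tilde{w}_l\,r_l(\vp^{(t+1)},\tilde{\vw})}{d_l} \;\geq\; \lambda^{(t)},\qquad\forall\,l\in\overline{\Ks}.
\]

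Equivalently, $\tilde{\vw}\geq\lambda^{(t)}\vf_{\vp^{(t+1)}}(\tilde{\vw})$, so $(\tilde{\vw},\lambda^{(t)})$ is a feasible point of Prob.~\ref{Prob:2a} at the updated power $\vp^{(t+1)}$. Since Prop.~\ref{prop:solu_FP_load} certifies that the fixed-point iteration delivers the \emph{maximum} utility at $\vp^{(t+1)}$, we conclude $\lambda^{(t+1)}\geq\lambda^{(t)}$, closing the induction. The main obstacle is the rate comparison above: it requires recognizing that trading PSD for bandwidth at fixed total transmit power preserves the interference term while only mildly diminishing the per-RB rate, a gap that is exactly quantified by Bernoulli's inequality through the concavity of $\log_2(1+\cdot)$. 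The remaining verifications — homogeneity of $g_1$, bilinearity of $g_2$, and appealing to Prop.~\ref{prop:solu_FP_load} — are routine.
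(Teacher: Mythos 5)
Your proof is correct, and it shares the paper's central construction: set $\alpha_t=g_1(\vw^{(t)})<1$, pair the scaled power $\alpha_t\vp^{(t)}$ with the trial bandwidth $\vw^{(t)}/\alpha_t$, show this point is feasible for Prob.~\ref{Prob:2a} at the new power with utility at least $\lambda^{(t)}$, and conclude via the optimality of the fixed point from Prop.~\ref{prop:solu_FP_load}. The execution, however, differs in a substantive way. The paper stays inside the SIF calculus: it compares $\vf_{\alpha_t\vp^{(t)}}$ with $\vf_{\vp^{(t)}}$, applies scalability to bound $\vf_{\alpha_t\vp^{(t)}}(\vw^{(t)}/\alpha_t)$, runs a monotonically decreasing SIF iteration started from the trial point, and extracts the strict improvement $\lambda^{(t+1)}>\lambda^{(t)}$ by a contradiction argument against $\max\{g_1,g_{2}\}=1$. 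You instead use the exact invariance of the products $p_iw_i$ (so the interference term and $g_2$ are unchanged while every SINR scales exactly by $\alpha_t$) together with Bernoulli's inequality for $\log_2(1+\cdot)$, and then invoke Prop.~\ref{prop:solu_FP_load} directly as an optimality certificate. Your route is shorter and in one respect more robust: the paper's chain passes through the intermediate claim $\vf_{\alpha_t\vp^{(t)}}(\vw^{(t)})<\vf_{\vp^{(t)}}(\vw^{(t)})$, which as written points the wrong way (replacing $\vp^{(t)}$ by $\alpha_t\vp^{(t)}$ with $\alpha_t<1$ lowers every SINR because the noise does not scale, hence it \emph{increases} $\vf$), whereas your identity $\sinr_l(\alpha_t\vp^{(t)},\vw^{(t)}/\alpha_t)=\alpha_t\,\sinr_l(\vp^{(t)},\vw^{(t)})$ delivers the needed bound $\vf_{\alpha_t\vp^{(t)}}(\vw^{(t)}/\alpha_t)<\vw^{(t)}/(\alpha_t\lambda^{(t)})$ without that step. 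Two small touch-ups: the proposition (and the paper's proof) asserts a strictly increasing utility sequence, and you get strictness for free since $\log_2(1+\alpha_t x)>\alpha_t\log_2(1+x)$ for $\alpha_t\in(0,1)$ and $x>0$; and your ``standing hypothesis'' $g_{2,\vp^{(t)}}(\vw^{(t)})=1$ for $t\geq 1$ deserves one line of justification, namely that the Prop.~\ref{prop:solu_FP_load} fixed point satisfies $\max\{g_1(\vw^{(t)}),g_{2,\vp^{(t)}}(\vw^{(t)})\}=1$, so whenever the loop has not terminated ($g_1<1$) the power constraint is tight --- although for feasibility of the trial point the weaker bound $g_{2,\vp^{(t)}}(\vw^{(t)})\leq 1$ already suffices.
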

The proof of Prop. \ref{prop:fullload} is provided in Appendix \ref{subsec:proofProp_2}.

The optimization step provided in Prop. \ref{prop:fullload} further improves our desired utility $\lambda$ if the solution to Prob.\ref{Prob:2a} $\vw'$ satisfies $g_{2, \vp'}(\vw')=1$ and $g_{1}(\vw')<1$. Now assume the algorithm defined in Prop. \ref{prop:fullload} converges to $(\vp^{\star}, \vw^{\star})$. 
Then, in addition to the full utilization of resources in the sense that $g_{1}(\vw^{\star}) = 1$, we have $g_{2}(\vp^{\star}, \vw^{\star}) \leq 1 = g_{2, \vp'}(\vw')$, which means that the allocation obtained under Prop. \ref{prop:solu_FP_load} is more power efficient than that of Prop. \ref{prop:solu_FP_load}.

\begin{Rem}\label{Rem:LoadFull}
It is worth mentioning that Ho \cite{ho2014power} formulates a power minimization problem, based on the cell-specific load and power coupling in the DL,  and concludes that if the minimum required rate is feasible, then the optimal solution to the power minimization problem satisfies that the system is fully loaded \cite[Th. 1]{ho2014power}. In this paper, we formulate a utility maximization problem, based on the link-specific bandwidth and power coupling framework in joint UL/DL, with per-cell load and per-transmitter power constraints, and conclude that if some minimum utility is feasible with cell load lower than one, we can scale down the power vector using the algorithm presented in Prop. \ref{prop:fullload}, to further increase the desired utility, until the per-cell load constraint holds with equality.
\end{Rem}

\section{Joint Uplink and Downlink Power Control}\label{sec:PowerControl}
Now let us consider the problem of power control. In this section, we first present the optimal solution to Prob.\ref{Prob:2b} introduced in Section \ref{subsec:PerLinkPowerControl}. Then, in Section \ref{subsec:percell_powercontrol} and \ref{subsec:efficient_powercontrol}, we further examine two alternative algorithms  for cell-specific power control and energy efficient power control, respectively.
\subsection{Algorithm for Link-Specific Power Control}\label{subsec:PerLinkPowerControl}
Let us first consider Prob.\ref{Prob:2b}. Given some fixed $\vw'\in[0,1]^{2K}$, we first rewrite the rate constraints in \eqref{eqn:prob_2b_b}. For $\vp\in\RP^{2K}$, we have
\begin{equation}
\vw' \geq \lambda \vf_{\vw'}(\vp) \Leftrightarrow p_l\geq \lambda \frac{p_l f_{\vw',l}(\vp)}{w'_l} \mbox{ for } l\in\overline{\Ks}.
\label{eqn:reformulate_w}
\end{equation}
We further define the following vector function using \eqref{eqn:reformulate_w}
\begin{align}
\tilde{\vf}_{\vw'}: &\RP^{2K}\to\RP^{2K}:\vp\mapsto\left[\tilde{f}_{\vw',1}(\vp), \ldots, \tilde{f}_{\vw',2K}(\vp)\right]^T\nonumber\\
\mbox{where } & \tilde{f}_{\vw',l}(\vp) \coloneqq  \dfrac{p_l}{w'_l} f_{\vw',l}(\vp), \ l\in\overline{\Ks}.
\label{eqn:def_funct_w}
\end{align}
Note that the domain of $\tilde{\vf}_{\vw'}$ defined in \eqref{eqn:def_funct_w} is the positive orthant $\RP^{2K}$. To extend it to the non-negative orthant $\RN^{2K}$, we define the following extension for each $l \in\overline{\Ks}$:
\begin{align}
\vf'_{\vw',l}(\vp) &  \coloneqq  
\begin{cases}
\tilde{\vf}_{\vw',l}, & \mbox{ if }  p_l\neq 0\\
\dfrac{d_l\ln 2}{W_0 B w'_l} \Is_{\vw',l}(\vp) & \mbox{ o.w. }
\end{cases},
\label{eqn:ext_funct_w}
\\
\mbox{where }  & \Is_{\vw',l}(\vp): = \left[\mD^{-1}\left(\mVt\diag\{\vw'\}\vp+\ve{\sigma}\right)\right]_l.
\label{eqn:Funct_Interf_w}
\end{align}
The domain extension is derived by leveraging the linear approximation $\log_2(1+x)\approx x/\ln 2$ for $x\to 0$. As shown in \eqref{eqn:ext_funct_w}, this approximation is only used for $p_l= 0$ (which further leads to $\sinr_l = 0$), otherwise if $p_l\neq 0$, the nonlinear closed-form of $\tilde{\vf}_{\vw',l}$ \eqref{eqn:def_funct_w} is used.

With \eqref{eqn:reformulate_w}, \eqref{eqn:ext_funct_w}, and \eqref{eqn:Funct_Interf_w},  Prob.\ref{Prob:2b} is rewritten as
\begin{equation}
\max_{\substack{\vp\in\RN^{2K}}} \lambda,\mbox{ s.t. } \vp \geq \lambda \vf'_{\vw'}(\vp), \ 
 g_{2, \vw'}(\vp)\leq 1 
\label{eqn:prob_2b_p}
\end{equation}
The following lemma shows that $\vf'_{\vw'}$ has the same key property as $\vf_{\vp'}$, which is shown for $\vf_{\vp'}$ in Lemma \ref{lem:StandardInterference}.
\begin{lemma}
The vector function $\vf'_{\vw'}:\RN^{2K}\to\RP^{2K}$ defined in \eqref{eqn:ext_funct_w} is SIF. 
\label{lem:SIF_Pow_FixedW} 
\end{lemma}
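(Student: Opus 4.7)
The plan is to verify directly the three defining properties of a standard interference function (positivity, monotonicity, and scalability, as recalled in Appendix A of the paper) for the piecewise map $\vf'_{\vw'}$. A preliminary step will be to observe that the extension in \eqref{eqn:ext_funct_w} is in fact the continuous extension of $\tilde{\vf}_{\vw'}$ to the boundary face $\{p_l = 0\}$: from the Taylor expansion $\log_2(1+x) = x/\ln 2 + o(x)$ as $x\to 0^+$, one checks that $\lim_{p_l\to 0^+}\tilde{f}_{\vw',l}(\vp) = \tfrac{d_l\ln 2}{W_0 B w'_l}\Is_{\vw',l}(\vp)$. This continuity is convenient because it allows us to argue about each component by limits from the interior case $p_l>0$ whenever the boundary case is compared to it.

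Positivity is immediate: for $p_l=0$ the value equals a positive constant times $\Is_{\vw',l}(\vp)$, which is strictly positive because the noise term $[\mD^{-1}\ve{\sigma}]_l>0$; for $p_l>0$ the denominator $\log_2(1+\sinr_l)$ is strictly positive, while the numerator $d_l p_l/(w'_l W_0 B)$ is strictly positive. Monotonicity in $\vp\geq 0$ I would establish separately in the two pieces and then glue them together. For the boundary piece it reduces to monotonicity of the affine map $\vp\mapsto \Is_{\vw',l}(\vp)$, which is clear since $\mVt\diag\{\vw'\}$ is entrywise nonnegative. For the interior piece I would write
\[
\tilde{f}_{\vw',l}(\vp) = \frac{d_l}{w'_l W_0 B}\cdot\frac{p_l}{\log_2\!\bigl(1+p_l/\Is_{\vw',l}(\vp)\bigr)},
\]
and argue that (i) for fixed $p_l$, the function is increasing in $\Is_{\vw',l}$ (since $\log_2$ is increasing and sits in the denominator), and (ii) for fixed interference $I$, the map $p_l\mapsto p_l/\log_2(1+p_l/I)$ is increasing in $p_l$, which follows from the elementary fact that $x\mapsto x/\log_2(1+x)$ is increasing on $(0,\infty)$. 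Both partial increases combine componentwise because all off-diagonal contributions in $\Is_{\vw',l}$ come through other components of $\vp$, which also only increase. The comparison across the boundary (one argument has $p_l^{(1)}>0$ while another has $p_l^{(2)}=0$) then follows from continuity together with strict monotonicity inside the interior piece.

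For scalability, fix $\alpha>1$ and $\vp\in\RN^{2K}$. The key subadditivity of $\Is_{\vw',l}$ in its argument is that
\[
\Is_{\vw',l}(\alpha\vp) = \alpha[\mD^{-1}\mVt\diag\{\vw'\}\vp]_l + [\mD^{-1}\ve{\sigma}]_l < \alpha\,\Is_{\vw',l}(\vp),
\]
because $[\mD^{-1}\ve{\sigma}]_l>0$ and $\alpha>1$. On the interior piece this immediately yields $\alpha p_l/\Is_{\vw',l}(\alpha\vp) > p_l/\Is_{\vw',l}(\vp)$, hence $\log_2\bigl(1+\alpha p_l/\Is_{\vw',l}(\alpha\vp)\bigr) > \log_2\bigl(1+p_l/\Is_{\vw',l}(\vp)\bigr)$, so
\[
\vf'_{\vw',l}(\alpha\vp) = \frac{d_l\alpha p_l}{w'_l W_0 B\log_2\!\bigl(1+\alpha p_l/\Is_{\vw',l}(\alpha\vp)\bigr)} < \alpha\,\vf'_{\vw',l}(\vp).
\]
On the boundary piece the same strict inequality $\Is_{\vw',l}(\alpha\vp)<\alpha\Is_{\vw',l}(\vp)$ gives the scalability directly from the closed form of the extension.

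The main obstacle I anticipate is not any single step but the bookkeeping at the interface between the two pieces: one has to be careful that monotonicity and scalability remain strict or nonstrict in the right senses as one moves between the two cases, and that the definition is genuinely continuous there. Once continuity is established, however, the three SIF properties follow from the elementary facts above, and the lemma falls out along the same lines as Lemma~\ref{lem:StandardInterference}.
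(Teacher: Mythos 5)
Your proof is correct, but it takes a genuinely different route from the paper. The paper disposes of Lemma~\ref{lem:SIF_Pow_FixedW} by observing that $\vf'_{\vw'}$ has the same form as the cell-specific function of \cite[Prop.~1]{cavalcante2014power}, which is positive and concave, and then invoking the result that positive concave functions are standard interference functions \cite[Prop.~1]{Cavalcante14} --- the same concavity route used for Lemma~\ref{lem:StandardInterference}. You instead verify the SIF axioms directly from the closed form: positivity from the noise term, monotonicity from the elementary fact that $x\mapsto x/\log_2(1+x)$ is increasing together with monotonicity of the affine interference map $\Is_{\vw',l}$, and scalability from the strict inequality $\Is_{\vw',l}(\alpha\vp)<\alpha\Is_{\vw',l}(\vp)$ for $\alpha>1$, which hinges on $[\mD^{-1}\ve{\sigma}]_l>0$. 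Your handling of the boundary piece of \eqref{eqn:ext_funct_w} is sound: the extension is indeed the continuous limit of $\tilde{f}_{\vw',l}$ as $p_l\to 0^+$, and the cross-boundary monotonicity comparison via this limit (noting that $\Is_{\vw',l}$ depends only on the other components, since the same-link entries of $\mVt$ are zeroed by \eqref{eqn:modifyV}) closes the case split; likewise scaling preserves the set $\{p_l=0\}$, so scalability never mixes the two pieces. What each approach buys: the paper's argument is shorter and the concavity viewpoint absorbs the domain extension automatically (the boundary formula is just the concave/continuous extension), but it leans on external results; your argument is self-contained and elementary, makes explicit exactly where the strict positivity of the noise enters the scalability axiom, and requires only the piecewise bookkeeping that you carried out. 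Both establish the lemma; only minor polish is needed in stating the monotone gluing across the boundary (compare $\vp^{(1)}\leq\vp^{(2)}$ with $p_l^{(1)}=0<p_l^{(2)}$ by first sending the $l$th component of $\vp^{(2)}$ to zero and using monotonicity in $p_l$, then monotonicity of $\Is_{\vw',l}$).
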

\begin{proof}
The proof follows directly from the previous results in \cite[Prop. 1]{cavalcante2014power},
where a cell-specific utility function over the cell-specific power vector in DL is shown to be positive concave, and thus a SIF \cite[Prop. 1]{Cavalcante14}. It is easy to see that our defined link-specific function $\vf'_{\pi', \vw'}$  shares the same form with the cell-specific function introduced in \cite[Prop. 1]{cavalcante2014power}. Thus, we omit the details here and conclude that it is also a SIF.
\end{proof}

Note that in the expression of per-transmitter power constraint \eqref{eqn:powConstraint}, the term 
 $\diag(\vw)\vp$ and $\diag(\vp)\vw$ are interchangeable. With some fixed $\vw'$, the function $g_{2, \vw'}$ defined in \eqref{eqn:prob_2b_p} is positive and homogeneous of degree $1$ on $\RP^{2K}$. Thus, by leveraging Lemma \ref{lem:SIF_Pow_FixedW} and Theorem \ref{the:MSS_FP}, we can argue along similar lines as in Prop. \ref{prop:solu_FP_load} to conclude the following: starting from an arbitrary $\vp^{(1)}\in\RN^{2K}$, the following fixed point iteration
\begin{equation}
\vp^{(t+1)} = \frac{\vf'_{\vw'}(\vp^{(t)})}{g_{2,\vw'}\circ \vf'_{\vw'}(\vp^{(t)})}, \ t\in \NN \label{eqn:FP_converge_fixp}
\end{equation}
converges to the solution of Prob.\ref{Prob:2b}, denoted by $\vp''$. And the utility $\lambda_{\vp''}$ corresponding to $\vp''$ is given by $\lambda_{\vp''} = 1/g_{2, \vw'}\circ \vf'_{\vw'}(\vp'')$.

Using \eqref{eqn:FP_converge_fixp}, we can iteratively approach arbitrarily close to solution to Prob.\ref{Prob:2b} given fixed $\vw'$ as the solution to Prob.\ref{Prob:2a}. However, for joint optimization over $(\vw, \vp)$, we are interested in  whether or not this solution further improves the desired utility derived from the solution to Prob.\ref{Prob:2a}. We present the relationship between $\lambda'':=\lambda_{\vp''}$ and $\lambda':=\lambda_{\vp'}$ in Prop. \ref{eqn:prop_power_allo}.

\begin{proposition}
For some fixed $\vp'$, let $\vw'\in\RP^{2K}$ be the solution to Prob.\ref{Prob:2a} and $\lambda'$ the corresponding utility. Moreover, given $\vw'$, let $\vp''\in\RP^{2K}$ be the solution to Prob.\ref{Prob:2b} and $\lambda''$ the corresponding utility. Then, $\lambda'$ and $\lambda''$ are related as follows.
\begin{itemize}
\item If $g_{2, \vp'}(\vw') = 1$, then, we have $\lambda''= \lambda'$ and  $\vp'' = \vp'$
\item If $g_{2, \vp'}(\vw') < 1$, then, we have $ \lambda''>\lambda'$
\end{itemize}
\label{eqn:prop_power_allo}
\end{proposition}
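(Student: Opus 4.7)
My plan is to pivot to the reformulation \eqref{eqn:prob_2b_p} of Prob.~\ref{Prob:2b}, in which the constraint $\vw'\geq\lambda\vf_{\vw'}(\vp)$ becomes $\vp\geq\lambda\vf'_{\vw'}(\vp)$ with the SIF $\vf'_{\vw'}$ provided by Lemma~\ref{lem:SIF_Pow_FixedW}. As noted after that lemma, its optimizer $\vp''$ is the unique eigenvector of $\vf'_{\vw'}$ meeting $g_{2,\vw'}(\vp'')=1$, with $\lambda''=1/(g_{2,\vw'}\circ\vf'_{\vw'}(\vp''))$. By Prop.~\ref{prop:solu_FP_load}, the Prob.~\ref{Prob:2a} optimizer satisfies $\vw'=\lambda'\vf_{\vp'}(\vw')=\lambda'\vf_{\vw'}(\vp')$; multiplying componentwise by $p'_l/w'_l$ and invoking the definition \eqref{eqn:def_funct_w} shows that $\vp'$ is itself an eigenvector of $\vf'_{\vw'}$ with eigenvalue $1/\lambda'$. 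Finally, because $\diag(\vw)\vp=\diag(\vp)\vw$, I have $g_{2,\vw'}(\vp')=g_{2,\vp'}(\vw')$, which is exactly the scalar that distinguishes the two cases.

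For Case~(i), where $g_{2,\vp'}(\vw')=1$, the point $\vp'$ is an eigenvector of $\vf'_{\vw'}$ meeting the power budget with equality. Theorem~\ref{the:MSS_FP} applied with $\vf=\vf'_{\vw'}$ and $g=g_{2,\vw'}$ (positive, monotone, and degree-one homogeneous on $\RP^{2K}$) guarantees uniqueness of such an eigenvector at the level $\theta=1$, so $\vp''=\vp'$ and consequently $\lambda''=\lambda'$.

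For Case~(ii), where $g_{2,\vp'}(\vw')<1$, I aim to exhibit a feasible point of Prob.~\ref{Prob:2b} strictly improving on $(\vp',\lambda')$. Let $\alpha:=1/g_{2,\vp'}(\vw')>1$ and set $\tilde{\vp}:=\alpha\vp'$; homogeneity gives $g_{2,\vw'}(\tilde{\vp})=1$, so the power constraint is active. The crucial observation is that uniform scaling of $\vp'$ by $\alpha>1$ strictly \emph{increases} every SINR, because of the additive noise in \eqref{eqn:virtualSINR}: dividing numerator and denominator of $\sinr_l(\alpha\vp',\vw')$ by $\alpha$ effectively replaces the noise level $\sigma^2$ by $\sigma^2/\alpha<\sigma^2$, while leaving the signal and interference contributions unchanged. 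Since $r_l$ is strictly increasing in $\sinr_l$ and $f_l\propto 1/r_l$, this yields $\vf_{\vw'}(\tilde{\vp})<\vf_{\vw'}(\vp')=\vw'/\lambda'$ strictly and componentwise. Hence some $\lambda>\lambda'$ still satisfies $\lambda\vf_{\vw'}(\tilde{\vp})\leq \vw'$, so $(\tilde{\vp},\lambda)$ is feasible for Prob.~\ref{Prob:2b}, and therefore $\lambda''\geq\lambda>\lambda'$.

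The main obstacle I expect is the \emph{strict} componentwise inequality $\vf_{\vw'}(\tilde{\vp})<\vf_{\vw'}(\vp')$ in Case~(ii). The generic scalability of SIFs only supplies $\vf(\alpha\vp)<\alpha\vf(\vp)$, which is the wrong direction here; the argument genuinely depends on the noise floor not scaling with $\vp$. If $\sigma^2=0$, uniform scaling would leave every SINR—and hence $\vf_{\vw'}$—invariant, and no headroom for improvement would exist. I would therefore devote the one genuinely new calculation in the proof to isolating the noise term in $\sinr_l(\alpha\vp',\vw')$ and verifying the strict monotone decrease of $\vf_{\vw'}$ along the ray $\{\alpha\vp':\alpha\geq 1\}$.
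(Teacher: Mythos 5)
Your proof is correct and takes essentially the same route as the paper's: case (i) via the uniqueness part of Theorem~\ref{the:MSS_FP} applied to the SIF $\vf'_{\vw'}$ with $g=g_{2,\vw'}$, and case (ii) by scaling $\vp'$ up by $\alpha=1/g_{2,\vw'}(\vp')>1$ to exhibit a feasible point of Prob.~\ref{Prob:2b} with strictly larger utility. The only cosmetic difference is how the key strict inequality is justified: the paper invokes the scalability of the reformulated SIF $\vf'_{\vw'}$ (note that $\tilde{f}_{\vw',l}(\alpha\vp')<\alpha\tilde{f}_{\vw',l}(\vp')$, after dividing by $\alpha p'_l/w'_l$, is exactly your componentwise statement $f_{\vw',l}(\alpha\vp')<f_{\vw',l}(\vp')$), so the "obstacle" you flag is already settled by Lemma~\ref{lem:SIF_Pow_FixedW}, and your explicit noise-floor computation simply reproves that scalability.
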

The proof of Prop. \ref{eqn:prop_power_allo} can be found in Appendix \ref{subsec:proofProp_3}. 

Prop. \ref{eqn:prop_power_allo} implies that given the solution $(\vw', \vp')$ derived from the bandwidth updating step (Prop. \ref{prop:solu_FP_load}) or the power scaling step (Prop. \ref{prop:fullload}),  with fixed $\vw'$ at hand,  solving Prob.\ref{Prob:2b} (by performing \eqref{eqn:FP_converge_fixp} ) can further improve the desired utility only if $g_{2,\vp'}(\vw') < 1$; otherwise if $g_{2,\vp'}(\vw') = 1$ the solution to Prob.\ref{Prob:2b} with respect to $\vw'$ is equivalent to $\vp'$.


%
\begin{Rem}
In this section, we rewrite the rate constraints $\vw'\geq \lambda \vf_{\vp}(\vw')$ in Prob. \ref{Prob:2b} into a system of  nonlinear inequalities $\vp\geq \lambda \vf_{\vw'}'(\vp)$ as shown in \eqref{eqn:reformulate_w}-\eqref{eqn:Funct_Interf_w}. Hence both the fixed point iterations in \eqref{eqn:FP_converge} and \eqref{eqn:FP_converge_fixp} (to solve Prob. \ref{Prob:2a} and Prob. \ref{Prob:2b}, respectively) converge to the solutions that maximize the same $\lambda$ defined in Prob. \ref{ProbFormulation}.
Note that if we treat the power control problems separately, as stated for instance in \cite{BocheSchubertEURASIP05}, the rate constraint $r_l(\vp,\vw')\geq \lambda d_l/(w_l'W_0)$ for all $l\in\overline{\Ks}$ can be directly translate into a SINR constraint by taking the exponential function of both sides.  We write \eqref{eqn:reformulate_w}  into a system of linear inequalities in powers:
\begin{equation}
p_l\geq \eta(\lambda) \vf^{''}_{\vw'}(\vp)\nonumber
\end{equation}
where $\eta(\lambda):=2^{\frac{\lambda d_l}{W_0Bw_l'}}-1$ is monotone increasing for any $\lambda\in\RN^{2K}$, and $f^{''}_{\vw'}:\RN^{2K}\to\RP^{2K}$ is of form of an affine transformation $\vp\mapsto\mD^{-1}\left(\mVt\diag(\vw')\vp+\ve{\sigma}\right)$. We can agree along similar lines as in Prop. \ref{prop:solu_FP_load} to maximize $\eta$ by performing the fixed point iteration $\vp = f^{''}_{\vw'}(\vp)/ (g_{2,\vw'}\circ f^{''}_{\vw'}(\vp))$ and thus indirectly maximize $\lambda$.
\end{Rem}

\subsection{Algorithm for Cell-Specific Power Control}\label{subsec:percell_powercontrol}
So far we have considered the case that the PSD $\vp$ can be specified per service link. In the practical system, however, in DL a transmitter determines constant cell-specific energy per resource element across all DL bandwidth and subframes until it needs to be updated \cite{3GPPTS36213}, while   
in UL a distinct transmission power can be assigned to each UE. Without loss of generality, the developed power control algorithm can be easily modified to meet this practical requirement. 
The objective is  to optimize the per-transmitter PSD as a collection of the per-UE UL and per-BS DL power vectors
 \begin{equation}
\vqt  \coloneqq  [\vp^{\ul};\vq^{\dl}]^T\in\RN^{K+N},
\label{eqn:power_collect}
\end{equation}
where $\vq^{\dl}\in\RN^N$ is the cell-specific PSD in DL, and the $n$th entry of $q_l^{\dl}$ denotes the PSD of all the DLs associated to cell $n$.
Since all DLs served by the same cell share the same PSD, we have
\begin{equation}
\vp^{\dl}={\mA^{\dl}}^T\vq^{\dl}.
\label{eqn:power_percell}
\end{equation}
%
%
The transformation between $\vp$ and $\vqt$ is then given by 
\begin{equation}
\vp = \mLa\vqt, \mbox{ with }
\mLa  \coloneqq  
\left[ \begin{array}{cc}
\mI_{K} & \ma{0}_{K\times N}\\
\ma{0}_{K\times K} & {\mA^{\dl}}^{T}
\end{array}
\right]. 
\label{eqn:transformationP}
\end{equation}

In the following, we collect the per-UE rate constraint in UL and per-cell sum rate constraint in DL depending on $\vqt$ in a set of $K+N$ nonlinear inequalities, where for $j\in\Ks$  the $j$th inequality implies the UL rate constraint for UE $j$, while for $j \in\overline{\Ns}  \coloneqq \{K+1, \ldots, K+N\}$, the $j$th  inequality implies the DL sum rate constraint for cell $n=j-K$.
\subsubsection{Per-UE Rate Constraint in Uplink}
Substituting \eqref{eqn:transformationP} into \eqref{eqn:virtualSINR}, SINR of UE $j$ in UL is simply given by
\begin{align}
\sinr_j (\vqt, \vw') & \coloneqq  \frac{\pt_j}{\Js_{\vw',j}(\vqt)}, \mbox{ for } j\in\Ks, \label{eqn:UL_SINR_pow}\\
\mbox{ where } \Js_{\vw',j}(\vqt) &  \coloneqq  \left[\mD^{-1}\left(\mVt\diag\{\vw'\}\mLa\vqt+\ve{\sigma}\right)\right]_j.
\label{eqn:interfernceFunct_pow}
\end{align}
Substituting \eqref{eqn:UL_SINR_pow} into \eqref{eqn:rate} and \eqref{eqn:rateConstraint}, the per-UE rate constraint in UL depending on $\vqt$ is given by
\begin{equation}
\pt_j\geq \frac{\pt_j}{w_j}\cdot \frac{d_j}{W_0 r_j(\vqt, \vw')} \eqqcolon \tilde{f}_{\vw',j}(\vqt), \mbox{ for }  j \in \Ks.
\label{eqn:UL_rateConstraint}
\end{equation}
\subsubsection{Per-Cell Sum Rate Constraint in Downlink}
Substituting \eqref{eqn:transformationP} into \eqref{eqn:virtualSINR}, the DL SINR of UE $k$ associated with cell $n$ (depending on $\vqt$) can be rewritten as:
\begin{equation}
\sinr_{n,l}^{\dl}(\vqt, \vw') \coloneqq \dfrac{\pt_{K+n}}{\Js_{\vw',l}(\vqt)}, \ \forall l\in\overline{\Ks}_n^{\dl},
\label{eqn:DL_SINR_pow}
\end{equation}
where $\Js_{\vw',l}(\vqt)$ is defined in \eqref{eqn:interfernceFunct_pow}, $\overline{\Ks}_n^{\dl}$ denotes the set of DL transmissions associated with cell $n$, and $\pt_{K+n}$ as the $(K+n)$th entry of $\vqt$ denotes the PSD in DL in cell $n$. 

The spectral efficiency of UE $k$ associated with cell $n$ in DL and denoted by $r^{\dl}_{n,l}(\vqt,\vw')$ is computed by substituting \eqref{eqn:DL_SINR_pow} into \eqref{eqn:rate}.
Then, using \eqref{eqn:rateConstraint}, the sum rate constraint per cell in DL (depending on $\vqt$) yields
\begin{align}
\nu_n' & = \sum_{l\in\overline{\Ks}_n^{\dl}}w'_l\geq \sum_{l\in\overline{\Ks}_n^{\dl}} 
\frac{d_l}{W_0 r_{n,l}^{\dl}(\vqt,\vw')},  \ \forall n\in\Ns \\
\Rightarrow \pt_j &\geq \frac{\pt_j}{\nu_{j-K}'}\sum_{l\in\overline{\Ks}_{j-K}^{\dl}} 
\frac{d_l}{W_0 r_{j-K,l}^{\dl}(\vqt,\vw')} \nonumber\\
 & \eqqcolon \tilde{f}_{\vw',j}(\vqt), \mbox{ for } j \in\overline{\Ns}
\label{eqn:DL_rateConstraint}
\end{align}
where $\nu_n'$ denotes fraction of the total allocated RBs of cell $n$ in DL,\; note that for $j\in\overline{\Ns}$, the $j$th entry of $\vqt$ is equal to the PSD of cell $n=j-K$ in DL.

Note that \eqref{eqn:DL_rateConstraint} defines the $j$th entry of function $\tilde{f}_{\vw',j}$ for $j = K+1, \ldots, K+N$, while for $j = 1,\ldots, K$, the expression of  $\tilde{f}_{\vw',j}$ is given in \eqref{eqn:UL_rateConstraint}.

\subsubsection{Joint Downlink Cell-Specific and Uplink UE-specific Power Control}

With \eqref{eqn:UL_rateConstraint} and \eqref{eqn:DL_rateConstraint} in hand, using the same techniques as shown in \eqref{eqn:reformulate_w}-\eqref{eqn:Funct_Interf_w}, the optimization problem is written as 
\begin{equation}
\max_{\substack{\vqt\in\RN^{K+N}}}  \lambda,\mbox{ s.t. } \vqt \geq \lambda \vfo_{\vw'}(\vqt)
, \ \overline{g}_{2, \vw'}(\vqt)\leq 1 
\label{eqn:FP_load_fixed_w_modi_3}
\end{equation}
where $\overline{g}_{2, \vw'}(\vqt)$ is obtained by substituting \eqref{eqn:transformationP} into \eqref{eqn:powConstraint}, and $\vfo_{\vw'}(\vqt)$ is given by
\begin{align}
& \sfo_{\vw',j}(\vqt) \coloneqq \nonumber \\
&
\begin{cases}
\tilde{f}_{\vw',j}(\vqt) & \mbox{ if } \pt_j \neq 0\\
\dfrac{d_l\ln 2}{W_0 B w'_j} \Js_{\vw',j}(\vqt) & \mbox{ if } \pt_j = 0, j\in\Ks\\
\sum\limits_{l\in\overline{\Ks}_{j-K}^{\dl}}\dfrac{d_l\ln2
}{W_0B\nu_{j-K}'} \Js_{\vw',l}(\vqt) & \mbox{ if } \pt_j = 0, j\in\overline{\Ns}\\
\end{cases}
\end{align}
Proceeding long similar lines as in Lemma \ref{lem:SIF_Pow_FixedW}, it is easy to show that $\vfo_{\vw'}:\RN^{K+N}\to\RP^{K+N}$ is SIF, while $\overline{g}_{2, \vw'}:\RP^{K+N}\to\RP$ is monotonic and homogeneous with degree $1$. Therefore, we can compute the solution to \eqref{eqn:FP_load_fixed_w_modi_3} by means of the fixed point iteration in \eqref{eqn:FP_converge_fixp}, and with $\vf'_{\vw'}(\vp)$ replaced by $\vfo_{\vw'}(\vqt)$.

\subsection{Algorithm for Energy Efficient Power Control}\label{subsec:efficient_powercontrol}
If the following assumption holds, the rate requirements are strictly feasible for all UL and DL transmissions.
\begin{assumption}
The solution to Prob. \ref{prob:subporblems} $(\vw^{\star},\vp^{\star})$ satisfies $\lambda^{\star}>1$.
\label{assume:feasible}
\end{assumption}
Under Assumption \ref{assume:feasible}, the problem of interest in the context of energy efficient networks is that, instead of consuming high energy to achieve $\lambda >1$, how to minimize the sum transmit power, such that the per-link rate constraint is just satisfied, i.e., $\lambda = 1$. 
The power minimization problem subjected to the rate and power constraints are defined in Problem \ref{prob:minPow}

\begin{problem}
\begin{equation}
\min_{\vp\in\RN^{2K}} \psi(\vp), \mbox{ s.t. }  \vp \geq \vf'_{\vw^{\star}}(\vp), \ g_{2, \vw^{\star}}(\vp)\leq 1
\label{eqn:prob_minPow}
\end{equation}
where $\psi:\RN^{2K}\to\RN$ can be any monotonic function (in each coordinate, i.e., $\psi(\vx)\geq \psi(\vy)$ if $x_i\geq y_i$ for each $i$) that is non-decreasing.
For example, by setting $\psi(\vp) = \|\diag\{\vw^{\star}\}\vp\|_1$, we aim at minimizing the sum transmit power  over all occupied RBs and all transmitters. 
\label{prob:minPow}
\end{problem}

Since $\vf'_{\vw^{\star}}$ is SIF, Prob. \ref{prob:minPow} is a classical power minimization problem introduced in \cite{Yates95b}, and we provide the solution in Prop. \eqref{prop:minPow}. We omit the proof because it follows directly from \cite[Thm. 2]{Yates95b}.
\begin{proposition}
Under Assumption \ref{assume:feasible}, the fixed point iteration
\begin{equation}
\vp^{(t+1)} = \vf'_{\vw^{\star}}\left(\vp^{(t)}\right), t\in\NN
\label{eqn:FP_converge_minP}
\end{equation}
converges to the optimum solution $\vp^{\star\star}$ to Prob. \ref{prob:minPow}.
\label{prop:minPow}
\end{proposition}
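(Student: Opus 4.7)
The plan is to reduce the claim to the classical Yates' theorem on standard interference functions, which gives exactly the desired statement once the right feasibility and minimality facts are in place. We already know from Lemma \ref{lem:SIF_Pow_FixedW} that $\vf'_{\vw^\star}$ is a SIF, and by Lemma \ref{lem:Prop_SIF} (Appendix) a SIF enjoys the key property that, whenever a feasible vector exists, i.e., some $\vp$ with $\vp\geq \vf'_{\vw^\star}(\vp)$, the fixed-point iteration \eqref{eqn:FP_converge_minP} converges from any starting point to a unique fixed point $\vp^{\star\star}$, which is moreover the componentwise smallest vector satisfying the inequality constraint.

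The first step is to verify feasibility. Under Assumption \ref{assume:feasible}, the joint solution $(\vw^\star,\vp^\star)$ satisfies $\lambda^\star>1$ and the reformulation \eqref{eqn:reformulate_w} gives $\vp^\star \geq \lambda^\star \vf'_{\vw^\star}(\vp^\star)$. Since $\vf'_{\vw^\star}$ is strictly positive and $\lambda^\star>1$, this yields $\vp^\star > \vf'_{\vw^\star}(\vp^\star)$, so the inequality constraint in Prob.\ref{prob:minPow} is strictly feasible. The power constraint $g_{2,\vw^\star}(\vp^\star)\leq 1$ holds by construction of $(\vw^\star,\vp^\star)$.

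The second step is to invoke Yates' framework: since a feasible point exists, iteration \eqref{eqn:FP_converge_minP} converges from any initialization to a unique fixed point $\vp^{\star\star}$ with $\vp^{\star\star}=\vf'_{\vw^\star}(\vp^{\star\star})$, and every vector $\vp$ satisfying $\vp\geq \vf'_{\vw^\star}(\vp)$ obeys $\vp^{\star\star}\leq \vp$ componentwise. In particular $\vp^{\star\star}\leq \vp^\star$, so the monotonicity of the map $\vp\mapsto g_{2,\vw^\star}(\vp)=W_0\|\diag(\vp^{\ext}_{\max})^{-1}\mA^{\ext}\diag(\vw^\star)\vp\|_\infty$ in its (nonnegative) argument gives $g_{2,\vw^\star}(\vp^{\star\star})\leq g_{2,\vw^\star}(\vp^\star)\leq 1$. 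Hence $\vp^{\star\star}$ is feasible for Prob.\ref{prob:minPow}.

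The third step is optimality. For every feasible $\vp$ of Prob.\ref{prob:minPow} one has $\vp^{\star\star}\leq \vp$, and since $\psi$ is non-decreasing in each coordinate this immediately gives $\psi(\vp^{\star\star})\leq \psi(\vp)$. Thus $\vp^{\star\star}$ minimizes $\psi$ over the feasible set, completing the proof. The only real issue to be careful about is the extension of $\vf'_{\vw^\star}$ to the boundary (the case $p_l=0$ handled in \eqref{eqn:ext_funct_w}): one has to argue that the SIF properties still hold at the boundary so that the iteration is well defined from any $\vp^{(0)}\in\RN^{2K}$, but this is precisely what Lemma \ref{lem:SIF_Pow_FixedW} delivers, so the argument goes through without additional obstacles.
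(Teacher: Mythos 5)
Your proposal is correct and follows essentially the same route as the paper, which simply invokes Yates' classical result (\cite{Yates95b}, Thm.~2) once $\vf'_{\vw^{\star}}$ is known to be a SIF (Lemma~\ref{lem:SIF_Pow_FixedW}) and feasibility is guaranteed by Assumption~\ref{assume:feasible}. Your write-up merely makes explicit the details the paper omits: strict feasibility of $\vp^{\star}$, the componentwise minimality of the fixed point, feasibility of $\vp^{\star\star}$ with respect to $g_{2,\vw^{\star}}$ via monotonicity, and optimality via the non-decreasing objective $\psi$.
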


Note that without loss of generality, \eqref{eqn:prob_minPow} can be easily translated to the power minimization problem over $\vqt$  by substituting \eqref{eqn:transformationP} into \eqref{eqn:prob_minPow} and replacing $\vf'_{\vw^{\star}}$ with $\vfo_{\vw'}$. 
%
%

\section{Algorithm for Joint Optimization}\label{sec:algorithm}
Now we provide an algorithm for joint optimization of bandwidth allocation $\vw$ and power control $\vp$ per link, with respect to any fixed link association policy $\pi'\in\Pi$. Based on Prop. \ref{prop:solu_FP_load}, \ref{prop:fullload}, and \ref{eqn:prop_power_allo}, we can compute the locally optimum of $\left(\vw(\pi'), \vp(\pi')\right)$. 
%
In the following we explain in more detail the three main steps (S1, S2 and S3) of the algorithm. 
%
\vspace{4pt}
\\
{\it S1. Updating Bandwidth}
\vspace{3pt}
\\
The algorithm starts with optimizing the bandwidth allocation $\vw$, given an initial PSD $\vp'$.
Prop. \ref{prop:solu_FP_load} provides the optimal solution $\vw'$ in the sense of maximizing $\lambda$ for any fixed $ \vp'$. The algorithm converges to a solution $\vw'$, satisfying $\max\{g_{1}(\vw'), g_{2, \vp'}(\vw')\} = 1$, i.e., either $g_{1}(\vw') = 1$,  or $g_{2, \vp'}(\vw') =1$, or both. Therefore, it remains to consider the following three cases
\begin{itemize}
\item[(1)]  $g_{1}(\vw') < 1$ and $g_{2, \vp'}(\vw')=1$
\item[(2)]  $g_{1}(\vw') = 1$ and $g_{2, \vp'}(\vw')<1$
\item[(3)]  $g_{1}(\vw') = 1$ and $g_{2, \vp'}(\vw')=1$
\end{itemize}
Note that once the third condition is achieved, $(\vw',\vp')$ is a local optimum. In contrast,  in the first case and the second case the algorithm is designed to further improve the utility by proceeding with S2 and S3 (see Algorithm \ref{algo:JointOptimization}), respectively.
\vspace{4pt}
\\
{\it S2. Power Scaling to Achieve The Full Load Condition}
\vspace{3pt}
\\
The first condition leads to the power scaling step as described in Prop. \ref{prop:fullload}. At this step, power scaling \eqref{eqn:FP_scalePow} and bandwidth updating \eqref{eqn:FP_converge} are performed iteratively, until the solution $(\vp', \vw')$ converges and satisfies $g_{1}(\vw') = 1$ and $g_{2, \vp'}(\vw')\leq 1$.
\begin{itemize}
\item[(1)]
If $g_{2, \vp'}(\vw')= 1$, then $(\vp',\vw')$ is considered the local optimum. 
\item[(2)] If $g_{2, \vp'}(\vw')< 1$, then the algorithm moves to the power updating step S3. 
\end{itemize}
\vspace{4pt}
{\it S3. Updating Power Budget}
\vspace{3pt}
\\
As shown in Prop. \ref{eqn:prop_power_allo}, the power updating step improves the utility if $g_{2, \vp'}(\vw') <1$, where $(\vw',\vp')$ are derived from the bandwidth updating step S1. Therefore, the algorithm moves to S3 if either of the following conditions holds.
\begin{itemize}
\item[(1)] S1 returns $g_{1}(\vw') = 1$ and $g_{2, \vp'}(\vw')<1$, and the algorithm moves directly to S3.
\item[(2)] S1 returns $g_{1}(\vw') < 1$ and $g_{2, \vp'}(\vw')=1$, and the algorithm moves to S2. If S2 returns $g_{1}(\vw') =1$ and $g_{2, \vp'}(\vw')< 1$, then, algorithm further moves to S3.
\end{itemize}

\begin{Rem}[Selection of The Initial Point]
The initial point has in general a significant impact on the outcome of the algorithm. We use the transmit power budget defined in the 3GPP specification \cite{3GPPTS36213} as the reference to compute the initial PSD $\vp'$, such that the optimized solution of $(\vw, \vp)$ is guaranteed to provide a better performance than the standard configuration. The power spectral density in dBm (per RB) of link $l\in\overline{\Ks}$ is defined by $\text{PSD}_l = \min\{\text{PSD}_{\max}, \snr^{\text{tar}}_{l} + \text{P}_{\text{noise}}+\alpha\text{PL}_l\}$, where $\text{PSD}_{\max}$ denotes the maximum PSD, $\snr^{\text{tar}}_{l}$ is the open loop SNR target for the $l$th link,  $\text{P}_{\text{noise}}$ is the noise PSD in the receiver, $\alpha$ is the pathloss compensate factor, and $\text{PL}_l: = \text{PL}_{b_l, l}$ is the pathloss estimate of the link $l$ served by BS $b_l$.  
%
\label{rem:initialSelection}
\end{Rem}


\begin{algorithm}[t]
\SetKwData{band}{$\vw'$}
\SetKwData{pow}{$\vp'$}
\SetKwData{policy}{$\pi'$}
\SetKwFunction{UpdateBandwidth}{UpdateBandwidth}
\SetKwFunction{ScalePower}{ScalePower}
\SetKwFunction{UpdatePower}{UpdatePower}
\SetKwFunction{MinimizePower}{MinimizePower}
\SetKwInOut{Input}{input}
\SetKwInOut{Output}{output}
\caption{Joint Allocation of Bandwidth and Power}
\Input{$\vp'\leftarrow \hat{\vp}\in\RP^{2K}$, $\vw'\leftarrow \hat{\vw}\in\RP^{2K}$, $\vw\leftarrow\ve{0}$, $\lambda\leftarrow 0$, $\pi'\in\Pi$, $\epsilon_1$, $\epsilon_2$, $\epsilon_3$}
\Output{$\vw^{\star}$, $\vp^{\star}$}
\BlankLine
Compute $\mA^{\ul}(\pi')$, $\mA^{\dl}(\pi')$, $\mVt(\pi')$ and $\mD(\pi')$\;
 \emph{\% S1: Update $\vw$ based on Prop.\ref{prop:solu_FP_load}}\;
     \While{$\|\vw'-\vw\|_{\infty}\geq\epsilon_2$}{
		 $\vw\leftarrow\vw'$\;
		\emph{\% Fixed point iteration \eqref{eqn:FP_converge}}\;
      \band $\leftarrow$ \UpdateBandwidth{$\vp', \vw$}\;
     }
		 \emph{\% S2: Update $\vw$ to achieve full load  based on Prop.\ref{prop:fullload}}\;
 \If{$g_{1}(\vw')<1  \&  g_{2, \vp'}(\vw')=1$}{
\While{$g_{1}(\vw')<1$}{
$\vp\leftarrow\vp'$\;
\emph{\% Power scaling in \eqref{eqn:FP_scalePow}}\;
 \pow $\leftarrow$ \ScalePower{$\vw', \vp$}\;
     \While{$\|\vw'-\vw\|_{\infty}\geq\epsilon_2$}{
		 $\vw\leftarrow\vw'$\;
		\emph{\% Fixed point iteration \eqref{eqn:FP_converge}}\;
      \band $\leftarrow$ \UpdateBandwidth{$\vp', \vw$}\;
     }
 }
}
 \emph{\% S3: Update $\vp$}\;
 \If{$g_{1}(\vw')=1  \&  g_{2,\vp'}(\vw')<1$}{
$\vp\leftarrow\ve{0}$\;
		 \While{$\|\vp'-\vp\|_{\infty}\geq\epsilon_3$}{
$\vp\leftarrow\vp'$\;
\emph{\% Fixed point iteration \eqref{eqn:FP_converge_fixp}}\;
      \pow $\leftarrow$ \UpdatePower{$\vw',\vp$}\;
}
}
$\vw(\pi')\leftarrow \vw'$; $\vp(\pi')\leftarrow \vp'$; $\lambda(\pi')\leftarrow\lambda'$\; 
\label{algo:JointOptimization}
\end{algorithm}

\section{Numerical Results}\label{sec:simu}
In this section, we verify the propositions presented in Section \ref{sec:ResourceAllo} and \ref{sec:PowerControl},  show the convergence of Algorithm \ref{algo:JointOptimization},  and compare the performance with the proposed algorithm to the conventional resource allocation schemes under different association policies presented in Section \ref{subsec:LinkAssociation} through simulations.

\subsection{Simulation Parameters}
%
To obtain practically relevant results, we study the real-world scenario as shown in Fig. \ref{fig:mapBerlin}. This map shows the center of Berlin, Germany in the WGS
84 coordinate system. 
There are $81$ BSs, among which 45  of them are macro cell BSs ($1$ BS per sector) with directional antenna and maximum transmit power of $43$ dBm, while  36 of them are pico cell BSs with omni-directional antenna and maximum transmit power of $30$ dBm. 
We assume that a total bandwidth of $5$ MHz is subdivided into $25$ RBs of $12$ subcarriers each, and that the frequency reuse factor is $1$.
The color map refers to the pathloss in dB. For each pixel of $50 \times 50$m
size, the channel gain over all received downlink signals from the macro cell BSs is given according to the measured data of pathloss from \cite{MOMENTUM}. The pico cell BSs are randomly placed on the cell edge of the macro cells. Based on the 3GPP LTE model provided in \cite{3GPP36814}, we obtain the pathloss between the pico BSs and the UEs to compute $\mH_0$ (joint with the macro-to-UE pathloss), 
the pathloss between the BSs to compute $\mH_1$, and the pathloss between the mobile terminals to compute $\mH_2$. On top of this realistic pathloss, we implement uncorrelated fast fading characterized by Rayleigh distribution.
We assume reciprocal uplink and downlink channels. 

The users are uniformly randomly distributed in the playground. The maximum transmit power of the user terminal is $22$ dBm. We define $5$ service classes, with the downlink rate requirements of $[300, 25, 50, 10, 0.01]$ Mbit/s, and the corresponding uplink rate requirements of $[50, 50, 25, 10, 0.01]$ Mbit/s. These classes imply the following $5$ services: 1)  cloud service video and other digital service, 2) HD video/photo sharing, 3) high-resolution video and other digital services, 4) broadband data allowing video email and web surfing, and 5) text, voice or video messages.


\begin{figure}[t]
\centering
\includegraphics[width=1\columnwidth]{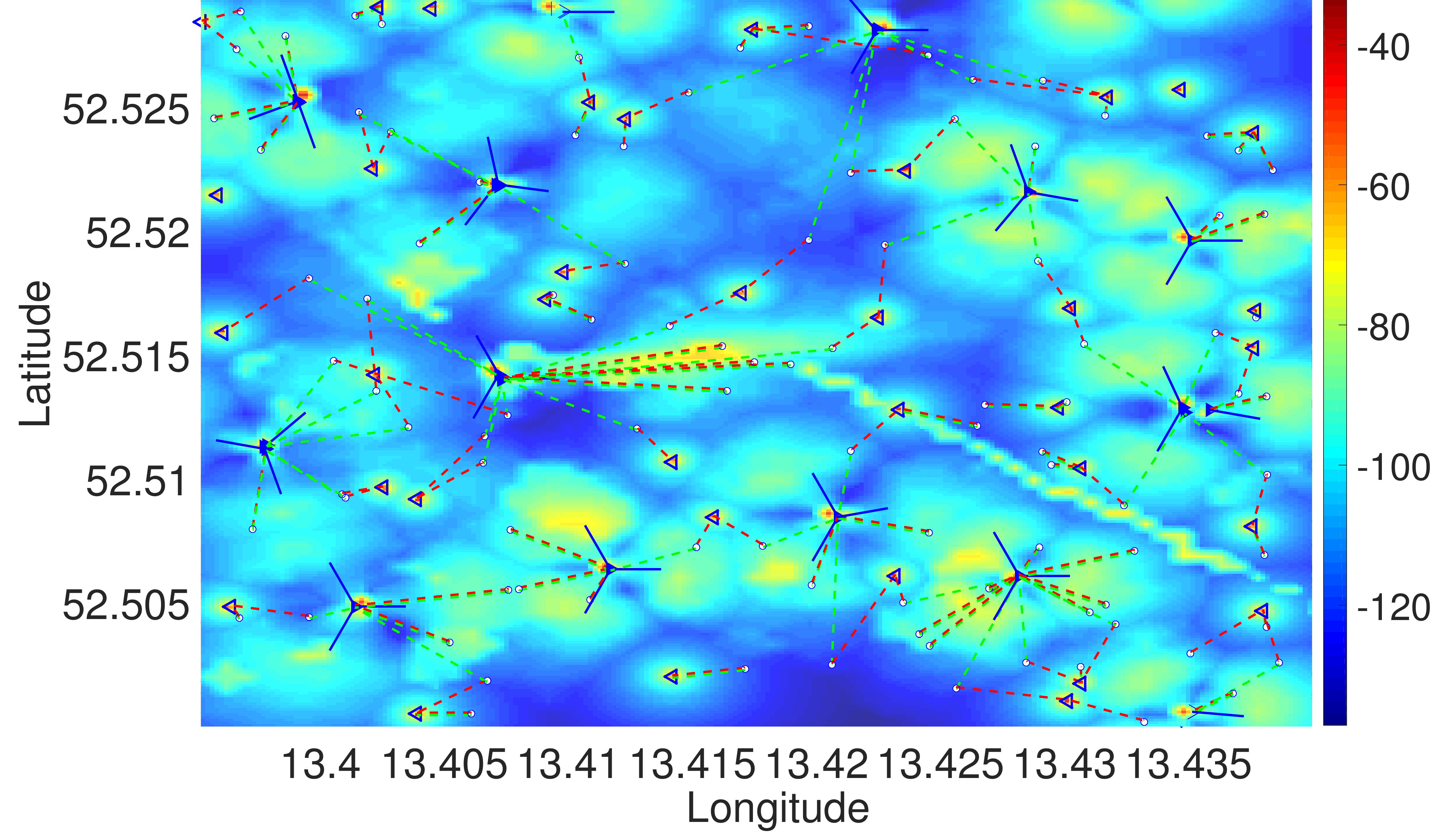}
\caption{DeUD-enabled wireless network. Macro BSs - blue solid triangles; pico cells - blue hollow triangles; UEs - white circle with blue edge; downlink association - green dashed line; uplink association - red dashed line.}
\label{fig:mapBerlin}
\end{figure}

\subsection{Convergence of The Algorithm}
Let us first examine the convergence behavior of the algorithms presented in Prop. \ref{prop:solu_FP_load}, \ref{prop:fullload} and \ref{eqn:prop_power_allo} (corresponding to S1, S2, and S3) in Algorithm \ref{algo:JointOptimization}, respectively. 
In Fig.\ref{fig:AlgorConverge} we verify the propositions and show the convergence of the algorithm \ref{algo:JointOptimization} with the fixed association policy DeUD\_P, at a single simulation snapshot (i.e., the users are assumed to be static within one time interval). 
The number of users is $K = 500$.
The desired numerical precisions are set to $\epsilon_i = 1e-7$, for $i = 1,2,3$.

Fig. \ref{fig:FP_updatingP} illustrates the convergence behavior of three successive steps S1, S2, and S3. The algorithm starts at step S1, where $g_1(\vw^{(0)})<1$ and $g_2(\vp^{(0)}, \vw^{(0)})<1$. The initial power $\vp^{(0)}$ is chosen as described in Rem. \ref{rem:initialSelection}, where $\text{PSD}_{\max} = 12$ dBm, $\snr^{\text{tar}} = 12.2$ dB, $\alpha = 1$, and $P_{\text{noise}} = -121.45$ dBm. The initial bandwidth allocation  is defined as $\vw^{(0)} = \ve{0}$.
After performing the fixed point iteration \eqref{eqn:FP_converge} at S1, it converges to the fixed point $\vw'$ such that $g_2(\vp^{(0)}, \vw') = 1$ while $g_1(\vw')$ is extremely small (approximately $0.01$). The algorithm moves therefore to S2 of power scaling. The algorithm at S2 converges to the point $(\vw'', \vp')$, where $g_1(\vw') =1$ and $g_2(\vw'', \vp') <1$, which causes the algorithm to move to S3. By the end of S3, the fixed point iteration \eqref{eqn:FP_converge_fixp} converges to $\vp''$ such that $g_1(\vw'') = g_2(\vw'',\vp'') = 1$, and the algorithm terminates. At each step, the iteration improves the desired utility $\lambda$ monotonically.

An interesting observation we have made concerning the relationship between per-cell power constraint and the feasible utility is illustrated in Fig. \ref{fig:PowerUpdating}. The motivation is to find out the tradeoff between the power consumption and the improvement of the utility. Fig. \ref{fig:PowerUpdating} shows the increase of the utility as we increase the power constraint factor $\theta$ ($\theta$ increases from $0.01$ to $1.01$ with step size of $0.01$), under different self-noise power $\sigma$. As shown in Thm. \ref{the:MSS_FP}, $\theta$ is the scaling factor of the monotonic constraint $g(\vx)$. As for S3, in particular, $\theta$ is scaling factor of the maximum power constraint such that $g_{2,\vw'}(\vp)\leq\theta$. For small value of $\sigma$ (i.e., in an interference-dominant system), small value of $\theta$ is sufficient for the feasible utility, and increase of $\theta$ only leads to minor increase of utility (blue and red curves for the noise power of $-121$ dBm and $-100$ dBm, respectively). Conversely, for the large value of $\sigma$ (i.e., in a noise-dominant system),  increase of $\theta$ has a stronger effect on improving utility (green and black curves for the noise power of $-80$ dBm and $-70$ dBm, respectively). The above observation can help us to choose a proper operation point, to provide a good tradeoff between the total power consumption and the desired utility.

Fig. \ref{fig:UEvsCellSpecific} and \ref{fig:EnergyEff} are provided to illustrate the performance of algorithms presented in Section \ref{subsec:percell_powercontrol} and \ref{subsec:efficient_powercontrol}. Fig. \ref{fig:UEvsCellSpecific} shows a case that restricting cell-specific DL power results in approximately $16\%$ degradation of utility achieved by UE-specific DL power. Fig. \ref{fig:EnergyEff} shows a specific example that for a certain snap shot of the network, over $90\%$ of power consumption can be saved if we only target at required utility $\lambda = 1$ instead of the maximum feasible $\lambda$, by performing the step of energy efficient power control presented in Section \ref{subsec:efficient_powercontrol}.

\begin{figure}[t]
\centering
\includegraphics[width=.8\columnwidth]{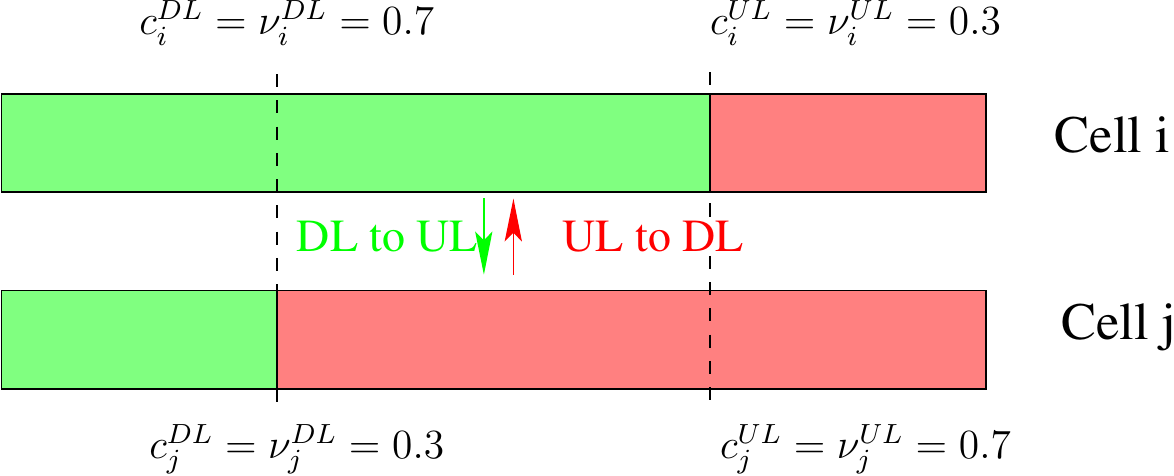}
\caption{One possible approach to estimate the overlap factor based on the historical load measurements. The overlap factor between downlinks served by cell $i$ and the uplinks served by cell $j$ is computed by $c_i^{\dl}c_j^{\ul} = 0.49$, while the overlap factor between the uplinks served by cell $i$ and the downlinks served by cell $j$ is computed by $c_i^{\ul}c_j^{\dl} = 0.09$.}
\label{fig:overlappingfactor}
\end{figure}

%
%
%

\subsection{Network Performance Evaluation}
\subsubsection{Selection of Association Policy}\label{subsubsec:SelectAssociation}
Now let us examine the performance of Algorithm  \ref{algo:JointOptimization} under different link association policies. 
The set of association policies $\Pi$, including CoUD, DeUD\_O (with variety of offsets) and DeUD\_P as introduced in Section \ref{subsec:LinkAssociation}, is defined as follows. Note that all macro cell BSs have maximum transmit power  of 43 dBm, while all small cell BSs of $30$ dBm. Thus, by setting $\text{offset}^{\text{UL}}_n=13$ dB for $n$ as small cell BS, the policy DeUD\_O is equivalent to DeUD\_P, while by setting $\text{offset}^{\text{UL}}_n = 0$ for all $n\in\Ns$, the policy DeUD\_O is equivalent to CoUD. The set of policy $\Pi$ is then defined as a set of DeUD\_O policies with offsets $\{0, 1,3, 5, \ldots, 51\}$ of the small cell BSs in UL, where $0$ corresponding to CoUD and $13$ corresponding to DeUD\_P.

Fig. \ref{fig:DistributionOffset} shows the average performance of the algorithm under each policy $\pi\in\Pi$ using the Monte Carlo techniques. We run 500 independent tests,  with uniform user distribution of 100 static users in each test. Fig. \ref{fig:probselectOff} shows the percentage of the counts that a fixed policy provides the utility among the top three maximum utilities achieved by all policies. Fig. \ref{fig:UtilityCI_UE100} shows the average utility of a fixed policy over the $500$ tests (the high value of utility is due to the lower number of the users compared to Fig. \ref{fig:AlgorConverge}). The following two observations are made. 1) Proper selection of DeUD policy can achieve approximately $2\times$ improvements on desired utility, compared against CoUD. 2) Although DeUD\_P is not always the best policy that provides maximum utility, it has a high chance to provide relatively good performance (approximately $73\%$ of counts among the top three maximum utility).
Thus, in case the operator wants to save the computational cost of exhaustive searching for optimal association policies, always selecting DeUD\_P provides a suboptimal compromises. However, we shall remind that in many cases, DeUD\_P is not the best association policy with respect to maximizing the desired utility, as shown in the two examples of the single trial in Fig. \ref{fig:UECurve_1} and Fig. \ref{fig:UECurve_2} respectively.

\subsubsection{Effects of Overlapping Uplink/Downlink Frequency Bands}
Note that in Section \ref{subsubsec:SelectAssociation}, the frequency band allocation follows the rule that only {\it partial overlap} between UL/DL frequency band is allowed to mitigate the inter-link interference, as shown in Rem. \ref{Rem:pverlapfactor}. Computation of the overlap factor is provided by Appendix \ref{subsec:Overlap}. 
Since the overlap factor is estimated based on the historical measurements, the actual utility $\lambda$ derived using optimized $(\vp, \vw)$ may not be as high as the computed $\lambda$ in Algorithm \ref{algo:JointOptimization}. 
On the other hand, if {\it full overlap} is allowed (i.e., each transmission can be allocated to any of the RBs, regardless of whether it is in UL or DL), then, the overlap factor is one, and the utility achieved by Algorithm \ref{algo:JointOptimization} can be much lower due to the strong inter-link interference.

In Fig. \ref{fig:JoULDL} we show the utility achieved by our proposed joint UL/DL optimization algorithm (represented by \lq\lq Jo\rq\rq), with the strategy of partial or full overlap. The three subplots from left to right illustrate the utility when the association policies \lq\lq Best\rq\rq, \lq\lq DeUD\_P\rq\rq and  \lq\lq CoUD\rq\rq are applied, respectively. Policy \lq\lq Best\rq\rq \ denotes the policy where the offset provides the maximum value of $\lambda$, i.e., $\pi^{\star} = \argmax_{\pi\in\Pi}\lambda(\pi)$. 
For scenario of partial overlap, the blue dashed line expresses the optimized $\lambda$ computed with our algorithm, while the green and red solid lines express the actual $\lambda$ in UL and DL, respectively. Although the algorithm aims at achieving fair user-specific UL and DL utility, a small gap between the UL and DL utility can be observed due to the biased estimation of the overlap factor. 
For scenario of full overlap, the magenta solid line expresses the achieved $\lambda$ for both UL and DL. Because the interference coupling model in \eqref{eqn:virtualSINR} is accurate under the assumption of full overlap, there is no gap between the computed $\lambda$ and the actual achievable $\lambda$.

Furthermore, we make the following observations. 1) Using optimized $(\vw, \vp)$ based on estimated overlap factor, we can achieve the actual utility in DL only about $2\%-3\%$ lower than the computed maximum feasible $\lambda$ from the proposed algorithm, and in UL about  $10\%-30\%$ lower. 2) By regulating the frequency band allocated to UL and DL transmission with partial overlap, we achieve a $50\%-100\%$  increase in utility than allowing the full overlap. 3) By enabling UL and DL decoupling, we can achieve a two-fold increase in the utility, compared to CoUD. Although  DeUD\_P may not be the best association policies, it still provides $60\%-75\%$ increase. The same conclusion is reached by the analysis on association policies in Section \ref{subsubsec:SelectAssociation}.
\subsubsection{Comparison against QoS-Based Proportional Fairness}
We use the proportional fairness (PF) algorithm as a baseline for evaluating the utility benefits provided by our algorithm. To provide a fair comparison between the PF algorithm and our proposed algorithm, instead of the rate-based PF algorithm \cite{nguyen2006proportional}, we replace the rate with the metric of level of QoS satisfaction, i.e., $W_0w_lr_l/d_l$ for link $l\in\overline{\Ks}$ presented in \eqref{eqn:utility_lambda}. 
We run PF algorithm under default UL/DL bandwidth ratio under both association policies CoUD and DeUD\_P, to compare with the proposed joint UL/DL optimization algorithm. The default UL/DL bandwidth ratio is set to be $9:16$, i.e., out of $25$ RBs, $9$ of them are assigned for UL transmission while $16$ for DL transmission.

Fig. \ref{fig:comparisonPF} shows the performance comparison between our proposed algorithm and the PF algorithm under DeUD\_P and CoUD.  
Conventional PF algorithm achieves fairness in UL and DL independently, and the fixed ratio of UL/DL bandwidth ratio causes a large gap between the achievable utility in UL and DL. Our proposed Algorithm \ref{algo:JointOptimization} outperforms the PF algorithm, in the sense that it jointly optimizes the level of QoS satisfaction in UL and DL to the  best closing levels. The utility in UL achieves three-fold increase than the PF algorithm in both DeUD\_P and CoUD. We still observe a $20\%-50\%$ increase in DL utility in DeUD\_P, while in CoUD we sacrifice some DL utility to achieve a higher gain in UL. However, as more UEs are served in the system, even in CoUD we achieve better utility in both UL and DL than the QoS-based PF algorithm.  

Another observation in reference to Fig. \ref{fig:comparisonPF} is that, for both algorithms, by splitting the UL/DL access, the performance can be further improved by about $60\%-70\%$. It is worth mentioning that the gain of UL/DL decoupling is not as high as expected in \cite{boccardi2015decouple,elshaer2014downlink} (more than two-fold increase). Our explanation is that although the strength of the useful signal is increased by offloading more uplinks in small cells, the received signal strength of the interference may also be increased because the small cells are normally located on the cell edge. Therefore, it increases the need for the joint UL/DL optimization algorithm allowing flexible UL/DL bandwidth ratio, as we proposed in Algorithm \ref{algo:JointOptimization}.

%


\section{Conclusion}\label{sec:concl}
We studied the utility maximization problem for the uplink and downlink decoupling-enabled HetNet, to jointly optimize the uplink and downlink bandwidth allocation and power control, under different association policies. The utility is modeled as the minimum level of the QoS satisfaction, to achieve fair service-centric performance. We develop a general model of inter-cell interference, that includes inter-link interference between uplink and downlink, with properties of power coupling and load coupling. Based on the interference model, we develop a three-step optimization algorithm  using the fixed point approach for nonlinear operators with or without monotonicity. The algorithm benefits from the user-centric context-aware communication environment in 5G networks, adapts the bandwidth allocation and power spectral density according to the channel condition and traffic demand in both UL and DL, and achieves jointly optimized utility in both UL and DL. Numerical results show that the performance of our algorithm outperforms the QoS-based proportional fairness algorithm, and it is robust against heavily loaded system with high traffic demand.

\appendix
\label{appendix}

\subsection{Approximation of Overlap Factor}\label{subsec:Overlap}

One possible method is to compute the overlap factor proportional to the fraction of the overlapping band. For example, the cell-pairwise directional overlap factor $o_{i,j}^{\text{X}\leftarrow \text{Y}}$  for $\text{X}, \text{Y}\in\{\ul, \dl\}$ and $i,j\in\Ns, i\neq j$ can be define by $o_{i,j}^{\text{X}\leftarrow \text{Y}}:= \max\{0, (\nu_j^{\text{Y}}+\nu_i^{\text{X}}-1)/\nu_i^{\text{X}}\}$ if $\text{X}\neq\text{Y}$, to express the probability that a RB in cell $i$ receives interference in UL (DL) from any  DL (UL) transmission  signal in cell $j$ (inter-cell inter-link interference); and $o_{i,j}^{\text{X}\leftarrow \text{Y}}:=\max\{1, \nu_j^{\text{Y}}/\nu_i^{\text{X}}\}$ if $\text{X}=\text{Y}$, to express the probability that a RB in cell $i$ receives interference in UL (DL) from any  UL (DL) transmission signal in cell $j$ (inter-cell intra-link interference). For example, assuming $\nu_i^{\dl} = 0.7, \nu_i^{\ul} = 0.3$ for cell $i$ and $\nu_j^{\dl}=0.3, \nu_j^{\ul} = 0.7$ (as shown in Fig. \ref{fig:overlappingfactor}), we have $o_{i,j}^{\dl\leftarrow\ul} = \max\{0, (\nu_j^{\ul}+\nu_i^{\dl}-1)/\nu_i^{\dl}\} = \max\{(0.7+0.7-1)/0.7, 0\} \approx 0.57$, while $o_{ij}^{\ul\leftarrow\dl} = \max\{0, (\nu_j^{\dl}+\nu_i^{\ul}-1)/\nu_i^{\ul}\} = 0$. Let us define the overlap matrix $\ma{O}^{\text{X}\leftarrow \text{Y}}:=(o_{i,j})^{\text{X}\leftarrow \text{Y}}\in[0,1]^{N\times N}$,  for $\text{X}, \text{Y}\in\{\ul, \dl\}$. To transform  $\ma{O}^{\text{X}\leftarrow \text{Y}}$ to the per-link basis matrix (between the UL and DL), we define   $\tilde{\ma{O}}^{\text{X}\leftarrow \text{Y}}: = (\mA^{\text{X}})^T\ma{O}^{\text{X}\leftarrow \text{Y}}\mA^{\text{Y}}$.
The cross-link coupling matrix is then modified by computing the Hadamard product (element-wise product) of $\mVt^{\text{X}\leftarrow \text{Y}}$ and $\tilde{\ma{O}}^{\text{X}\leftarrow \text{Y}}$, for $\text{X}, \text{Y}\in\{\ul, \dl\}$.


Unfortunately, the fraction of the overlapping bands depends on the cell-specific loads $\ve{\nu}^{\ul}$ and $\ve{\nu}^{\dl}$, which further depend on the  dynamic UL and DL resource allocation $\vw$ (as the variable to be optimized in Prob. \ref{prob:jointOpt}). Thus, introducing such a modification dramatically complicates optimization problem. 

A compromise approach is to use the historical measurements of load $\ve{\nu}^{\ul}$ and $\ve{\nu}^{\dl}$ as estimates to compute the {\it cell-pairwise overlap factor} $o_{ij}^{\text{X}\leftarrow \text{Y}}$ for $\text{X}, \text{Y}\in\{\ul, \dl\}$, $i,j\in\Ns$ as described above. 

An alternative to the {\it cell-pairwise overlap factor} $o_{ij}^{\text{X}\leftarrow \text{Y}}$ is to define a {\it cell-specific overlap factor} $c_i^{\text{X}}$, for $\text{X}\in\{\ul, \dl\}$, $i\in\Ns$ to express how likely a transmission in cell $i$ causes inter-link interference to the transmission in another cell, while the computation of intra-link overlap factor remains the same as the approach above. This approach is more error-tolerant in the sense that it does not return zero probability for inter-cell inter-link interference. 
%
%
We define two vectors with constant values $\vc^{\ul}\in [0,1]^N$ and $\vc^{\dl}\in [0,1]^N$, which can be chosen proportional to the historical measurements of $\ve{\nu}^{\ul}$ and $\ve{\nu}^{\dl}$, respectively.
Further we can modify the cross-link coupling matrix by defining $\mV^{\dlul} \coloneqq  (\mA^{\ul})^T\diag(\vc^{\ul})\mH_1\diag(\vc^{\dl}){\mA^{\dl}}$, and $\mV^{\uldl}  \coloneqq  \diag\left((\mA^{\dl})^T\vc^{\dl}\right)\mH_2\diag\left((\mA^{\ul})^T\vc^{\ul}\right)$, such that the coupling between UL and DL is proportional to the multiplication of the cell UL and DL overlap factors. 
For example, the overlap factor between the downlinks in cell $i$ and the uplinks in cell $j$ is proportional to $c^{\dl}_i c^{\ul}_j$ as shown in Fig. \ref{fig:overlappingfactor}.

\subsection{Standard Interference Function}\label{subset:SIF}
\begin{definition}
A vector function $\vf:\RN^k\to \RP^k$ is said to be a standard interference function (SIF) if the following axioms hold:
\begin{itemize}
\item[1.] (Monotonicity) $\vx\leq \vy$ implies $\vf(\vx)>0\leq\vf(\vy)$
\item[2.] (Scalability) for each $\alpha>1$, $\alpha\vf(\vx)>\vf(\alpha\vx)$ 
\end{itemize}
\label{def:SIF}
\end{definition}
The original definition of standard interference function is stated in \cite{Yates95a}, which also requires positivity. In Definition \ref{def:SIF} we drop the positivity $\vf(\vx)>0$ for $\vx\in\RN^k$ because it is a consequence of the other two properties \cite{leung2004convergence}. 
\begin{lemma}[Selected Properties of SIF \cite{Yates95a}]
Let $\vf:\RN^k\to\RP^k$ be a SIF. Then 
\begin{itemize}
\item[1.] There is at most one fixed point $\vx\in \fix(\vf) \coloneqq \{\vx\in\RP^k|\vx = \vf(\vx)\}$.
\item[2.] The fixed point exists if and only if there exists $\vx'\in\RP^k$ satisfying $\vf(\vx')\leq\vx'$.
\item[3.] If a fixed point exists, then it is the limit of the sequence $\{\vx^{(n)}\}$ generated by $\vx^{(n+1)} = \vf(\vx^{(n)})$, $n\in\NN$, where $\vx^{(1)}\in\RN^k$ is arbitrary. If $\vx^{(1)} = \ve{0}$, then the sequence is monotonically increasing (in each component). In contrast, if $\vx^{(1)}$ satisfies $\vf(\vx^{(1)})\leq\vx^{(1)}$, then the sequence is monotonically decreasing (in each component). 
\end{itemize}
\label{lem:Prop_SIF}
\end{lemma}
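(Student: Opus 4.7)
The plan is to prove the three properties in order, each relying only on the two SIF axioms (monotonicity and scalability) together with the positivity consequence $\vf(\vx)>\ve{0}$ for all $\vx\in\RN^k$.

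For uniqueness (property 1), I would argue by contradiction. Suppose $\vx,\vy\in\fix(\vf)$ are distinct. Since any fixed point must lie in $\RP^k$ (otherwise positivity of $\vf$ is contradicted), the quantity $\beta := \max_i y_i/x_i$ is well-defined and positive, and by construction $\beta\vx\geq\vy$ with equality in at least one coordinate $i^\star$. If $\beta>1$, scalability yields $\beta\vf(\vx) > \vf(\beta\vx)$, i.e., $\beta\vx > \vf(\beta\vx)$; monotonicity applied to $\beta\vx\geq\vy$ then gives $\vf(\beta\vx)\geq\vf(\vy)=\vy$. Chaining these, $\beta\vx > \vy$ holds strictly in every coordinate, contradicting the equality at $i^\star$. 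Hence $\beta\leq 1$, i.e., $\vy\leq\vx$; swapping the roles of $\vx$ and $\vy$ yields $\vx\leq\vy$, so $\vx=\vy$.

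For existence (property 2), the forward direction is trivial since any fixed point $\vx^\star$ satisfies $\vf(\vx^\star)\leq\vx^\star$. For the converse, I would initialize the iteration at $\vx^{(1)}=\vx'$. By hypothesis $\vx^{(2)}=\vf(\vx^{(1)})\leq\vx^{(1)}$, and monotonicity propagates the inequality by induction to give $\vx^{(n+1)}\leq\vx^{(n)}$ for all $n$. The sequence is componentwise nonincreasing and bounded below by $\ve{0}$, hence converges to some limit $\vx^\star\geq\ve{0}$. A short continuity argument — from $\vf(\vx)\leq\vf((1+\epsilon)\vx)\leq(1+\epsilon)\vf(\vx)$, which combines monotonicity and scalability, one deduces that $\vf$ is componentwise continuous — closes the loop and yields $\vx^\star=\vf(\vx^\star)$.

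For convergence (property 3), starting from $\vx^{(1)}=\ve{0}$, positivity gives $\vx^{(2)}=\vf(\ve{0})>\ve{0}=\vx^{(1)}$, and monotonicity then propagates this into a nondecreasing sequence. Since $\ve{0}\leq\vx^\star$, monotonicity also gives $\vx^{(n)}\leq\vx^\star$ for all $n$, so the iterates converge upward to the unique fixed point. For an arbitrary start $\vx^{(1)}\in\RN^k$, the plan is to sandwich the iterates between two canonical sequences: a lower sequence $\vy^{(n)}$ starting from $\ve{0}$, and an upper sequence $\vz^{(n)}$ starting from $\alpha\vx^\star$ for $\alpha\geq 1$ chosen large enough that $\alpha\vx^\star\geq\vx^{(1)}$. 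Scalability guarantees $\vf(\alpha\vx^\star)\leq\alpha\vf(\vx^\star)=\alpha\vx^\star$, so by property 2 the upper sequence is nonincreasing with limit $\vx^\star$ (uniqueness), while the lower sequence increases to $\vx^\star$ by the previous argument. Monotonicity then gives $\vy^{(n)}\leq\vx^{(n)}\leq\vz^{(n)}$ and the squeeze forces $\vx^{(n)}\to\vx^\star$; the two monotone special cases appear as immediate consequences. The main obstacle I anticipate is the silent use of continuity of $\vf$ when passing from a monotone limit to a fixed-point identity; I would make the $\epsilon$-scaling continuity lemma above explicit rather than taking it for granted, since it is the only nonalgebraic ingredient in the whole argument.
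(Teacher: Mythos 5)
Your proposal is correct, but note that the paper itself never proves this lemma: it is stated in the appendix purely as a citation of Yates' results, so there is no in-paper argument to compare against. What you have written is, in essence, a faithful reconstruction of the classical proof from \cite{Yates95a}: uniqueness via the maximal ratio $\beta=\max_i y_i/x_i$ and the scalability contradiction, existence via the monotonically decreasing iteration started from a feasible point, and global convergence via sandwiching the iterates between the increasing sequence from $\ve{0}$ and the decreasing sequence from $\alpha\vx^{\star}$ with $\alpha\vx^{\star}\geq\vx^{(1)}$. So the proposal supplies a self-contained proof where the paper relies on a reference, and every step is sound.

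One detail to make explicit when you write out the continuity lemma you flagged: the multiplicative sandwich $(1+\epsilon)^{-1}\vx\leq\vy\leq(1+\epsilon)\vx$ only makes sense, and only yields continuity, at strictly positive points, so the claim ``$\vf$ is componentwise continuous'' should be restricted to $\RP^k$. This costs nothing in your argument, because every monotone limit you pass through satisfies $\vx^{\star}\geq\vf(\ve{0})>\ve{0}$ (monotonicity applied to $\vx^{(n)}\geq\ve{0}$ bounds all iterates after the first below by $\vf(\ve{0})$), so the limit is an interior point and the sandwich applies; but without that remark the step from ``monotone bounded sequence converges to some $\vx^{\star}\geq\ve{0}$'' to ``$\vx^{\star}=\vf(\vx^{\star})$'' is not fully justified. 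With that sentence added, the proof is complete.
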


\subsection{Proof of Lemma \ref{lem:StandardInterference}}\label{subsec:proofSIF}
The essential steps of the proof follow those in the proof of \cite[Ex. 2]{Cavalcante14}. First we show that $f_{\vp', l}(\vw) \coloneqq d_l/\left(W_0 B\log(1+\sinr_l(\vw))\right)$ is positive and concave. 
Function $f_{\vp', l}(\vw)$ is positive concave, because of the following facts: \rmnum{1}) $h(x) \coloneqq  1/\log_2(1+1/x)$ is a concave function on $\RP$,  \rmnum{2}) composition of concave functions with affine transformations (see the interference term in \eqref{eqn:virtualSINR}) preserves concavity, and \rmnum{3}) a set of concave functions is closed under multiplication and addition. 
Then, because a positive concave function is proved to be a SIF in \cite[Prop. 1]{Cavalcante14}, $f_{\vp', l}$ is SIF. As a collection of $\{f_{\vp', l}\}$, the vector function $\vf_{\vp'}$ is SIF.
 %
\subsection{Proof of Theorem \ref{the:MSS_FP}}\label{subsec:proofThe_1}
Since the essential steps follow those in the proof of \cite[Th. 3.2]{nuzman2007contraction}, we describe only proof outlines and mention crucial lemmas in this paper, for lack of space. Using \cite[Lem. 3.3]{nuzman2007contraction}, we know that $\vh \coloneqq \vx/g(\vx)$ is non-expansive on $(\RP^k, \mu_s)$, where the metric $\mu_s$ is defined as $\mu_s(\vx,\vy) \coloneqq \max\limits_{i=1, \ldots, k}(\log(x_i/y_i))^{+}+\max\limits_{i=1, \ldots, k}(\log(y_i/x_i))^{+}$. Because $\vf$ is SIF, by virtue of \cite[Lem. 2.2]{nuzman2007contraction}, $\vps = \theta \vh\circ\vf = \theta\vf/(g\circ\vf)$ in \eqref{eqn:FP_converge} is shrinking (or contractive) with respect to $\mu_s$.

If $\vps$ is a contractive mapping on a compact metric space on $(\RP^k, \mu_s)$, there exists a unique fixed point $\vx\in\RP^k$ with $\vps(\vx)=\vx$ \cite[Th.5.2.3]{smart1980fixed}. In the following we show that $\vps$ is a mapping of a compact space to itself. For any input, since $g$ is homogeneous on $\RP^k$, we have $g\circ\vps = (\theta/g\circ\vf) \cdot (g\circ\vf) = \theta$. Because a monotonic vector function has bounded level sets, we have that $\vps(\vx)\leq \vb$ for some finite $\vb>\ve{0}$. With $\vps(\vx)\leq \vb$ and $\vf(\vx)\geq \vf(\ve{0})$ for all $\vx\in\RN^{k}$, we have $\vps^2(\vx) \geq\theta\vf(0)/(g\circ \vf(\vb))= \va > \ve{0}$, and we see that the range of $\vps^n$ falls inside the finite positive rectangle $\Rec(\va,\vb)$ for $n\geq 2$. Hence, there is exactly one eigenvector $\vx\in\RP^k$ to satisfy $\vx' = \rho' \vf(\vx')$ where the associate eigenvalue is given by $\rho' = \theta/(g\circ\vf(\vx'))$, such that $g(\vx') = g(\vps(\vx')) = \theta$. 

\subsection{Proof of Prop. \ref{prop:fullload}}\label{subsec:proofProp_2}
%
We will prove by induction that by using algorithm in Prop. \ref{prop:fullload}, the sequence $\lambda$ is monotonically increasing until $g_1(\vw) = 1$ is satisfied.

At the base step, suppose the solution to P.2a yields $\vw' = \lambda'\vf_{\vp'}(\vw')$ where $\lambda' \coloneqq  1/g_{\vp'}(\vw')$ and $g_{\vp'}(\vw') = \max\{g_1(\vw'), g_{2,\vp'}(\vw')\}$, with $g_1(\vw')<1$ and $g_{2, \vp'}(\vw') = 1$. 
Let us define $g_1(\vw')= a<1$ and $\vp''=a\vp'$. With fixed $\vp''$, using Theorem \ref{the:MSS_FP}, iteration \eqref{eqn:FP_converge} converges to a unique fixed point $\vw''$, satisfying 
\begin{align}
\vw'' & = \lambda''\vf_{\vp''}(\vw'')\label{eqn:prop_2_3}\\
\mbox{such that } & \max\{g_1(\vw''), g_2(\vp'', \vw'')\}=1\label{eqn:prop_2_4}
\end{align}


It is clear that $\vf_{\vp''}(\vw')<\vf_{\vp'}(\vw')=\vw'/\lambda'$, by dividing both the numerator and denominator by $a$ in \eqref{eqn:virtualSINR}, and substituting \eqref{eqn:virtualSINR} in \eqref{eqn:rate} and \eqref{eqn:prob_1_def_load}.  Now let us define $\vv' = \vw'/a >\vw'$. Moreover, knowing that $\vf_{\vp''}$ is also a SIF, we have $\vf_{\vp''}(\vv')=\vf_{\vp''}(\vw'/a)<\vf_{\vp''}(\vw')/a$ due to the scalability, that further leads to $\vf_{\vp''}(\vv')<\vf_{\vp''}(\vw')/a<\vf_{\vp'}(\vw')/a=\vw'/(a\lambda') = \vv'/\lambda'$. In other words, there exists $\vv'$ such that $\lambda'\vf_{\vp''}(\vv')<\vv'$, and $\vv'$ is a feasible point with respect to the SIF $\vf'_{\vp''}  \coloneqq  \lambda'\vf_{\vp''}$. Thus, starting from $\vv'$, the sequence of $\vv$ decrease monotonically to a unique fixed point (by using the third property of SIF stated in Lemma \ref{lem:Prop_SIF})
\begin{equation}
\vv''= \vf'_{\vp''}(\vv'')<\vf'_{\vp''}(\vv')<\vv'
\label{eqn:Prop_2_1}
\end{equation}
Due to the monotonicity and homogeneity of $g_1$ with respect to $\vw$, and the same properties of $g_2$ with respect to both $\vp$ and $\vw$, we have
\begin{align}
g_1(\vv'')& <g_1(\vv')=g_1(\vw'/a) =g_1(\vw') a=1\label{eqn:Prop_2_2a}\\
g_2(\vp'', \vv'')&<g_2(a\vp', \vv') = g_2(a\vp',\vw'/a)=1
\label{eqn:Prop_2_2b}
\end{align}

We prove $\lambda''>\lambda'$ by contradiction. Suppose $\lambda''\leq\lambda'$, then we have $\lambda''\vf_{\vp''}(\vv'')\leq\lambda'\vf_{\vp''}(\vv'')=\vv''$, using \eqref{eqn:Prop_2_1}. By defining $\vf''_{\vp''} \coloneqq \lambda''\vf_{\vp''}$ which is also a SIF, since $\vf''_{\vp''}(\vv'')\leq\vv''$, starting from $\vv''$, the sequence of $\vw$ is monotonically decreasing to the unique fixed point $\vv^\star$ satisfying $\vv^\star = \vf''_{\vp''}(\vv^\star) = \lambda''\vf_{\vp''}(\vv^\star)$. Because $\vv^{\star}$ is unique (by using the first and second properties of SIF stated in Lemma \ref{lem:Prop_SIF}), using \eqref{eqn:prop_2_3}, we have $\vw'' = \vv^\star\leq\vv''$, which further leads to $\max\{g_1(\vv''), g_2(\vp'', \vv'')\}\geq\max\{g_1(\vw''), g_2(\vp'', \vw'')\}=1$. This contradicts the inequalities \eqref{eqn:Prop_2_2a} and \eqref{eqn:Prop_2_2b}. Thus, we have that $\lambda''>\lambda'$ if $g_1(\vw')<1$.


For the further iteration step, using \eqref{eqn:prop_2_4}, it remains to consider cases in which $g_1(\vw'') = 1$, or $g_1(\vw'')<1, g_2(\vp'', \vw'')=1$. The former case directly leads to  $g_1(\vw'')=1$, and the algorithm stops at $\lambda''>\lambda'$. The latter case yields $g_1(\vw'')<1$, The proof above shows that the iteration step further increases $\lambda$, with scaled $\vp''' = g_1(\vw'')\vp''$.

\subsection{Proof of Prop. \ref{eqn:prop_power_allo}}\label{subsec:proofProp_3}
The solution to P.2a satisfies $\vp' = \lambda' \vf_{\vw'}(\vp')$ using the reformulation in \eqref{eqn:reformulate_w}. Since the variables $\vp$ and $\vw$ are interchangeable in $g_{2}$, we have $g_{2, \vp'}(\vw') = g_{2, \vw'}(\vp')$.

Therefore, if $g_{2, \vw'}(\vp')  = 1$, Theorem \ref{the:MSS_FP} implies that there is exactly one eigenvector $\lambda$ and associate eigenvector $\vp$ of $\vf_{\vw'}$ such that $g_{2, \vw'}(\vp') = 1$, and we have $\lambda'' = \lambda'$ and $\vp'' = \vp'$.

Then we consider the case when $g_{2, \vw'}(\vp')  < 1$. Because $\vp''$ is the optimal solution to P.2b, if we can find a $\hat{\vp}\in\RP^{2K}$ such that $\hat{\lambda} := \min_{l\in\overline{\Ks}} \hat{p}_l/f_{\vw', l}(\hat{\vp})$, $g_{2, \vw'}(\hat{\vp})\leq 1$ and $\hat{\lambda}>\lambda'$, then we have $\lambda''\geq\hat{\lambda}>\lambda'$. Thus, the remaining task is to find an arbitrary $\hat{\vp}$ that fulfills the above mentioned conditions. Let us define $\alpha = 1/g_{2, \vw'}(\vp')>1$ and $\hat{\vp} := a \vp'$. Then, we have 
\begin{equation*}
\hat{\lambda} = \min\limits_{l\in\overline{\Ks}} \frac{\alpha p'_l}{f_{\vw', l}(\alpha\vp')}>
\min\limits_{l\in\overline{\Ks}} \frac{\alpha p'_l}{\alpha f_{\vw', l}(\vp')} = \lambda'
\end{equation*}
The above inequality is due to the scalability of the SIF $\vf_{\vw'}$.

\section*{Acknowledgment}
We would like to thank Renato L.G. Cavalcante, Carl Nuzman and Paolo Baracca for the numerous technical discussions.

\begin{figure}[t]
    \centering
		    \begin{subfigure}[t]{1\columnwidth}
        \centering
        \includegraphics[width=1\columnwidth]{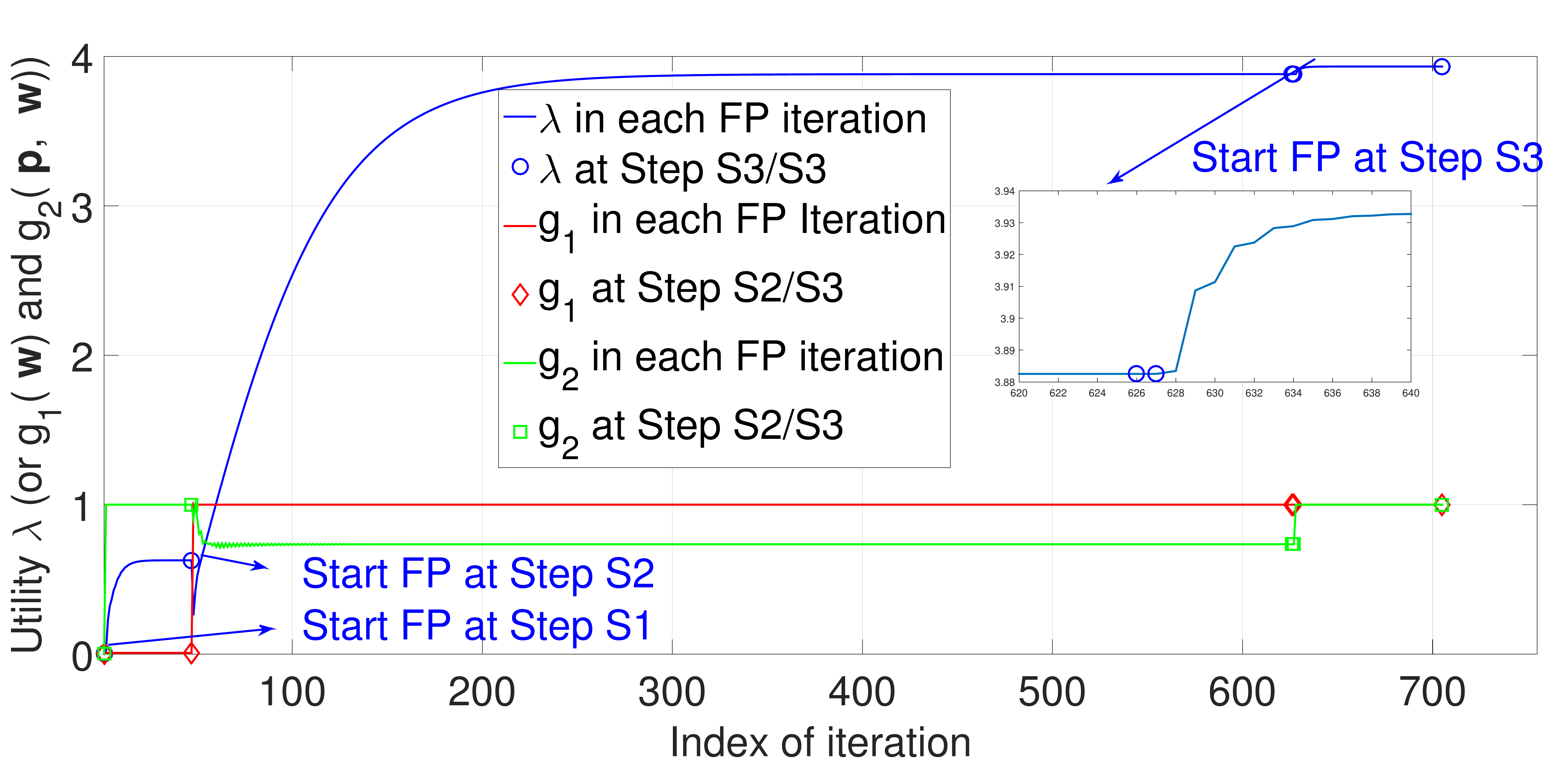}
        \caption{Convergence of Algorithm \ref{algo:JointOptimization}.}
 \label{fig:FP_updatingP}
    \end{subfigure}
    \begin{subfigure}[t]{1\columnwidth}
        \centering
        \includegraphics[width=1\columnwidth]{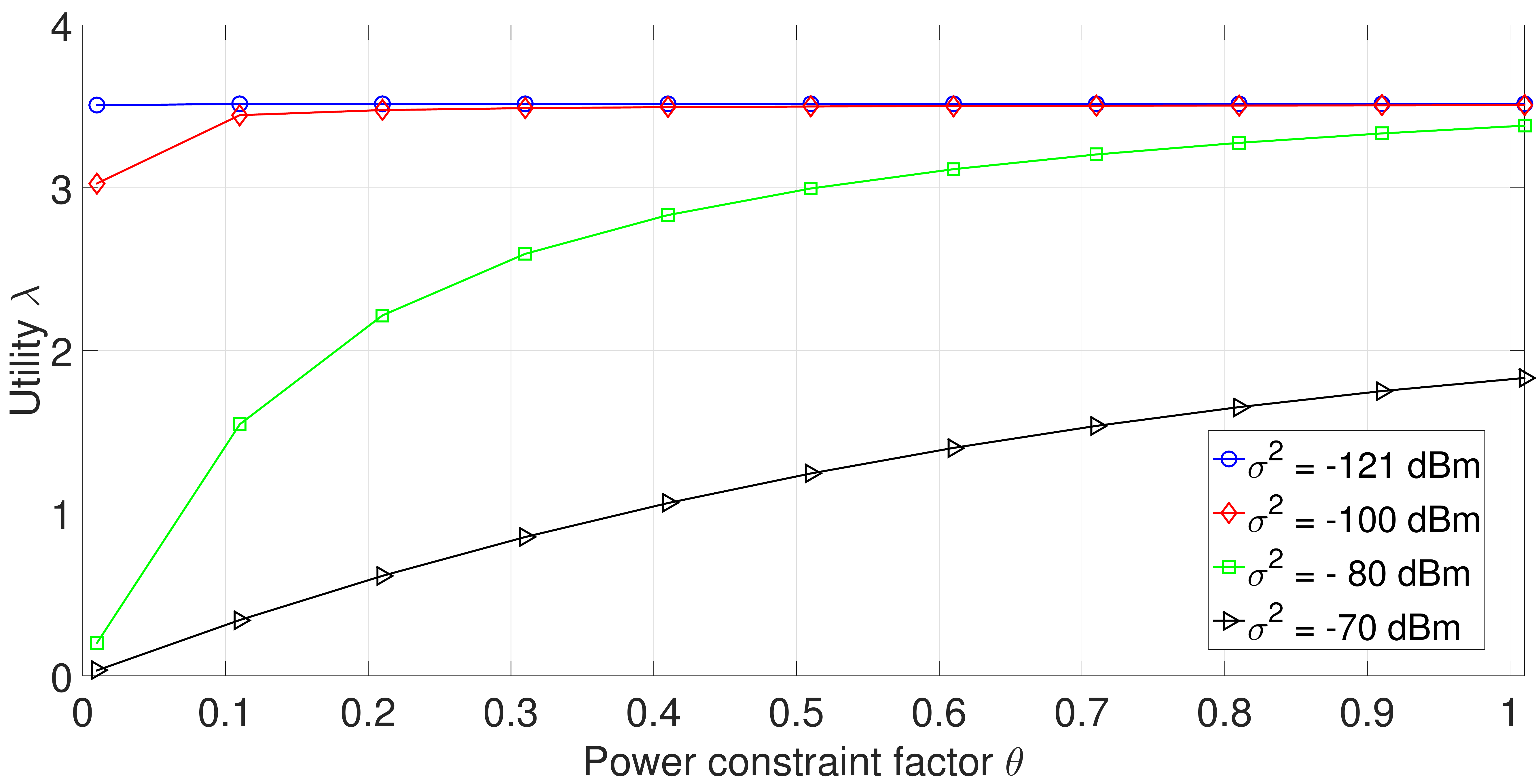}
        \caption{Dependence of optimized utility at S3 on $\theta$ and $\sigma^2$.}
 \label{fig:PowerUpdating}
    \end{subfigure}
    \begin{subfigure}[t]{1\columnwidth}
        \centering
        \includegraphics[width=1\columnwidth]{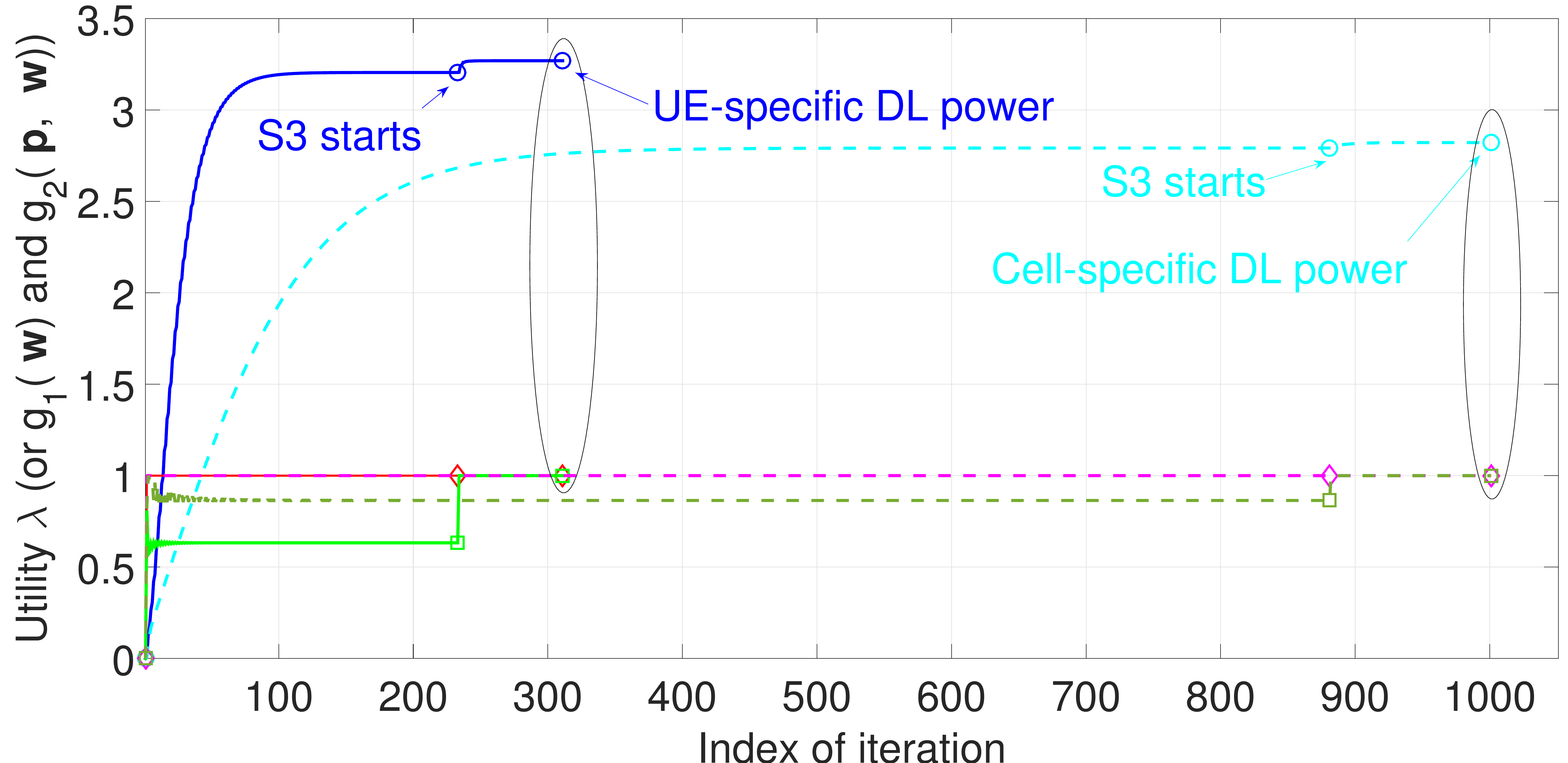}
        \caption{Comparison between UE-specific power control and cell-specific power control in DL. UE-specific: $\lambda$ - blue solid line, $g_1(\vw)$ - red solid line, $g_2(\vw, \vp)$ - green solid line. Cell-specific: $\lambda$ - cyan dashed line,  $g_1(\vw)$ - magenta dashed line, $g_2(\vw, \vp)$ - dark green dashed line.}
				\label{fig:UEvsCellSpecific}
    \end{subfigure}
		    \begin{subfigure}[t]{1\columnwidth}
        \centering
        \includegraphics[width=1\columnwidth]{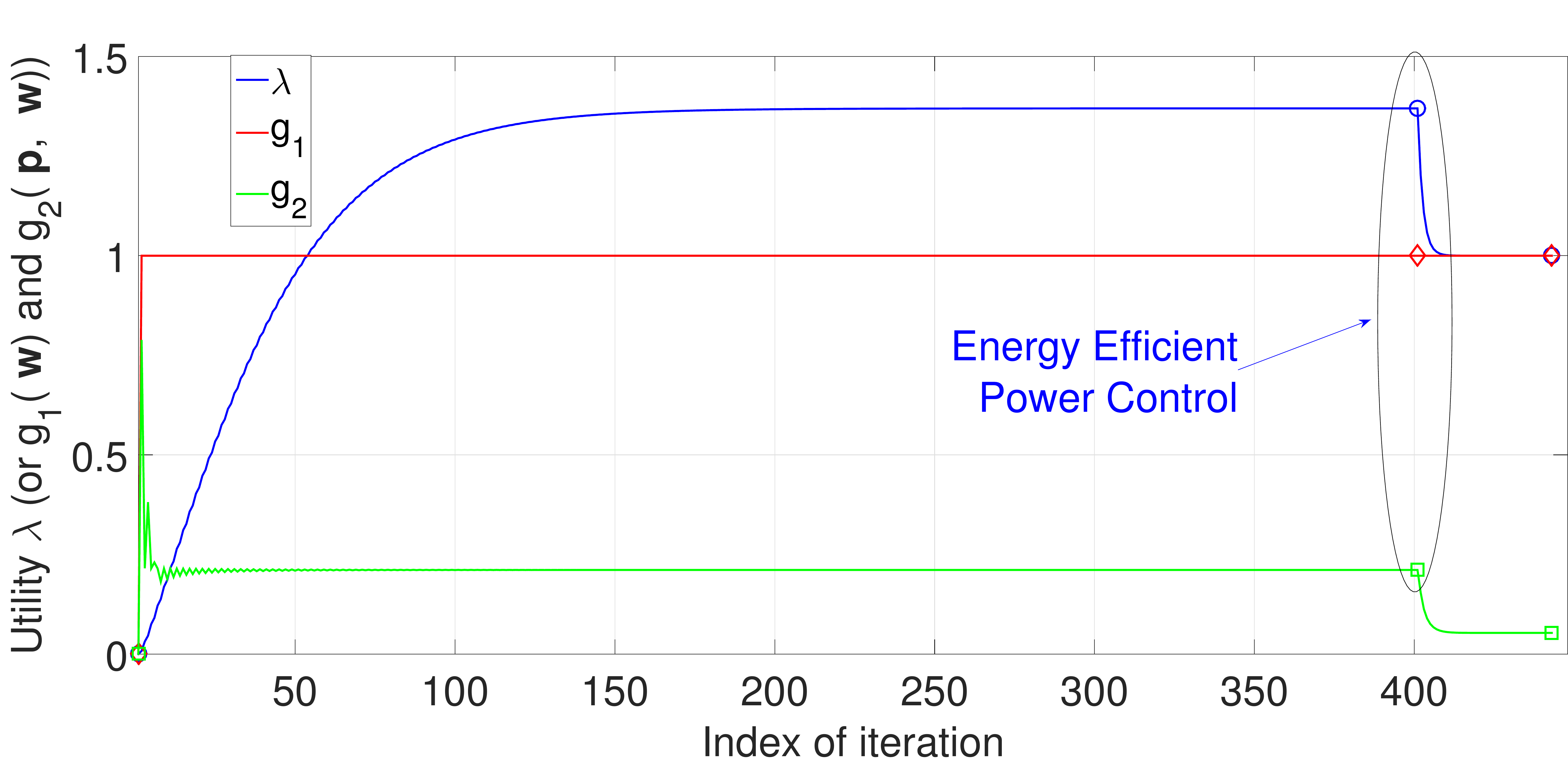}
        \caption{Energy efficient power control.}
				\label{fig:EnergyEff}
    \end{subfigure}
    \caption{Algorithm convergence ($K = 500$, DeUD\_P).}
		\label{fig:AlgorConverge}
\end{figure}

\begin{figure}[t]
    \centering
    \begin{subfigure}[t]{1\columnwidth}
        \centering
        \includegraphics[width=1\columnwidth]{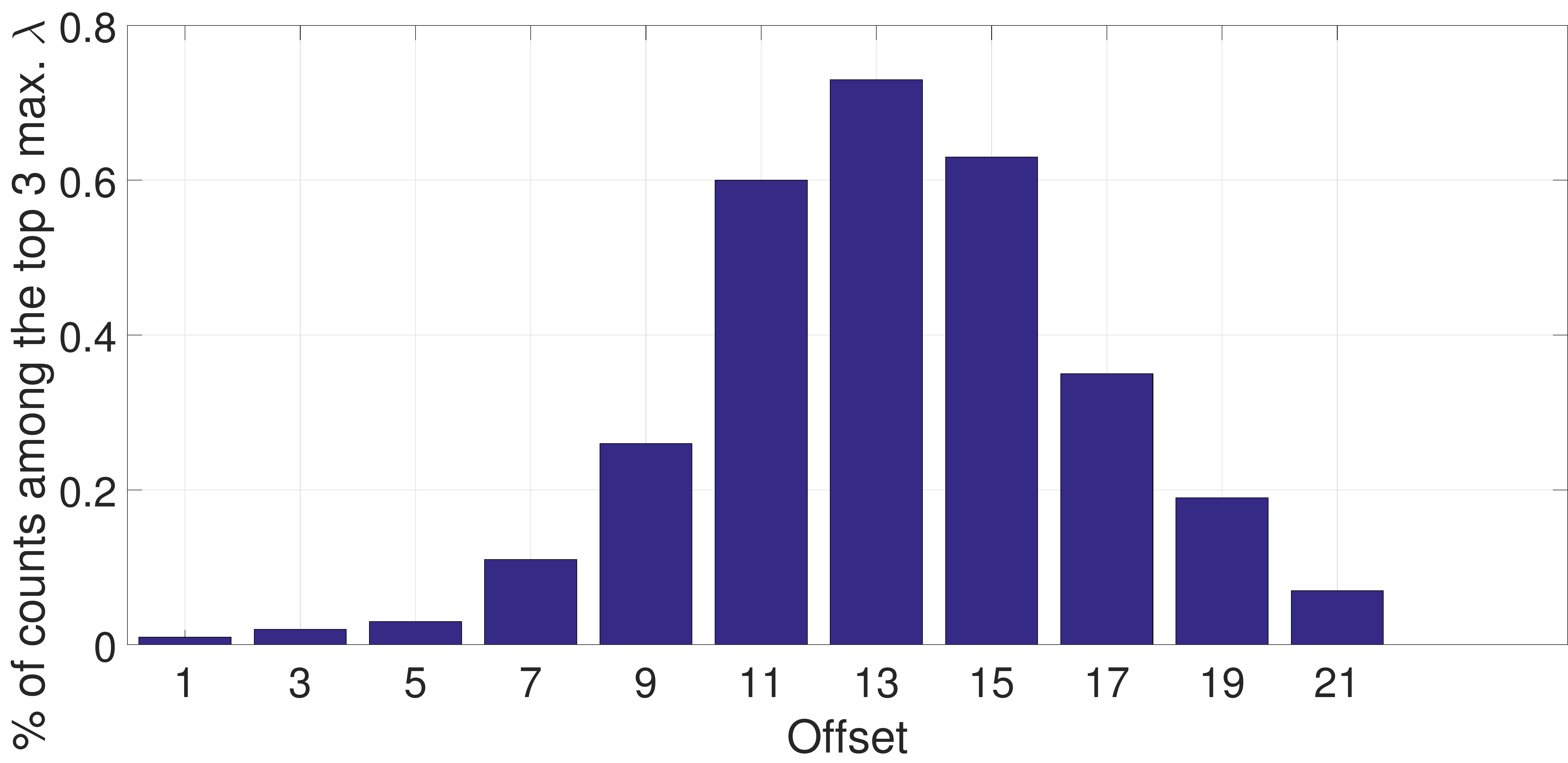}
        \caption{Percentage of counts that the optimized utility with respect to a fixed offset is among the top $3$ maximum values.}
 \label{fig:probselectOff}
    \end{subfigure}
				    \begin{subfigure}[t]{1\columnwidth}
        \centering
        \includegraphics[width=1\columnwidth]{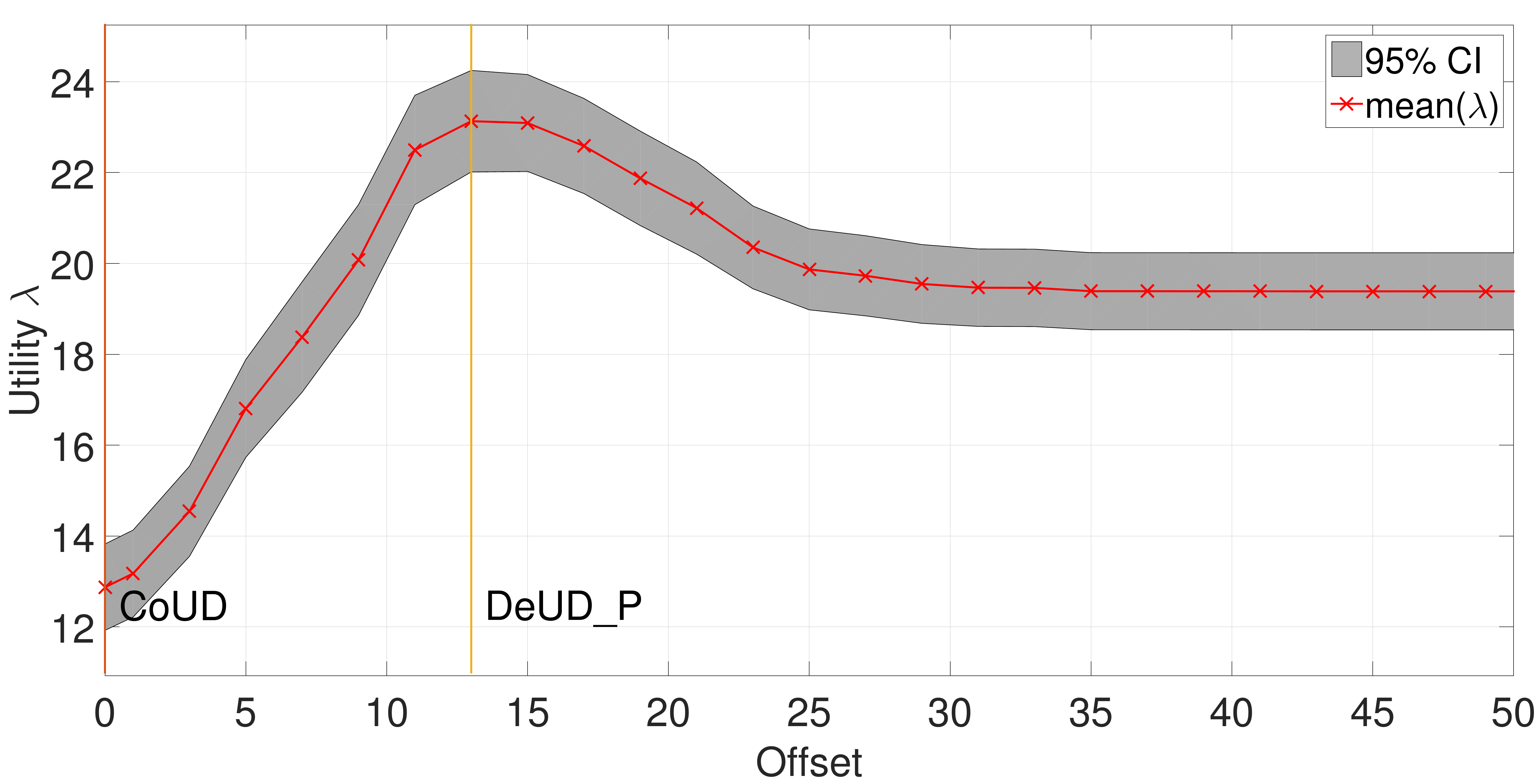}
        \caption{Average utility over 500 tests and the confidence interval for each association policy.}
 \label{fig:UtilityCI_UE100}
    \end{subfigure}
    \begin{subfigure}[t]{1\columnwidth}
        \centering
        \includegraphics[width=1\columnwidth]{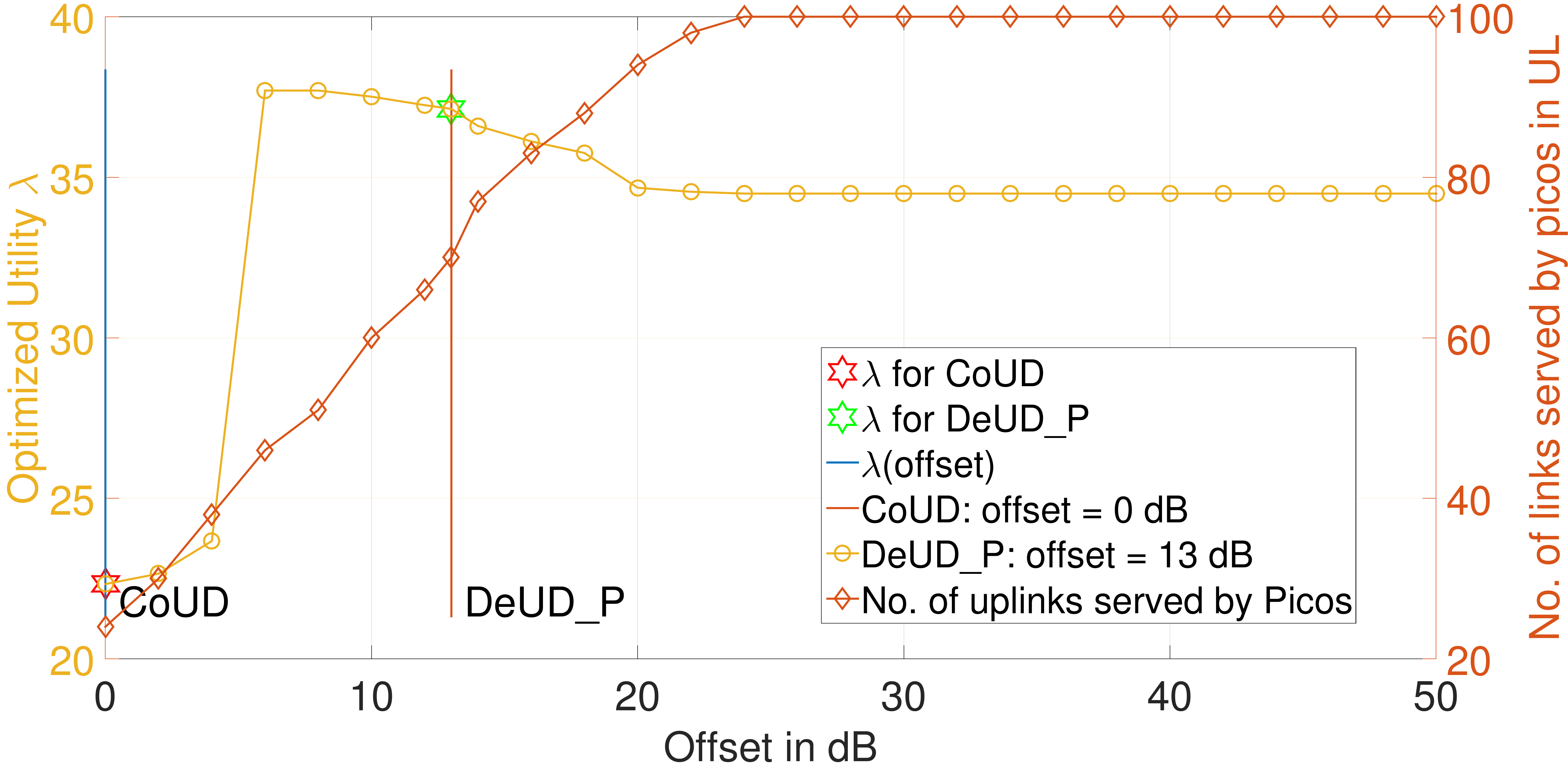}
        \caption{Example trial $\# 1$.}
				 \label{fig:UECurve_1}
    \end{subfigure}
		    \begin{subfigure}[t]{1\columnwidth}
        \centering
        \includegraphics[width=1\columnwidth]{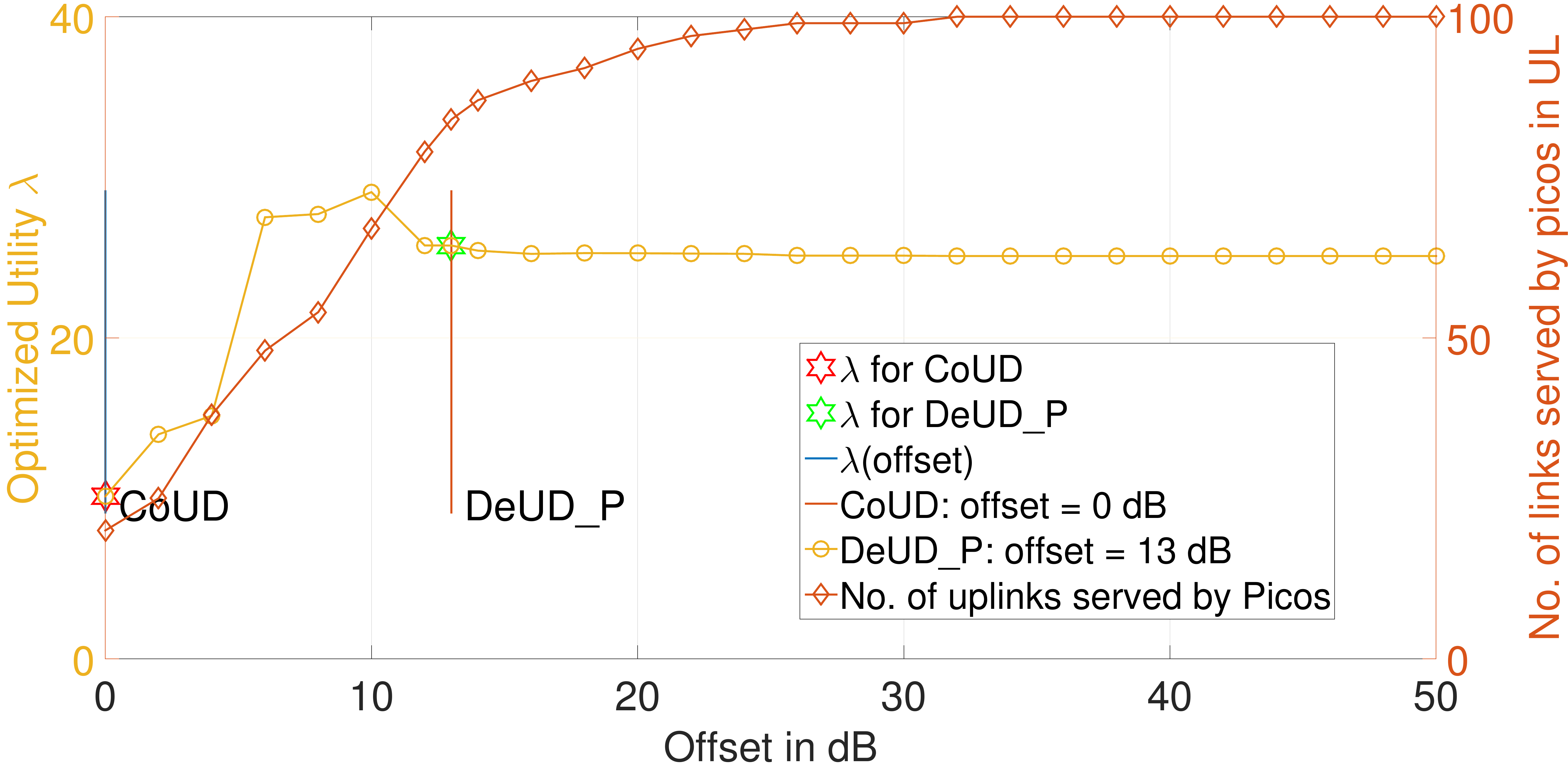}
        \caption{Example trial $\# 2$.}
				\label{fig:UECurve_2}
    \end{subfigure}
    \caption{Optimized utility depending on association policy ($K = 100$).}
		\label{fig:DistributionOffset}
\end{figure}

\begin{figure*}[t]
    \centering
    \begin{subfigure}[t]{2\columnwidth}
        \centering
        \includegraphics[width=1\columnwidth]{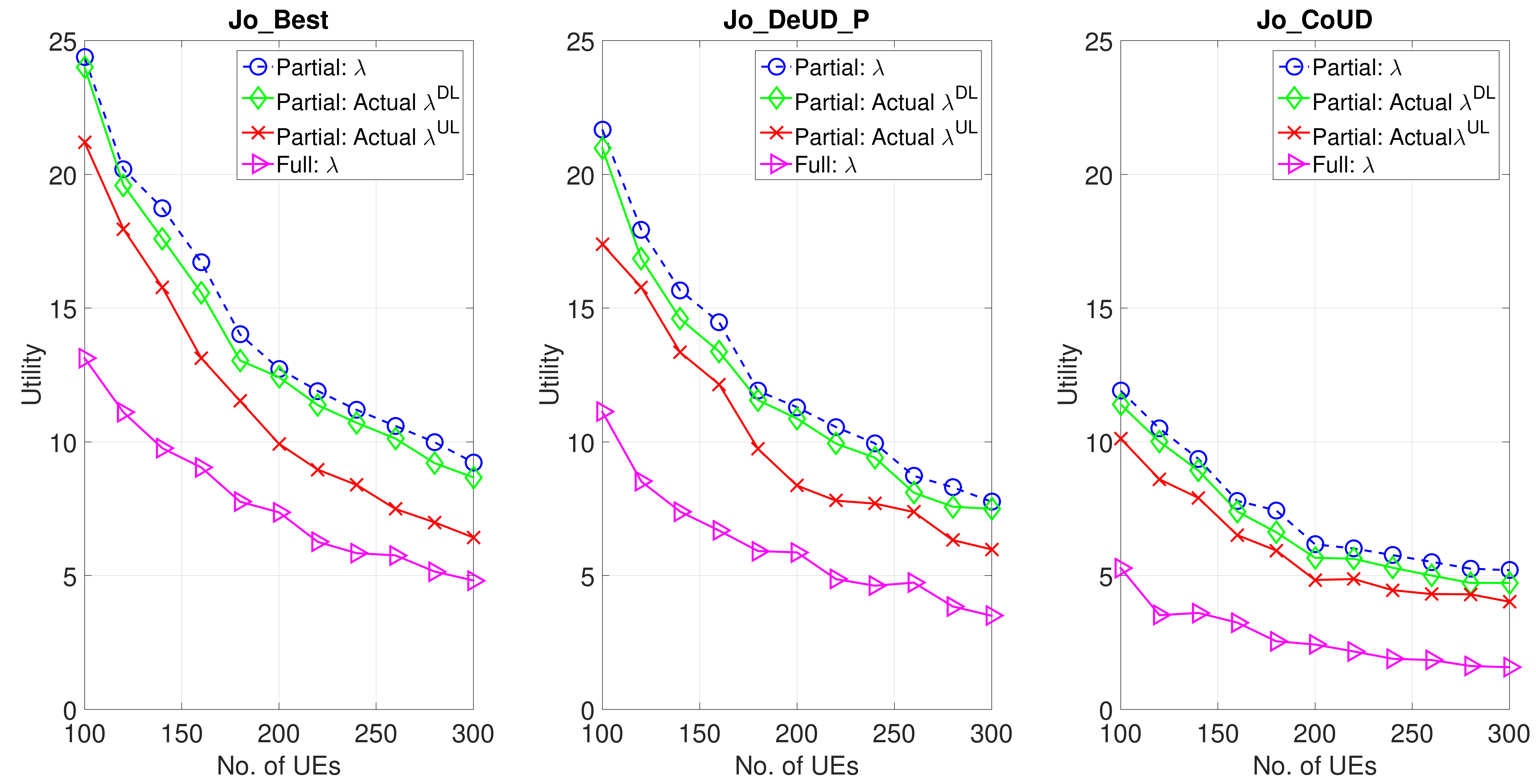}
        \caption{Utility achieved by the joint UL/DL optimization algorithm under different association policies.}
 \label{fig:JoULDL}
    \end{subfigure}
				    \begin{subfigure}[t]{2\columnwidth}
        \centering
        \includegraphics[width=1\columnwidth]{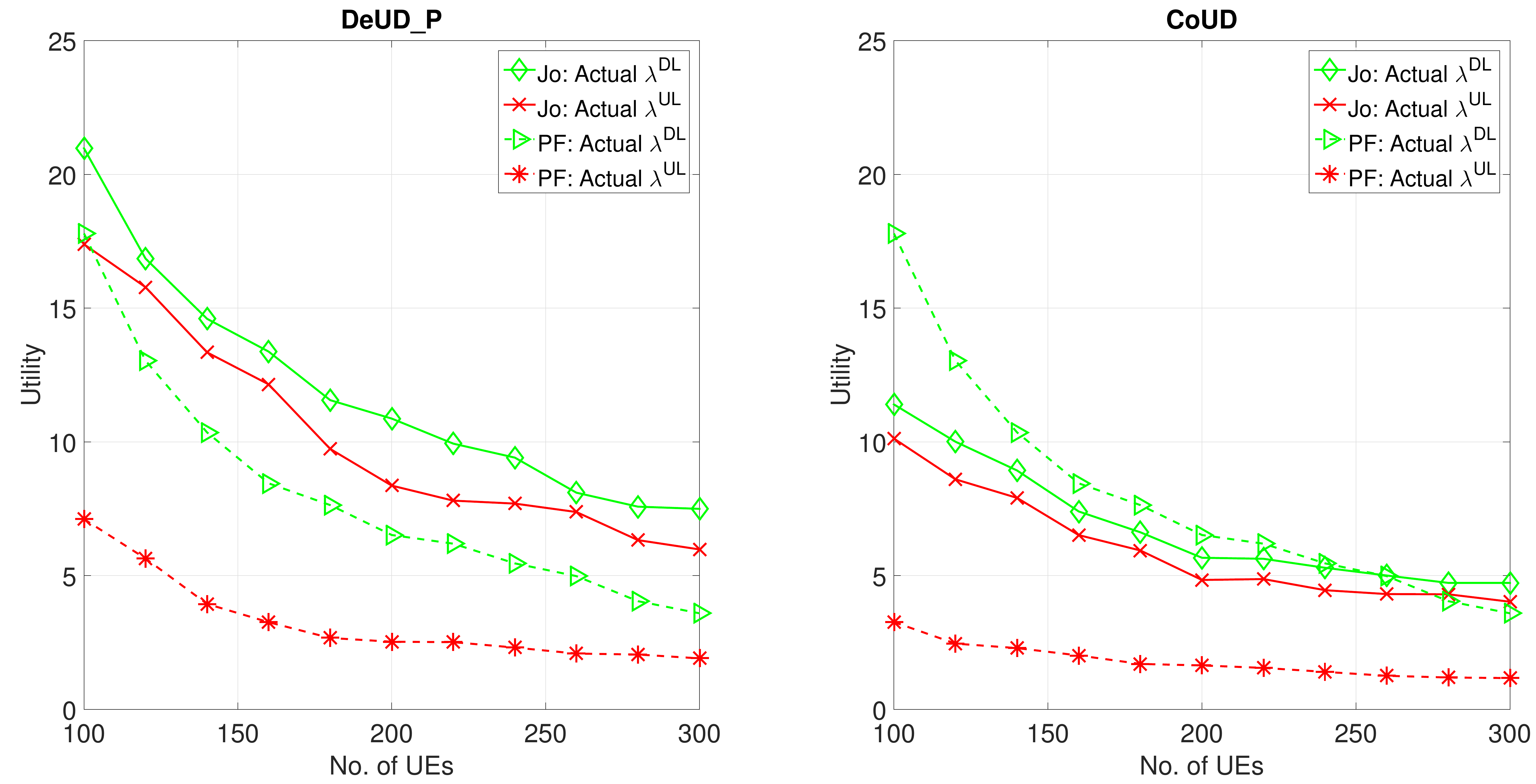}
        \caption{Performance comparison between the joint UL/DL optimization algorithm and the QoS-based PF algorithm under different policies.}
 \label{fig:comparisonPF}
    \end{subfigure}
\caption{Performance evaluation of Algorithm \ref{algo:JointOptimization}.}
\label{fig:JoPerformance}
\end{figure*}

\bibliographystyle{IEEEtran}
\bibliography{References,refs_3}

 \end{document}